\documentclass[ecta,final]{econsocart}
\usepackage{preamble}
\usepackage{preamble_local}
\usepackage{mathtools}
\usepackage[normalem]{ulem}
\usepackage{enumitem}
\setlist{itemsep=0.5em}     
\setlist{topsep=0.7em}      
\setlist{parsep=0.2em}        

\usepackage{multirow}
\usepackage{booktabs}
\usepackage{booktabs}
\usepackage{array}
\usepackage{amsmath}
\usepackage{helvet} 
\usepackage{graphicx}  
\usepackage{multirow}
\usepackage{subcaption}
\usepackage{caption}
\usepackage{bm}
\usepackage{bbm}
\usepackage{colortbl}
\usepackage{xcolor}
\usepackage{multirow}
\usepackage{booktabs}
\usepackage{bigints}
\usepackage{apptools}
\AtAppendix{\counterwithin{theorem}{section}}
\DeclareFontFamily{U}{mathx}{}
\DeclareFontShape{U}{mathx}{m}{n}{<-> mathx10}{}
\DeclareSymbolFont{mathx}{U}{mathx}{m}{n}
\DeclareMathAccent{\widecheck}{0}{mathx}{"71}

\RequirePackage[colorlinks,citecolor=blue,linkcolor=blue,urlcolor=blue,pagebackref]{hyperref}
\usepackage{cleveref}

\startlocaldefs

\theoremstyle{plain}

\newtheorem{theorem}{Theorem}

\newtheorem{lemma}{Lemma}

\theoremstyle{remark}


\renewcommand{\P}{\mathbb{P}}

\def\hatt{{\hat{\tau}}}
\def\bmu{{\boldsymbol \mu}}
\def\bSigma{{\boldsymbol \Sigma}}
\def\bx{{\bm{x}}}
\def\bX{{\bm{X}}}

\def\bU{{\bm{U}}}

\def\bR{{\bm{R}}}

\def\by{{\bm{y}}}
\def\bz{{\bm{z}}}
\def\bb{{\bm{b}}}

\def\bf{{\bm{f}}}
\def\br{{\bm{r}}}

\def\bSigma{{\boldsymbol \Sigma}}
\def\blambda{{\boldsymbol \Lambda}}
\def\bv{{\bm{v}}}
\def\bu{{\bm{u}}}
\def\bU{{\bm{U}}}

\def\bd{{\bm{d}}}
\def\br{{\bm{r}}}
\def\bt{{\bm{t}}}
\def\bV{{\bm{V}}}

\def\bI{{\bm{I}}}
\renewcommand{\bar}[1]{\overline{#1}}
\renewcommand{\tilde}[1]{\widetilde{#1}}
\renewcommand{\hat}[1]{\widehat{#1}}
\newcommand{\lam}[1]{{\boldsymbol\Lambda}_{\lambda}^{(-#1)}}

\newcommand{\norm}[2]{\left\|#1\right\|_{#2}}
\newcommand{\abs}[1]{\left|#1\right|}

\newcommand{\hb}[1]{\hat{\bbeta}^{(-i)}}
\newcommand{\pto}{\xrightarrow{p}}
\newcommand{\dto}{\xrightarrow{d}}

\newcommand{\wt}[1]{\widetilde{#1}}
\usepackage{amssymb}
\usepackage{pifont}
\usepackage{cancel}

\newcommand{\cmark}{\ding{51}} 
\newcommand{\xmark}{\ding{55}} 

\makeatletter
\newcommand{\ogeneric}[2][0.7]{%
  \vphantom{\oplus}\mathpalette\o@generic{{#1}{#2}}%
}
\newcommand{\o@generic}[2]{\o@@generic#1#2}
\newcommand{\o@@generic}[3]{%
  \begingroup
  \sbox\z@{$\m@th#1\oplus$}%
  \dimen@=\dimexpr\ht\z@+\dp\z@\relax
  \savebox\tw@[\totalheight]{$\m@th#1\bigcirc$}%
  \makebox[\wd\z@]{%
    \ooalign{%
      $#1\vcenter{\hbox{\resizebox{\dimen@}{!}{\usebox\tw@}}}$\cr
      \hidewidth
      $#1\vcenter{\hbox{\resizebox{#2\dimen@}{!}{$#1\vphantom{\oplus}{#3}$}}}$%
      \hidewidth
      \cr
    }%
  }%
  \endgroup
}
\makeatother

\newcommand{\argmin}{\mathop{\mathrm{argmin}\,}\limits}

\setcounter{secnumdepth}{3}

\endlocaldefs

\setlength{\parskip}{.1em}

 \begin{document}

\begin{frontmatter}

\title{Unbiased Regression-Adjusted Estimation of Average Treatment Effects in Randomized Controlled Trials
}
\runtitle{Unbiased Regression-Adjusted Estimation of ATE in RCTs}

\begin{aug}
%
%
%
\author[id=au1,addressref={add1}]{\fnms{Alberto}~\snm{Abadie}\ead[label=e1]{abadie@mit.edu}}
\author[id=au2,addressref={add2}]{\fnms{Mehrdad}~\snm{Ghadiri}\ead[label=e2]
{mehrdadg@mit.edu}}
\author[id=au3,addressref={add2}]{\fnms{Ali}~\snm{Jadbabaie}\ead[label=e3]
{jadbabai@mit.edu}}
\author[id=au4,addressref={add2}]{\fnms{Mahyar}~\snm{JafariNodeh}\ead[label=e4]{mahyarjn@mit.edu}}
\address[id=add1]{%
\orgdiv{Department of Economics}, \orgname{Massachusetts Institute of Technology}}
\address[id=add2]{%
\orgdiv{Institute for Data, Systems, and Society}, \orgname{Massachusetts Institute of Technology}}
\end{aug}

\support{Partially supported by ONR grants N00014-24-1-2687 (Abadie), N00014-23-1-2299 (Jadbabaie), and by a Michael Hammer Postdoctoral Fellowship (Ghadiri). We thank Javier Gardeazabal and Whitney Newey for helpful discussions.}
\coeditor{\fnm{[Name} \snm{Surname}; will be inserted later]}

\begin{abstract}
This article introduces a leave-one-out regression adjustment (LOORA) for estimating average treatment effects in randomized controlled trials. In finite samples, LOORA removes the bias of conventional regression adjustment and yields exact variance formulas for regression-adjusted Horvitz–Thompson and difference-in-means estimators. Ridge regularization curbs the influence of high-leverage observations, improving stability and precision in small samples. In large samples, LOORA matches the variance of the regression-adjusted estimator in \cite{lin2013agnostic} while remaining exactly unbiased. Two within-subject experimental applications, each providing a realistic joint distribution of potential outcomes as ground truth, show that LOORA removes substantial bias and achieves confidence interval coverage close to the nominal level.
\end{abstract}

\begin{keyword}
\kwd{Average treatment effect}
\kwd{regression adjustment}
\kwd{non-asymptotic guarantees} 
\kwd{leave-one-out estimators}
\end{keyword}
\end{frontmatter}

\section{Introduction}
\label{sec:intro}

\noindent Establishing causal relationships is a central goal in economics and the social sciences. In observational studies, selection bias often obscures the effects of interventions. Randomized controlled trials (RCTs) address this problem by eliminating systematic pre-randomization differences between treatment and control groups. For this reason, RCTs occupy a central place in modern empirical research on causal inference.

As experimental practice has matured, attention has shifted toward improving efficiency and statistical power. Researchers now collect extensive pretreatment information to predict outcomes and use regression adjustment to explain part of the outcome variance. This approach reduces noise and increases the precision of treatment effect estimation without requiring complex assignment mechanisms.

Most formal guarantees for regression-adjusted estimators rely on asymptotic arguments. However, early-stage clinical trials, marketing geo-experiments, and studies of firm- or country-level policies frequently operate in small sample regimes, where standard regression-adjusted estimators become biased, with undesirable consequences for treatment effect estimation and policy design \citep{freedman2008regression}. Moreover, in small samples, high-leverage observations (that is, observations whose covariate values exert a disproportionate influence on the regression fit) further degrade the performance of regression adjustment \citep{young2019channeling}. 

To address these challenges, we propose a regression adjustment framework that is unbiased in finite samples and robust to influential observations. A key component of our analysis is a leave-one-out regression adjustment (LOORA) procedure, which removes the bias of classical regression adjustment. Within this framework, we develop two estimators of the average treatment effect (ATE): LOORA-HT, a Horvitz–Thompson estimator for simple random assignment; and LOORA-DM, a difference-in-means estimator for complete random assignment. LOORA-HT and LOORA-DM are easy to implement and deliver a performance that previously required complex assignment mechanisms \citep[see][]{harshaw2022design}.

LOORA-HT and LOORA-DM simultaneously address three core problems in regression-adjusted estimation of treatment effects. First, they answer the critique of \citet{freedman_2008_b} by providing unbiased, regression-adjusted estimators for RCTs. Second, they attain the asymptotic variance of the estimator in \citet{lin2013agnostic}, which is the efficient variance among all linearly adjusted estimators under simple and complete random assignment. Third, by stabilizing estimation in the presence of influential observations, they reduce the sensitivity to leverage documented in \citet{young2019channeling}.

\subsection{Related Work}
\label{subsec:related-work}

The practice of regression adjustment in randomized experiments dates at least to the use of analysis of covariance (ANCOVA) in classical experimental design, where a linear adjustment removes the effects of pretreatment covariates to improve precision.\footnote{See, for example, discussions in \citet[Chapter 9]{fisher1971design}.}  
Despite its pervasiveness, the formal large-sample properties of regression adjustment after randomization have been repeatedly questioned.  
In two influential critiques, \citet{freedman2008regression,freedman_2008_b} argue that ordinary least squares (OLS) adjustments can introduce bias when the regression model is misspecified, and can be less efficient than the unadjusted difference-in-means estimator, even asymptotically.

In response, \citet{lin2013agnostic} shows that, under complete random assignment, an OLS regression of outcomes on treatment, covariates, and all treatment-by-covariate interactions delivers an estimator that is consistent for the average treatment effect and asymptotically efficient among the class of linearly adjusted estimators.  
While these results clarify the large sample properties of regression adjustment after randomization, two important gaps remain: (i) \citet{lin2013agnostic} does not provide finite-sample unbiasedness guarantees for regression adjustment, and (ii) the interacted specification in \citet{lin2013agnostic} can be unstable in moderate samples because it increases the number of regressors by adding interaction terms and potentially inflating leverage scores and sensitivity to influential observations.

This article proposes two  estimators, LOORA-HT under simple random assignment and LOORA-DM under complete random assignment, that (i) are exactly unbiased for the finite-population average treatment effect, (ii) admit closed-form variance expressions in finite samples, and (iii) 
are asymptotically efficient among linearly adjusted estimators, attaining the variance bound of \citet{lin2013agnostic}. We restrict our attention to linear regression adjustments because of their ubiquity in empirical research and the potential instability of nonlinear adjustments in small samples. Linear adjustments, however, can accommodate nonlinear relationships through covariate transformations.

LOORA combines leave-one-out regression adjustment and ridge regression to remove the finite-sample bias of regression-adjusted estimators and to curb the influence of high-leverage observations. 
For the difference-in-means setting, we show that LOORA-DM admits a representation that is equivalent to a leave-two-out construction in the sense of \citet{spiess2025optimal}, thereby satisfying necessary conditions for unbiasedness under complete random assignment.  
In this way, our estimators provide constructive finite-sample analogues of the asymptotic guarantees in \citet{lin2013agnostic}, while remaining valid without the assumption of correctly-specified parametric outcome models.

In two within-subject experiments that provide a realistic ground truth for the joint distribution of potential outcomes, LOORA-HT and LOORA-DM substantially reduce estimation bias and yield confidence intervals with close-to-nominal coverage.

A related line of work examines precision gains in randomized studies through experimental design rather than regression adjustment.
\citet{harshaw2022design} proposes a Gram--Schmidt walk design, which chooses treatment assignments in a way that approximately balances covariates. 
Their goal is to optimize the randomization design to deliver low-variance Horvitz--Thompson estimators in finite samples. 
In contrast, we consider simple or complete random assignment and instead adjust outcomes through leave-one-out regressions.  
This distinction matters in practice, because our methods can be applied to commonly used experimental designs without the need to change the assignment rule. 

\citet{ghadiri2023finite} introduces a cross-fitted, regression-adjusted Horvitz--Thompson estimator under simple random assignment. It splits the sample into two folds, estimates regression coefficients on one fold, and applies those coefficients to adjust outcomes in the other. This construction yields an exactly unbiased estimator with an explicit nonasymptotic variance bound. However, the bound is looser than the \citet{lin2013agnostic} asymptotic efficiency bound, and the article does not provide a procedure for constructing confidence intervals.
\citet{lei2021regression} proposes a bias-correction procedure for regression adjustment when the number of covariates grows with the sample size. The method removes higher-order bias terms but does not yield exact unbiasedness in finite samples.
\citet{chang2024exact} develops an exact bias-correction approach that eliminates the bias, but does not derive the finite-sample variance of the resulting estimator.
 \cite{gu2025assumption} studies regression adjustment with high-dimensional covariates under a superpopulation framework and develops consistent estimators in this setting. \cite{zhao2024covariate} considers a related framework and proposes a debiasing approach based on higher-order influence functions that corrects own-observation bias arising from regression adjustment. Their theoretical development primarily focuses on unregularized linear adjustment, or regimes in which the ridge regularization parameter vanishes asymptotically, which is equivalent to using the Moore--Penrose pseudoinverse of the Gram matrix in the associated regression problem. In contrast, our analysis accommodates arbitrary regularization parameters. This flexibility enables explicit damping of high-leverage observations, which can otherwise induce unstable or unreliable inference, especially in finite-population or small-sample settings.

A large concurrent literature studies regression adjustment and inference in more structured designs: stratified or blocked experiments, matched pairs, cluster randomization, or covariate-adaptive randomization.  
For example, \citet{bai2024primer} survey modern design-based analysis of randomized experiments, emphasizing the role of stratification, regression adjustment and cluster randomized experiments.  
\citet{bai2025new,bai2025inference} develop design-based variance estimators and inference procedures for finely stratified and matched-pair experiments, including settings with imperfect compliance.
\citet{cytrynbaum2024covariate} characterizes asymptotically optimal linear covariate adjustment for general stratified randomization schemes.

Our approach provides an appealing alternative in settings where stratification becomes impractical or infeasible. In particular, stratification requires covariate information at the design stage, which is often unavailable in online experiments and field experiments with asynchronous enrollment. By contrast, simple and complete randomization designs apply naturally to these settings.

\Cref{sec:detailed-related-work} provides a more detailed comparison of our estimators with related methods.

\bigskip

\subsection{Notation}

\paragraph{Vectors and matrices.}
We denote matrices and vectors by bold uppercase and lowercase letters, respectively. The $i^{\text{th}}$ entry of a vector $\bu$ is denoted by $u_i$. The transposed $i^{\text{th}}$ row of a matrix $\bX$ is denoted by $\bx_i$, and its $(i,j)^{\text{th}}$ entry by $x_{ij}$. For a constant $c$ and a vector $\bu$, expressions such as $c + \bu$ and $\bu^{-1}$ are interpreted entrywise. When $\bu$ and $\bv$ are vectors of equal dimension, $\bu \bv$ and $\bu / \bv$ denote entrywise (Hadamard) product and division, respectively.  
A vector’s associated diagonal matrix appears in uppercase; for instance, $\bT$ denotes the diagonal matrix $\bt$ as its main diagonal.  
For any vector $\bv \in \R^n$, the notation $\bv_{-i}$ refers to the vector in $\R^{n-1}$ obtained by removing the $i^{\text{th}}$ entry of $\bv$. Likewise, $\bX_{-i}$ denotes the matrix obtained by deleting the $i^{\text{th}}$ row of $\bX$. We denote by $\boldsymbol{1}$ the vector with all entries equal to one. 
For a matrix $\bX \in \R^{n\times k}$, let $\overline{\bX}$ denote the matrix whose rows are all equal to the average row of $\bX$, that is, each row of $\overline{\bX}$ equals $ \boldsymbol{1}^\top \bX/n$.
Similarly, for a vector $\by \in \R^n$, let $\overline{\by}$ denote the vector whose entries are all equal to the average of the entries in $\by$.

{\em Norms and asymptotic notation.}
For a vector $\bu \in \R^n$, $\norm{\bu}{2} = \sqrt{\bu^\top \bu}$ denotes the Euclidean norm, and $\norm{\bu}{\infty} = \max_{i \in [n]} |u_i|$ denotes the $\ell_\infty$ norm.  
The operator $\tr(\cdot)$ denotes the trace of a square matrix.  
For a matrix $\bX \in \R^{n \times k}$, $\norm{\bX}{F} = \sqrt{\tr(\bX^\top \bX)}$ denotes the Frobenius norm, and $\norm{\bX}{2,\infty} = \max_{i \in [n]} \norm{\bx_i}{2}$ denotes its $(2,\infty)$-operator norm.  
Let $\mathbb{N}$ be the set of natural numbers. For any two functions $f, g \colon \mathbb{N} \to \R$, we write $f(n) = O(g(n))$ if there exist constants $C > 0$ and $n_0 \in \mathbb{N}$ such that for all $n \ge n_0$, $\lvert f(n) \rvert \le C \lvert g(n) \rvert$. Similarly, we write $f(n) = \Omega(g(n))$ if there exist constants $C > 0$ and $n_0 \in \mathbb{N}$ such that for all $n \ge n_0$, $\lvert f(n) \rvert \ge C \lvert g(n) \rvert$.

{\em Projection matrices.}
Consider a full-rank matrix $\bX \in \R^{n \times k}$ with $n \ge k$.  
We denote the projection matrix by $\bH = \bX(\bX^\top \bX)^{-1}\bX^\top$, and its diagonal entries by $h_{ii} = \bx_i^\top (\bX^\top \bX)^{-1} \bx_i$, which represent the leverage scores of rows $i \in [n]$.  
The identity matrix is denoted by $\bI$. For $\lambda \ge 0$, the ridge projection matrix is $\bH_\lambda = \bX(\bX^\top \bX + \lambda \bI)^{-1}\bX^\top$, whose diagonal entries $h_{\lambda ii} = \bx_i^\top (\bX^\top \bX + \lambda \bI)^{-1} \bx_i$ are the ridge leverage scores.

\section{Finite Population Framework}
We operate within the Neyman--Rubin potential outcomes framework \citep{neyman1923application,rubin1974estimating} for a randomized controlled trial over a set of units indexed by $[n] = \{1, \ldots, n\}$.
Each unit is assigned to either treatment or control. The vector $\bd \in \R^{n}$ denotes the treatment assignment, with $d_i = 1$ if unit $i$ is in treatment and $d_i = 0$ if it is in control. We also define $\bz = 2(\bd - 1/2)$, so that $z_i = 1$ for treated units and $z_i = -1$ otherwise.
Finally, we define $\bq$ as the vector with $i^{\text{th}}$ element equal to $q_i = p_i d_i + (1-p_i)(1-d_i)$, where $p_i$ is the treatment-assignment probability for unit $i$.

For each unit $i$, the potential outcomes under treatment and control are denoted by $y^{(1)}_i$ and $y^{(0)}_i$, respectively. We treat potential outcomes as fixed quantities; the only source of randomness in our setting is the treatment assignment. The observed outcome for unit $i$ is
\[
y_i = y^{(1)}_i d_i + y^{(0)}_i (1 - d_i).
\]
We collect the observed and potential outcomes across all units in the vectors $\by = (y_1, \ldots, y_n)^\top$, $\by^{(1)} = (y^{(1)}_1, \ldots, y^{(1)}_n)^\top$, and $\by^{(0)} = (y^{(0)}_1, \ldots, y^{(0)}_n)^\top$.

The average treatment effect (ATE) is defined as
\[
\tau = \frac{1}{n} \sum_{i=1}^n \bigl(y^{(1)}_i - y^{(0)}_i\bigr).
\]

Each unit $i \in [n]$ comes with a vector of fixed pretreatment covariates $\bx_i \in \R^k$. The matrix $\bX = (\bx_1, \ldots, \bx_n)^\top \in \R^{n \times k}$ collects covariate values for all units. For simplicity of exposition, we assume throughout the article that $\bX$ has full column rank. 
However, since our estimators are based on ridge regression, the results extend directly to cases in which $\bX$ is rank-deficient.

We consider two standard treatment assignment mechanisms. Under \textit{simple random assignment}, each unit is independently assigned to treatment with probability $p_i \in (0,1)$, with $\bp = (p_1, \ldots, p_n)^\top$. 
Under \textit{complete random assignment}, the number of treated units is fixed at $n_T \in [n - 1]$, and the number of control units is $n_C = n - n_T$. Treatment is assigned by selecting uniformly at random from all possible subsets of $n_T$ units in the sample.

Under simple random assignment, the Horvitz--Thompson (HT) estimator \citep{horvitz1952generalization} is defined as
\[
\hat{\tau}_{\text{HT}} = \frac{1}{n} \sum_{i=1}^n \frac{d_i y_i}{p_i} \;-\; \frac{1}{n} \sum_{i=1}^n \frac{(1 - d_i) y_i}{1 - p_i}.
\]

Under complete random assignment, the difference-in-means (DM) estimator is given by
\[
\hat{\tau}_{\text{DM}} = \frac{1}{n_T} \sum_{i=1}^n d_i y_i \;-\; \frac{1}{n_C} \sum_{i=1}^n (1 - d_i) y_i.
\]

\section{Regression Adjustment for Horvitz-Thompson Estimator}
\label{sec:reg-adj-ht}

In this section, we first review some known results on the variance of the Horvitz--Thompson (HT) estimator. We then introduce LOORA-HT, our leave-one-out regression-adjusted version of the HT estimator (\Cref{sec:loora-ht-estimator}). We establish its unbiasedness and derive an exact expression for its variance in the finite-population setting (\Cref{sec:loora-ht-var}). Next, we analyze the asymptotic behavior of LOORA-HT (\Cref{sec:asymptotic-ht}) and prove its asymptotic efficiency (\Cref{sec:loora-ht-efficiency}). Finally, we describe our approach to variance estimation and confidence interval construction for LOORA-HT (\Cref{sec:loora-ht-conf-int}).

It is well known that, under simple random assignment, the classical Horvitz--Thompson (HT) estimator is unbiased for $\tau$. 
The variance of the HT estimator under simple random assignment is given by
\begin{equation}\label{eq:ht_var}
    \frac{1}{n^2} \norm{\bmu}{2}^2,
\end{equation}
where $\bmu = (\mu_1, \ldots, \mu_n)^\top$, and
\begin{align*}
\mu_i = \sqrt{\frac{1 - p_i}{p_i}}\, y^{(1)}_i + \sqrt{\frac{p_i}{1 - p_i}}\, y^{(0)}_i.
\end{align*}

Let $\breve{y}_i = y_i - \bx_i^\top \bb$ denote the covariate-adjusted outcomes, where $\bb$ is a fixed vector of coefficients. The corresponding potential outcomes are
\begin{align*}
\breve{y}^{(1)}_i = y^{(1)}_i - \bx_i^\top \bb
\quad \text{and} \quad
\breve{y}^{(0)}_i = y^{(0)}_i - \bx_i^\top \bb. 
\end{align*}
Because $\breve{y}^{(1)}_i - \breve{y}^{(0)}_i = y^{(1)}_i - y^{(0)}_i$, it follows 
that the HT estimator applied to the adjusted outcomes $\breve{y}_i$ is also unbiased for $\tau$.  
Since $\bb$ is fixed, applying \eqref{eq:ht_var} to the adjusted potential outcomes implies that the variance of the covariate-adjusted HT estimator is
\begin{align}
\label{eq:fixed-reg-adj-ht-var}
\frac{1}{n^2} \norm{\bmu - \bR^{-1} \bX \bb}{2}^2,
\end{align}
where $\bR$ is a diagonal matrix associated with the vector $\br \in \R^{n}$ defined by $r_i = \sqrt{p_i (1 - p_i)}$. Let $\widetilde{\bX} = \bR^{-1} \bX$. Then the variance in \eqref{eq:fixed-reg-adj-ht-var} is minimized by the choice of coefficients
\begin{align}
\label{eq:fixed-reg-adj-ht-var-prob}
\bbeta^* = \argmin_{\bb \in \R^k} \norm{\bmu - \widetilde{\bX} \bb}{2}^2.
\end{align}

Equation \eqref{eq:fixed-reg-adj-ht-var} implies that regression adjustment can reduce the variance of the HT estimator.  
In practice, however, only one potential outcome is observed for each unit, so $\bmu$ cannot be computed directly, and the minimization problem in \eqref{eq:fixed-reg-adj-ht-var-prob} cannot be solved exactly.
We address this challenge in LOORA-HT by replacing $\bmu$ with a proxy vector that is equal to $\bmu$ in expectation.

\subsection{Leave-One-Out Regression-Adjusted Horvitz-Thompson}
\label{sec:loora-ht-estimator}

\begin{algorithm}[t] 
\caption{LOORA-HT estimator for binary treatment experiments}
\label{alg:ht-loo-reg-adj}
\begin{algorithmic}[1]
\State {\bfseries Input:} Covariates $\bX \in \R^{n \times k}$, outcome vector $\by \in \R^{n}$, treatment assignment vector $\bd \in \{0,1\}^{n}$, vector of treatment assignment probabilities $\bp \in (0,1)^n$, regularization factor $\lambda \geq 0$.

\State \label{alg-step:loo-ht-def-r} Calculate $\br \in \R^{n}$ with $r_i = \sqrt{p_i(1-p_i)}$ and  $\widetilde{\bX} = \bR^{-1} \bX$.

\State \label{alg-step:loo-ht-def-q} Calculate $\bq \in \R^{n}$ with 
\[
q_i =\left\{\begin{array}{cl}p_i& \mbox{ if } d_i=1,\\
1-p_i& \mbox{ if } d_i=0.
\end{array}\right.
\]

\State \label{alg-step:loo-ht-def-d}
Calculate $\bz = 2(\bd -1/2)\in \R^{n}$.

\State \label{alg-step:loo-ht-def-y-tild} Calculate $\widetilde{\by}  \in \R^{n}$ with
\[
\widetilde y_i = \frac{1}{q_i}\left(\frac{1-p_i}{p_i}\right)^{z_i/2} y_i.
\]

\State \label{alg-step:loo-ht-init-est} Set $S = 0$.

\For{$i \in [n]$} \label{alg-step:loo-ht-loop}
\State \label{alg-step:loo-ht-reg-solve} Set 
\begin{align*}
\widehat{\bbeta}_{\lambda}^{(-i)} = \argmin_{\bb\in \mathbb R^k} \norm{\widetilde{\by}_{-i}-\widetilde{\bX}_{-i} \bb}{2}^2 + \lambda\norm{\bb}{2}^2.
\end{align*}

\State \label{alg-step:loo-ht-update-est} Set $$S = S + \frac{z_i}{q_i}  (y_i - \bx_{i}^\top \widehat{\bbeta}_{\lambda}^{(-i)}).$$
\EndFor\\

\Return \label{alg-step:loo-ht-return-output} $$\widehat{\tau}_{\textup{LHT}}= S/n.$$
\end{algorithmic}
\end{algorithm}

\Cref{alg:ht-loo-reg-adj} describes LOORA-HT.
Rather than using a single value for $\bb$, LOORA-HT applies a different vector of regression coefficients, $\widehat{\bbeta}_{\lambda}^{(-i)}$, to each unit $i$ in the sample.
LOORA-HT constructs its adjustment vectors using a proxy vector $\widetilde{\by}$ that coincides with $\bmu$ in expectation,
\[
\widetilde y_i =
\begin{cases}
\dfrac{(1 - p_i)^{1/2}}{p_i^{3/2}}\, y_i, & \text{if } d_i = 1, \\[1em]
\dfrac{p_i^{1/2}}{(1 - p_i)^{3/2}}\, y_i, & \text{if } d_i = 0.
\end{cases}
\]
LOORA-HT is defined as
\[
\widehat{\tau}_{\textup{LHT}} = \frac{1}{n}\sum_{i=1}^n \frac{z_i}{q_i}  (y_i - \bx_{i}^\top \widehat{\bbeta}_{\lambda}^{(-i)}),
\]
where $\widehat{\bbeta}_{\lambda}^{(-i)}$ is obtained from a regression on the leave-one-out data, $(\widetilde{\by}_{-i}, \widetilde{\bX}_{-i})$.  
In general, leave-one-out coefficients can be unstable, particularly for rows with high leverage scores (see \Cref{cor:sum-sqr-err-with-removal}).  
To address this challenge, we use ridge regression to mitigate the impact of high leverage scores.  
In \Cref{lem:ridge-lev-upper-bound} below, we provide an upper bound on the ridge leverage scores as a function of the ridge regularization parameter.

\subsection{The Variance of LOORA-HT}
\label{sec:loora-ht-var}

Let the ridge projection matrix of $\widetilde{\bX}$ be
\[
\widetilde{\bH}_\lambda
= \widetilde{\bX} (\widetilde{\bX}^\top \widetilde{\bX} + \lambda \bI)^{-1} \widetilde{\bX}^\top.
\]
We denote its $(i,j)^{\text{th}}$ element by $\widetilde{h}_{\lambda ij}$ and its $i^{\text{th}}$ diagonal element by $\widetilde{h}_{\lambda ii}$.  
Next, we define the vector
\[
\bt = \frac{(1 - \bp)^2 \by^{(1)} - \bp^2 \by^{(0)}}{\br},
\]
which characterizes how far $\widetilde{\by}$ deviates from $\bmu$,
\[
\widetilde{\by} - \bmu = \frac{\bz \bt}{\bq}.
\]
The ridge regression coefficient on $(\bmu, \widetilde{\bX})$ is
\begin{align}
\label{eq:real_lambda_ridge}
\bbeta_\lambda
= \argmin_{\bb \in \R^k} \norm{\bmu - \widetilde{\bX} \bb}{2}^2 + \lambda \norm{\bb}{2}^2.
\end{align}
The next theorem establishes the unbiasedness of LOORA-HT and provides an exact expression for its variance.
\begin{theorem}
\label{thm:HT-LOO-Binary-Variance}
Under simple random assignment, LOORA-HT (\Cref{alg:ht-loo-reg-adj}) is exactly unbiased, with variance equal to 
\begin{align}
\label{eq:reg-adj-ht-var}
\frac{1}{n^2}  \sum_{i=1}^n \frac{(\widetilde{\bx}_{i}^\top \bbeta_\lambda-\mu_i)^2}{(1-\widetilde{h}_{\lambda ii})^2} + \frac{1}{n^2}\sum_{i=1}^{n-1} \sum_{j = i+1}^n (\widetilde{h}_{\lambda ij})^2 \left(\frac{ t_j}{r_j(1-\widetilde{h}_{\lambda ii})} + \frac{ t_i}{r_i(1-\widetilde{h}_{\lambda jj})}\right)^2.
\end{align}
\end{theorem}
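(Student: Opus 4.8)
The plan is to reduce the centered estimator to a sum of independent mean‑zero terms, read off unbiasedness from that representation, and then compute the second moment by separating diagonal and off‑diagonal contributions, using leave‑one‑out update formulas to express everything through the ridge projection matrix $\widetilde{\bH}_\lambda$.

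\emph{Step 1: a pivot representation and unbiasedness.} Set $v_i = z_i r_i/q_i$. Checking the cases $d_i=1$ and $d_i=0$ separately shows that $v_i=\sqrt{(1-p_i)/p_i}$ with probability $p_i$ and $v_i=-\sqrt{p_i/(1-p_i)}$ with probability $1-p_i$; hence $\E[v_i]=0$, $\E[v_i^2]=1$, and the $v_i$ are mutually independent under simple random assignment. The same case analysis, together with \eqref{eq:def-mu} and the definition of $\bt$, yields $z_i y_i/q_i = v_i\mu_i + (y_i^{(1)}-y_i^{(0)})$ and $\widetilde y_i=\mu_i+(t_i/r_i)v_i$ (equivalently $\widetilde{\by}-\bmu=\bz\bt/\bq$). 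Since $\bx_i=r_i\widetilde{\bx}_i$ gives $\frac{z_i}{q_i}\bx_i^\top\bb=v_i\,\widetilde{\bx}_i^\top\bb$ for any $\bb$, subtracting the leave‑one‑out adjustment from $z_iy_i/q_i$ and summing gives
\[
\widehat\tau_{\textup{LHT}}-\tau=\frac1n\sum_{i=1}^n v_i\bigl(\mu_i-\widetilde{\bx}_i^\top\widehat{\bbeta}^{(-i)}_\lambda\bigr).
\]
Unbiasedness is immediate: $\widehat{\bbeta}^{(-i)}_\lambda$ depends only on $\{d_k:k\ne i\}$, hence is independent of $v_i$, so each summand has expectation $\E[v_i]\cdot\E[\mu_i-\widetilde{\bx}_i^\top\widehat{\bbeta}^{(-i)}_\lambda]=0$.

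\emph{Step 2: diagonal terms.} Squaring the pivot and using mean‑zeroness,
\[
n^2\,\mathrm{Var}(\widehat\tau_{\textup{LHT}})=\sum_i\E\bigl[(\mu_i-\widetilde{\bx}_i^\top\widehat{\bbeta}^{(-i)}_\lambda)^2\bigr]+\sum_{i\ne j}\E\bigl[v_iv_j(\mu_i-\widetilde{\bx}_i^\top\widehat{\bbeta}^{(-i)}_\lambda)(\mu_j-\widetilde{\bx}_j^\top\widehat{\bbeta}^{(-j)}_\lambda)\bigr],
\]
the diagonal simplification using $\E[v_i^2]=1$ and independence of $v_i$ from $\widehat{\bbeta}^{(-i)}_\lambda$. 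For a diagonal term, write $\widetilde{\by}_{-i}=\bmu_{-i}+(\bt\bv/\br)_{-i}$, so $\widehat{\bbeta}^{(-i)}_\lambda$ equals the ridge coefficient of $\bmu_{-i}$ on $\widetilde{\bX}_{-i}$ plus a linear combination of $\{v_k:k\ne i\}$. Applying the leave‑one‑out ridge residual identity (valid because ridge is OLS on the row‑augmented design $[\widetilde{\bX};\sqrt\lambda\,\bI]$, whose original‑row leverages are $\widetilde h_{\lambda ii}$) and Sherman--Morrison, $\widetilde{\bx}_i^\top(\widetilde{\bX}^\top\widetilde{\bX}+\lambda\bI-\widetilde{\bx}_i\widetilde{\bx}_i^\top)^{-1}\widetilde{\bx}_k=\widetilde h_{\lambda ik}/(1-\widetilde h_{\lambda ii})$, gives
\[
\mu_i-\widetilde{\bx}_i^\top\widehat{\bbeta}^{(-i)}_\lambda=\frac{1}{1-\widetilde h_{\lambda ii}}\Bigl(\mu_i-\widetilde{\bx}_i^\top\bbeta_\lambda-\sum_{k\ne i}\frac{t_k}{r_k}\,\widetilde h_{\lambda ik}\,v_k\Bigr),
\]
with $\bbeta_\lambda$ as in \eqref{eq:real_lambda_ridge}. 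Taking expectations term by term gives $\E[(\mu_i-\widetilde{\bx}_i^\top\widehat{\bbeta}^{(-i)}_\lambda)^2]=(1-\widetilde h_{\lambda ii})^{-2}\bigl[(\widetilde{\bx}_i^\top\bbeta_\lambda-\mu_i)^2+\sum_{k\ne i}(t_k^2/r_k^2)\widetilde h_{\lambda ik}^2\bigr]$, which yields the first term of \eqref{eq:reg-adj-ht-var} plus an extra double sum to be reconciled in Step 4.

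\emph{Step 3: off‑diagonal terms.} For $i\ne j$, condition on $\{d_k:k\notin\{i,j\}\}$. Given this, $\mu_i-\widetilde{\bx}_i^\top\widehat{\bbeta}^{(-i)}_\lambda$ is a function of $d_j$ alone (it sees $\widetilde y_j$ but not $\widetilde y_i$) and $\mu_j-\widetilde{\bx}_j^\top\widehat{\bbeta}^{(-j)}_\lambda$ is a function of $d_i$ alone; since $d_i\perp d_j$, the conditional expectation of the product factors into $\E[v_i(\mu_j-\widetilde{\bx}_j^\top\widehat{\bbeta}^{(-j)}_\lambda)]\cdot\E[v_j(\mu_i-\widetilde{\bx}_i^\top\widehat{\bbeta}^{(-i)}_\lambda)]$. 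Inserting $\widetilde y_i=\mu_i+(t_i/r_i)v_i$ into $\widehat{\bbeta}^{(-j)}_\lambda=(\widetilde{\bX}_{-j}^\top\widetilde{\bX}_{-j}+\lambda\bI)^{-1}\widetilde{\bX}_{-j}^\top\widetilde{\by}_{-j}$ and using $\E[v_i]=0$, $\E[v_i^2]=1$ together with $\widetilde{\bx}_j^\top(\widetilde{\bX}^\top\widetilde{\bX}+\lambda\bI-\widetilde{\bx}_j\widetilde{\bx}_j^\top)^{-1}\widetilde{\bx}_i=\widetilde h_{\lambda ij}/(1-\widetilde h_{\lambda jj})$, the first factor is $-(t_i/r_i)\,\widetilde h_{\lambda ij}/(1-\widetilde h_{\lambda jj})$, and symmetrically for the second. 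The product is constant in the conditioning, so the $(i,j)$‑term equals $(t_it_j)/(r_ir_j)\cdot\widetilde h_{\lambda ij}^2\big/\bigl[(1-\widetilde h_{\lambda ii})(1-\widetilde h_{\lambda jj})\bigr]$.

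\emph{Step 4: assembly, and the main obstacle.} The off‑diagonal sum equals $2\sum_{i<j}\frac{t_it_j}{r_ir_j}\cdot\frac{\widetilde h_{\lambda ij}^2}{(1-\widetilde h_{\lambda ii})(1-\widetilde h_{\lambda jj})}$, while the extra double sum from Step 2, $\sum_i\sum_{k\ne i}\frac{t_k^2\widetilde h_{\lambda ik}^2}{r_k^2(1-\widetilde h_{\lambda ii})^2}$, splits according to $k<i$ versus $k>i$ into $\sum_{i<j}\widetilde h_{\lambda ij}^2\bigl[\frac{t_j^2}{r_j^2(1-\widetilde h_{\lambda ii})^2}+\frac{t_i^2}{r_i^2(1-\widetilde h_{\lambda jj})^2}\bigr]$. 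Adding these completes the square $\widetilde h_{\lambda ij}^2\bigl(\frac{t_j}{r_j(1-\widetilde h_{\lambda ii})}+\frac{t_i}{r_i(1-\widetilde h_{\lambda jj})}\bigr)^2$, which with $\sum_i\frac{(\widetilde{\bx}_i^\top\bbeta_\lambda-\mu_i)^2}{(1-\widetilde h_{\lambda ii})^2}$ gives \eqref{eq:reg-adj-ht-var}. I expect Step 3 to be the main obstacle: one must choose exactly the right conditioning set, verify that the two residual factors then depend on disjoint coordinates so the expectation factorizes, and keep careful track of \emph{which} leave‑one‑out Gram matrix appears so that the Sherman--Morrison reduction to $\widetilde h_{\lambda ij}/(1-\widetilde h_{\lambda jj})$ is the correct one; the reindexing in Step 4 is routine but must be done carefully to see the cross terms recombine into a perfect square.
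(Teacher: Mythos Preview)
Your proposal is correct and follows essentially the same route as the paper's proof: both use the leave-one-out identity (the paper via Lemma~\ref{lem:HT-LOO-Binary-estimate-diff} and Lemma~\ref{cor:sum-sqr-err-with-removal}, you via Sherman--Morrison directly) to write $\mu_i-\widetilde{\bx}_i^\top\widehat{\bbeta}^{(-i)}_\lambda$ as $(1-\widetilde h_{\lambda ii})^{-1}$ times a deterministic part plus a linear combination of the remaining random noises, then compute all second moments and cross-moments and recombine into a perfect square. Your use of the standardized $v_i=z_ir_i/q_i$ with $\E[v_i]=0$, $\E[v_i^2]=1$, and your conditioning-on-$\{d_k:k\notin\{i,j\}\}$ factoring in Step~3 are slightly cleaner than the paper's direct term-by-term expectation calculation using $\E[z_i/q_i]=0$, but the underlying decomposition and arithmetic are identical.
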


The first term in \eqref{eq:reg-adj-ht-var} mirrors the variance of the infeasible regression-adjusted version HT estimator in \eqref{eq:fixed-reg-adj-ht-var}, and quantifies the variance reduction achieved by regression adjustment.  
The factor $(1 - \widetilde{h}_{\lambda ii})^{-2}$ arises from the removal of a single row in the leave-one-out regressions.  
More specifically, by \Cref{cor:sum-sqr-err-with-removal}, the first term of \eqref{eq:reg-adj-ht-var} can be written as
\[
\frac{1}{n^2}\sum_{i=1}^n (\widetilde{\bx}_i^\top \bbeta_\lambda^{(-i)} - \mu_i)^2,
\]
where
\[
\bbeta_{\lambda}^{(-i)} = \argmin_{\bb\in \mathbb R^k} \norm{\bmu_{-i}-\widetilde{\bX}_{-i} \bb}{2}^2 + \lambda\norm{\bb}{2}^2.
\]

When a leverage score $\widetilde{h}_{\lambda ii}$ approaches one, the variance becomes unbounded.  
In such high-leverage cases, it is essential to employ ridge regression with a sufficiently large regularization parameter~$\lambda$.  
The following lemma provides an upper bound on the ridge leverage scores.

\begin{lemma}
\label{lem:ridge-lev-upper-bound}
Let $\bX \in \R^{n \times k}$, $c \ge 0$, and $\lambda = c \norm{\bX}{2,\infty}^2$.  
Then, for all $i = 1, \ldots, n$,
\[
h_{\lambda ii} \le \frac{1}{1 + c}.
\]
\end{lemma}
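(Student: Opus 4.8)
The plan is to reduce the claim to a pointwise inequality on the eigenvalues of $\bX^\top\bX$ by diagonalizing. Write the singular value decomposition $\bX = \bU\bSigma\bV^\top$ with $\bSigma = \operatorname{diag}(\sigma_1,\dots,\sigma_k)$ (padding with zeros in the non-square case, or simply noting $\bX$ has full column rank so the $\sigma_j>0$). Then
\[
h_{\lambda ii} = \bx_i^\top(\bX^\top\bX + \lambda\bI)^{-1}\bx_i
= \sum_{j=1}^{k} \frac{\sigma_j^2}{\sigma_j^2 + \lambda}\, w_{ij}^2,
\]
where $w_{ij}$ is the $(i,j)$ entry of $\bU$, i.e. $\bw_i$ is the $i$th row of the left-singular-vector matrix. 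The key elementary facts I would use are: (i) each weight $\sigma_j^2/(\sigma_j^2+\lambda) \le \sigma_1^2/(\sigma_1^2+\lambda)$ since $s\mapsto s/(s+\lambda)$ is increasing on $[0,\infty)$; and (ii) $\sum_{j=1}^k w_{ij}^2 = \norm{\bx_i}{2}^2/\sigma_{\min}^2$-type bounds are not quite what I want — instead I would use $\sum_{j=1}^k w_{ij}^2 \le 1$ because $\bU$ has orthonormal columns, hence its rows have norm at most $1$ (with equality when $n=k$). Wait — I should be careful here; $\bU \in \R^{n\times k}$ has orthonormal \emph{columns}, so $\norm{\bU}{2}\le 1$ and indeed every row satisfies $\norm{\bw_i}{2}\le 1$. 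Combining (i) and (ii),
\[
h_{\lambda ii} \;\le\; \frac{\sigma_1^2}{\sigma_1^2 + \lambda}\sum_{j=1}^k w_{ij}^2 \;\le\; \frac{\sigma_1^2}{\sigma_1^2 + \lambda}.
\]

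It remains to bound $\sigma_1^2 = \norm{\bX}{2}^2$ by $\norm{\bX}{2,\infty}^2$ and plug in $\lambda = c\norm{\bX}{2,\infty}^2$. But $\norm{\bX}{2}^2$ can exceed $\norm{\bX}{2,\infty}^2$ in general (e.g. all rows equal), so this crude step fails — and this is where the real content lies. Instead, the correct route is to bound $h_{\lambda ii}$ directly without passing through $\sigma_1$. I would instead argue: the function $\bM \mapsto \bx_i^\top(\bM+\lambda\bI)^{-1}\bx_i$ is operator-antitone in $\bM\succeq 0$, and since $\bX^\top\bX \succeq \bx_i\bx_i^\top$ (the other rows contribute a PSD matrix), we get
\[
h_{\lambda ii} \;\le\; \bx_i^\top\bigl(\bx_i\bx_i^\top + \lambda\bI\bigr)^{-1}\bx_i.
\]
Now $\bx_i\bx_i^\top$ is rank one with nonzero eigenvalue $\norm{\bx_i}{2}^2$ and eigenvector $\bx_i/\norm{\bx_i}{2}$, so the right-hand side equals $\norm{\bx_i}{2}^2/(\norm{\bx_i}{2}^2 + \lambda)$ (via Sherman–Morrison, or by direct computation on the one-dimensional invariant subspace). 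Since $s\mapsto s/(s+\lambda)$ is increasing and $\norm{\bx_i}{2}^2 \le \norm{\bX}{2,\infty}^2$, substituting $\lambda = c\norm{\bX}{2,\infty}^2$ gives
\[
h_{\lambda ii} \;\le\; \frac{\norm{\bx_i}{2}^2}{\norm{\bx_i}{2}^2 + c\norm{\bX}{2,\infty}^2}
\;\le\; \frac{\norm{\bX}{2,\infty}^2}{\norm{\bX}{2,\infty}^2 + c\norm{\bX}{2,\infty}^2}
\;=\; \frac{1}{1+c},
\]
which is the claim. (If $\norm{\bX}{2,\infty}=0$ then $\bX=0$, $h_{\lambda ii}=0$, and the bound holds trivially.)

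The main obstacle, as flagged above, is resisting the temptation to bound through the top singular value: the whole point of the $(2,\infty)$-norm normalization of $\lambda$ is that it controls each leverage score through that \emph{same} row's norm, which requires the operator-antitone / Sherman–Morrison step $\bX^\top\bX \succeq \bx_i\bx_i^\top$ rather than a global spectral bound. Once that monotonicity observation is in place, the rest is a one-line scalar computation. I would also remark that the bound is tight: equality holds when $n=k$ and all rows have equal norm (e.g. $\bX$ a scaled orthogonal matrix), so no improvement of the constant $1/(1+c)$ is possible in general.
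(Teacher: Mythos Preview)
Your proof is correct. The operator-antitone step $\bX^\top\bX \succeq \bx_i\bx_i^\top \Rightarrow (\bX^\top\bX+\lambda\bI)^{-1}\preceq(\bx_i\bx_i^\top+\lambda\bI)^{-1}$, followed by the rank-one Sherman--Morrison evaluation $\bx_i^\top(\bx_i\bx_i^\top+\lambda\bI)^{-1}\bx_i=\norm{\bx_i}{2}^2/(\norm{\bx_i}{2}^2+\lambda)$, is a clean and fully rigorous route to the bound. (The edge case $c=0$ is not covered by your argument since $\bx_i\bx_i^\top$ is singular, but then the claim is $h_{0,ii}\le 1$, which is immediate.)

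The paper takes a different path: it starts from the SVD representation $h_{\lambda ii}=\sum_j \sigma_j^2 u_{ij}^2/(\sigma_j^2+\lambda)$, replaces $\lambda=c\norm{\bX}{2,\infty}^2$ by the smaller quantity $c\norm{\bx_i}{2}^2$ (which only increases the sum), and then interprets the result as $\E[\phi(Y)]$ for $Y=\sigma_j^2$ with weights $u_{ij}^2$ and $\phi(g)=g/(g+c\norm{\bx_i}{2}^2)$; Jensen's inequality on the concave $\phi$ gives $\E[\phi(Y)]\le\phi(\E[Y])=1/(1+c)$. Both arguments land on the same intermediate bound $h_{\lambda ii}\le \norm{\bx_i}{2}^2/(\norm{\bx_i}{2}^2+\lambda)$, but yours reaches it via PSD ordering and a rank-one computation rather than Jensen. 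Your approach is arguably more elementary---it avoids the SVD formula and the concavity observation entirely---and the monotonicity step $\bX^\top\bX\succeq\bx_i\bx_i^\top$ makes transparent why only the $i$th row's norm matters, which is exactly the intuition you flag after the false start through $\sigma_1$.
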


High-leverage observations distort regression-adjusted estimates.
LOORA applies ridge regularization, which stabilizes the adjustment and delivers more reliable estimates.

The second term of \eqref{eq:reg-adj-ht-var} arises from using the random vector $\widetilde{\by}$, with $\E[\widetilde{\by}]=\bmu$, in place of $\bmu$ in the regression adjustment.  
In other words, it captures the error introduced by estimating $\bbeta_{\lambda}^{(-i)}$ with $\widehat{\bbeta}_{\lambda}^{(-i)}$.

We next provide a loose upper bound for this term to show that it scales as $k / n^2$.  
Let 
\[
\widetilde{\bX}_{\lambda} =
\begin{bmatrix}
\widetilde{\bX} \\[2pt]
\sqrt{\lambda}\, \bI
\end{bmatrix}.
\]
It follows that 
\begin{align*}
\sum_{i=1}^{n-1} \sum_{j=i+1}^n \widetilde{h}_{\lambda ij}^2
&\le \frac{1}{2} \norm{\widetilde{\bH}_\lambda}{F}^2
\le \frac{1}{2} \norm{\widetilde{\bX}_{\lambda} (\widetilde{\bX}_{\lambda}^\top \widetilde{\bX}_{\lambda})^{-1} \widetilde{\bX}_{\lambda}^\top}{F}^2
= \frac{k}{2},
\end{align*}
where the second inequality holds because $\widetilde{\bH}_\lambda$ is an $(n \times n)$ block of the projection matrix associated with $\widetilde{\bX}_{\lambda}$, and the final equality holds since the rank of $\widetilde{\bX}_{\lambda}$ is $k$, and the trace of a projection matrix equals its rank.

Therefore, the second term of \eqref{eq:reg-adj-ht-var} is bounded by
\begin{equation*}
\frac{2k}{n^2}
\norm{(1 - \widetilde{\bh}_\lambda)^{-1}}{\infty}^2
\norm{\bt / \br}{\infty}^2,
\end{equation*}
where $\widetilde{\bh}_\lambda$ denotes the vector of diagonal entries of $\widetilde{\bH}_{\lambda}$.  
That is, for a fixed number of covariates, the second term of \eqref{eq:reg-adj-ht-var} scales as $1 / n^2$ as $n$ increases, and is typically much smaller than the first term, which scales as $1 / n$.

In particular, for $\lambda = \norm{\widetilde{\bX}}{2,\infty}^2$, by \Cref{lem:ridge-lev-upper-bound}, the variance of LOORA-HT is bounded by
\[
\frac{4}{n^2}\sum_{i=1}^n (\widetilde{\bx}_{i}^\top \bbeta_\lambda - \mu_i)^2 + \frac{8k}{n^2} \norm{\bt/\br}{\infty}^2\,.
\]

We next study the asymptotic behavior of the LOORA-HT estimator.

\subsection{Asymptotic Normality of LOORA-HT}
\label{sec:asymptotic-ht}

This section derives a large-sample approximation to the distribution of LOORA-HT under the following assumptions.
\smallskip

\Crefname{enumi}{Assumption}{Assumptions}
\begin{enumerate}[label=\textbf{Assumption~\arabic*}, ref=\arabic*, leftmargin=*,
  align=left,
  labelwidth=!,
  labelsep=0.6em]
\item \label{ass:boundedness}
\textbf{(uniform boundedness).}
There exists a finite constant $L < \infty$ such that, for all $i \in \mathbb{N}$,
  $\norm{\bx_i}{2} \le L$, 
  $|y_i^{(1)}|\le L$, and $|y_i^{(0)}| \le L$.
\item \label{ass:positivity}
\textbf{(positivity).}
There exists a constant $m \in (0,0.5]$ such that for simple random assignment $m < p_i < 1-m$, for all $i \in \mathbb{N}$; and for complete random assignment $m < n_T/n < 1- m$ for all $n \in \mathbb{N}$.

\item \label{ass:conv-2nd-moment}
\textbf{(positive definiteness).}
The smallest eigenvalue of
$n^{-1}\bX^{\top}\bX$ is bounded away from zero uniformly in $n$.
\end{enumerate}

Assumptions \ref{ass:boundedness} and \ref{ass:conv-2nd-moment} provide bounds on potential outcomes and covariates, but allow the means of $(y_i^{(1)}, y_i^{(0)}, x_i)$ to fluctuate, without necessarily converging to a limit as $n$ increases. Assumption \ref{ass:positivity} requires that each experimental unit be assigned to treatment and control with strictly positive probabilities. It rules out degenerate experimental designs in which causal effects cannot be identified.

Let 
\[
\bbeta_{\lambda}
  = (\widetilde{\bX}^\top \widetilde{\bX} + \lambda \bI)^{-1}
    \widetilde{\bX}^\top \bmu
\]
be the true solution of the ridge regression problem on $\bmu$. 
The next theorem establishes the large-sample distribution of the
LOORA-HT estimator and identifies its asymptotic variance in closed form. 

\begin{theorem}\label{thm:asym_dist_HT_wo_lim}
Suppose \Cref{ass:boundedness,ass:positivity,ass:conv-2nd-moment} hold.
Let $\lambda \ge 0$ be fixed.
Then, if $\|\widetilde{\bX} \bbeta_{\lambda} - \bmu\|_{2} \to \infty$, under simple random assignment,
\[
\frac{\sqrt{n} (\hatt_{\textup{LHT}} - \tau)}{
\norm{\widetilde{\bX} \bbeta_{\lambda} - \bmu}{2} / \sqrt{n}}
\]
converges in distribution to a  standard normal random variable.
\end{theorem}

In the next subsection, we build on Theorem~\ref{thm:asym_dist_HT_wo_lim} to analyze the
asymptotic efficiency of LOORA-HT.

\subsection{Asymptotic Efficiency of LOORA-HT}
\label{sec:loora-ht-efficiency}

We now examine the asymptotic efficiency of the LOORA-HT estimator.  
Theorem~\ref{thm:asym_dist_HT_wo_lim} established the asymptotic normality of LOORA-HT.
In this subsection, we show that under simple random assignment, LOORA-HT is asymptotically efficient among the class of Horvitz-Thompson linearly adjusted estimators, defined as estimators of the form,
\[
\hatt_{\textup{HLA}} = \frac{1}{n} \sum_{i=1}^n \frac{z_i}{q_i} \bigl(y_i - \bx_i^\top \hat{\bgamma}_{n}^{(-i)}\bigr),
\]
where there exists a bounded deterministic sequence of vectors $\bgamma_1,\bgamma_2,\ldots$ and a sequence of vectors $\hat{\bgamma}_1,\hat{\bgamma}_2,\ldots$ such that
\begin{align}
\label{eq:HLT-assumption}
\max_{i \in [n]}\norm{\hat{\bgamma}_{n}^{(-i)} - \hat{\bgamma}_n}{2} = o_p(n^{-1/2})
\quad \text{and} \quad
\norm{\bgamma_n - \hat{\bgamma}_n}{2} \pto 0.
\end{align}
The class of HT linearly adjusted estimators is larger than the class of estimators usually considered for the purpose of determining efficiency (e.g., see \cite{cytrynbaum2024covariate}), because it allows for unit-dependent adjustment vectors, $\hat{\bgamma}_{n}^{(-i)}$.

LOORA-HT is contained in the class of HT linearly adjusted estimators since
\begin{align}
\label{eq:loora-ht-properties}
\max_{i \in [n]} \norm{\widehat{\bbeta}_{\lambda}^{(-i)} - \widehat{\bbeta}_{\lambda}}{2} = O(n^{-1})
\quad \text{and} \quad
\norm{\widehat{\bbeta}_\lambda - \bbeta_\lambda}{2} = O_p(n^{-1/2}),
\end{align}
where
\[
\widehat{\bbeta}_{\lambda}
  = (\widetilde{\bX}^\top \widetilde{\bX} + \lambda \bI)^{-1}
    \widetilde{\bX}^\top \widetilde{\by},
\]
as shown in the proof of \Cref{thm:asym_dist_HT_wo_lim}. The next result describes the asymptotic behavior of HT linearly adjusted estimators. 
\begin{lemma}
\label{lemma:var-of-HLA}
Suppose \Cref{ass:boundedness,ass:positivity,ass:conv-2nd-moment} hold and $\norm{\widetilde{\bX} \bgamma_{n} - \bmu}{2} = \Omega(\sqrt{n})$. Then under simple random assignment, $\hatt_{\textup{HLA}}$ is a consistent estimator and
\[
\frac{\sqrt{n}(\hatt_{\textup{HLA}} - \tau)}{\frac{1}{\sqrt{n}}\norm{\widetilde{\bX} \bgamma_{n} - \bmu}{2}} \dto \mathcal{N}(0,1).
\]
\end{lemma}
Using this result, we can establish the asymptotic efficiency of LOORA-HT in the case where $\norm{\widetilde{\bX} \bbeta_{0} - \bmu}{2} = \Omega(\sqrt{n})$.
\begin{theorem}
\label{thm:loora-ht-efficiency}
Under the assumptions of \Cref{lemma:var-of-HLA} and simple randomization, 
LOORA-HT with $\lambda = 0$ is an asymptotically efficient estimator in the class of HT linearly adjusted estimators.
\end{theorem}

For the case of $\lambda=0$ and provided that the regression adjustment includes an intercept, by \Cref{lemma:recenter}, the variance of LOORA-HT reduces to
\[
\min_{\bbeta \in \R^k} \frac{1}{n^2} \norm{(\widetilde{\bX} - \overline{\widetilde{\bX}})\bbeta - (\bmu - \overline{\bmu})}{2}^2,
\]
 This expression is similar to the variance of the estimator of \cite{lin2013agnostic}. 
  
In the next subsection, we develop valid and asymptotically
correct confidence intervals for LOORA-HT.

\subsection{Confidence Intervals for LOORA-HT}
\label{sec:loora-ht-conf-int}

The next theorem presents a variance estimator for $\widehat{\tau}_{\textup{LHT}}$. 

\begin{theorem}
\label{thm:loora-ht-est-var}
Suppose \Cref{ass:boundedness,ass:positivity,ass:conv-2nd-moment} hold.
Let $\lambda \ge 0$ be fixed,
\begin{align*}
\widehat{s}_i &
  = z_i \left(\frac{y_i - \bx_i^\top \widehat{\bbeta}_{\lambda}^{(-i)}}
     {q_i} \right) - \,\widehat{\tau}_{\textup{LHT}}, ~~\
     \widehat{V}_{\textup{LHT}} = \frac{1}{n} \sum_{i=1}^n \widehat{s}_i^2, ~~ \text{and} 
     \\ 
     V_n & = \frac{1}{n} \left(\norm{\widetilde{\bX} \bbeta_{\lambda} - \bmu}{2}^2 + \sum_{i=1}^n (\tau_i - \tau)^2  \right),
\end{align*}
where $\tau_i = y^{(1)}_i - y^{(0)}_i$. If $V_n$ is bounded away from zero,
then
\[
\frac{\widehat{V}_{\textup{LHT}}}{V_n} \stackrel{p}{\longrightarrow} 1.
\]  
\end{theorem}

As is generally the case in all finite-population inference \citep[see, e.g.,][]{abadie2020sampling}, $\widehat{V}_{\textup{LHT}}$ is consistent under homogeneous individual treatment effects and conservative when treatment effects are heterogeneous. We construct confidence intervals using the estimate of standard deviation as $\sqrt{\widehat{V}_{\textup{LHT}}/n}$ and the appropriate quantiles of the standard normal distribution.

\section{Regression Adjustment for Difference-in-Means Estimator}
\label{sec:reg-adj-dim}

This section considers the difference-in-means (DM) estimator under complete random
assignment, where exactly $n_T$ units are assigned to treatment and
$n_C = n - n_T$ to control. 

The DM estimator
\[
\widehat{\tau}_{\textup{DM}}
  \;=\; \frac{1}{n_T}\sum_{i=1}^n d_i y_i
       \;-\; \frac{1}{n_C}\sum_{i=1}^n (1-d_i) y_i
\]
is unbiased for the average treatment effect under complete random assignment.
Its finite-population variance is given by
\[
\Var(\widehat{\tau}_{\textup{DM}})
  \;=\;
  \frac{S_n^2(y^{(1)})}{n_T}
  \;+\;
  \frac{S_n^2(y^{(0)})}{n_C}
  \;-\;
  \frac{S_n^2(\tau)}{n},
\]
where $S_n^2(\cdot)$ denotes finite-population variances computed with an $n-1$ divisor (see, e.g., \citet[Chapter 6, Appendix A]{imbens2015causal}).
Equivalently—and most convenient for our regression-adjustment analysis—this
variance admits the representation
\begin{align*}
\Var\!\big(\widehat{\tau}_{\textup{DM}}\big)
\;=\;
\frac{1}{\,n\,(n-1)\,}\;
\big\|\, \widetilde{\bmu} - \overline{\widetilde{\bmu}} \,\big\|_2^2,
\end{align*}
where
\begin{equation}
\label{equation:mutilde}
\widetilde{\bmu} \;=\; \sqrt{\frac{n_C}{n_T}}\,\by^{(1)} \;+\; \sqrt{\frac{n_T}{n_C}}\,\by^{(0)}.
\end{equation}

For a fixed vector $\bb \in \mathbb{R}^k$, the covariate-adjusted DM estimator
\[
\widehat{\tau}_{\textup{DM}}(\bb) = 
\frac{1}{n_T}\sum_{i=1}^n d_i\,(y_i - \eta^{-1}\bx_i^\top\bb)
\;-\;
\frac{1}{n_C}\sum_{i=1}^n (1-d_i)\,(y_i - \eta^{-1}\bx_i^\top\bb),
\]
where $\eta=\sqrt{n_C/n_T} + \sqrt{n_T/n_C}$,
is unbiased for $\tau$, and its variance is equal to
\begin{equation}\label{eq:dm-adj-fixed-bb-final}
\Var\!\big(\widehat{\tau}_{\textup{DM}}(\bb)\big)
\;=\;
\frac{1}{\,n\,(n-1)\,}\;
\big\|\,
\big(\widetilde{\bmu}-\bX\bb\big)
-
\overline{\big(\widetilde{\bmu}-\bX\bb\big)}
\,\big\|_2^2,
\end{equation}
with
$\overline{(\widetilde{\bmu}-\bX\bb)} = (n^{-1}\sum_{i=1}^n(\widetilde{\mu}_i - \bx_i^\top\bb))\boldsymbol{1}$ .
Therefore, the minimum variance of 
$\widehat{\tau}_{\textup{DM}}(\bb)$
is equal to
\begin{equation}\label{eq:dm-projection-final}
\frac{1}{\,n\,(n-1)\,}\;
\min_{\bb\in\mathbb{R}^k}
\norm{\,
(\bX-\overline{\bX})\,\bb
-
(\widetilde{\bmu}-\overline{\widetilde{\bmu}})
\,}{2}^2.
\end{equation}

The vector $\widetilde{\bmu}$ is not observed, so the variance in \eqref{eq:dm-projection-final} cannot be attained in practice. The next section introduces an estimable surrogate
and uses it to define a leave-one-out regression-adjusted DM estimator (LOORA-DM).

\subsection{Leave-One-Out Regression-Adjusted Difference-in-Means}
\label{sec:loora-dm}

Similar to LOORA-HT, our proposed leave-one-out regression-adjusted difference-in-means (LOORA-DM) estimator (\Cref{alg:dim-loo-reg-adj}) computes a distinct coefficient vector $\bb$ for each unit $i$. A key difference, however, is the dependent variable used in the regression to estimate the adjustment coefficients. This regression must be modified to account for the dependence in treatment assignments induced by complete randomization.

\begin{algorithm}[t] 
\caption{LOORA-DM estimator}
\label{alg:dim-loo-reg-adj}
\begin{algorithmic}[1]
\State {\bfseries Input:} Covariates $\bX \in \R^{n \times k}$, outcome vector $\by \in \R^{n}$, treatment assignment vector $\bd \in \{0,1\}^{n}$ with $n_T = \sum_{i=1}^n d_i$ and $n_C = n-n_T$,
regularization factor $\lambda \geq 0$.

\State Calculate $\bv$ with $v_i = 1/n_T$ if $d_i=1$, and $v_i = 1/n_C$, otherwise.

\State Calculate $\bf_T \in \R^{n}$ with $f_{Tj} = \frac{1}{n_T(n_T - 1)} $ if $d_j=  1$, and $f_{Tj} = \frac{1}{n^2_C}$ otherwise.

\State Calculate $\bf_C \in \R^{n}$ with $f_{Cj} = \frac{1}{n_T^2} $ if $d_j= 1$, and $f_{Cj} = \frac{1}{n_C(n_C - 1)}$  otherwise.

\State Calculate $\bz = 2(\bd -1/2)\in \mathbb R^n$.

\State Set $S = 0$.

\For{$i \in [n]$}

\If{$d_i = 1$} 

\State Let $\bf^{(-i)} = \bf_T$.
\Else
\State Let $\bf^{(-i)} = \bf_C$.
\EndIf

\State Set $$\widetilde{\by}^{(-i)} = \frac{n_Tn_C(n-1)}{n}\bf^{(-i)} \by.$$

\State Set
\begin{align*}
\widehat{\bbeta}^{(-i)}_\lambda = \argmin_{\bb} \norm{\bX_{-i} \bb - \widetilde{\by}^{(-i)}_{-i}}{2}^2 + \lambda\norm{\bb}{2}^2\,.
\end{align*}

\State Set $$S = S + v_i z_i (y_i - \bx_{i}^\top \widehat{\bbeta}^{(-i)}_\lambda).$$
\EndFor\\

\Return $$\widehat{\tau}_{\textup{LDM}}= S.$$
\end{algorithmic}
\end{algorithm}

For each unit $i$, we define
\[
\widetilde{\by}^{(-i)} \;=\; \frac{n_T n_C (n-1)}{n}\,\bf^{(-i)} \by,
\]
where the components of $\bf^{(-i)}$ are given by
\[
f^{(-i)}_j =
\begin{cases}
\dfrac{1}{n_T (n_T - 1)} & \text{if } d_j = 1, \\[1em]
\dfrac{1}{n_C^2} & \text{if } d_j = 0,
\end{cases}
\]
if $d_i=1$, and
\[
f^{(-i)}_j =
\begin{cases}
\dfrac{1}{n_T^2} & \text{if } d_j = 1, \\[1em]
\dfrac{1}{n_C (n_C - 1)} & \text{if } d_j = 0,
\end{cases}
\]
if $d_i=0$.

The coefficient vector $\widehat{\bbeta}^{(-i)}_{\lambda}$ is then obtained by
regressing $\widetilde{\by}^{(-i)}_{-i}$ on $\bX_{-i}$ using ridge regression.
Finally, the outcome of unit $i$ is adjusted by
$\bx_i^\top \widehat{\bbeta}^{(-i)}_{\lambda}$,
\[
\widehat{\tau}_{\textup{LDM}}=\frac{1}{n_T}\sum_{i=1}^n d_i\,(y_i - \bx_{i}^\top \widehat{\bbeta}^{(-i)}_\lambda)
\;-\;
\frac{1}{n_C}\sum_{i=1}^n (1-d_i)\,(y_i - \bx_{i}^\top \widehat{\bbeta}^{(-i)}_\lambda).
\]

\subsection{The Bias and Variance of LOORA-DM}
\label{sec:loora-dm-var}

This section establishes that LOORA-DM is exactly unbiased for the average
treatment effect and derives an exact expression for its finite-sample variance.
The expression of the variance is more intricate for LOORA-DM than for LOORA-HT because complete randomization induces dependence between treatment assignments. Unbiasedness is also
slightly more delicate to verify than in the LOORA-HT case, but it
ultimately follows from the fact that $\widetilde{\by}^{(-i)}$ is constructed so
that $\E[\widetilde{\by}^{(-i)}_{-i}]=(\sqrt{n_c n_T}/n)\widetilde{\bmu}_{-i}$ for each $i$.

\begin{theorem}\label{thm:LOORA_DiM_Var}
Under complete random assignment, LOORA-DM is unbiased with variance equal to 
\begin{align*}
& \frac{1}{n(n-1)} \sum_{i=1}^n \left( \frac{\widetilde{\mu}_i-\bx_i^\top\bbeta_{\lambda}}{1- h_{\lambda ii}} - \frac{1}{n}\sum_{j=1}^n \frac{\widetilde{\mu}_j-\bx_j^\top\bbeta_{\lambda}}{1- h_{\lambda jj}} \right)^2
\\ &
\quad - \frac{2}{n^2(n-1)} \sum_{\substack{i,j \in [n]:\\ i \neq j}} \left( \frac{(\widetilde{\mu}_i-\bx_i^\top\bbeta_{\lambda})\, h_{\lambda ij}\, \widetilde{\mu}_i}{(1- h_{\lambda ii})(1- h_{\lambda jj})} - \frac{1}{n-2} \sum_{\substack{k \in [n]: \\k\neq i,j}}  \frac{(\widetilde{\mu}_i-\bx_i^\top\bbeta_{\lambda})\, h_{\lambda jk}\,  \widetilde{\mu}_k}{(1- h_{\lambda ii})(1- h_{\lambda jj})}
\right)
\\ & \quad + \begin{bmatrix}
    \widetilde\bt^{(0)} \\ \widetilde\bt^{(1)}
\end{bmatrix}^{\!\top} \bQ
\begin{bmatrix}
    \widetilde\bt^{(0)} \\ \widetilde\bt^{(1)}
\end{bmatrix},
\end{align*}
where
\begin{align*}
\bbeta_{\lambda} &=(\bX^\top \bX + \lambda \bI)^{-1}\bX^\top \widetilde{\bmu}, \\
\widetilde\bt^{(1)} & = n_C^2\by^{(1)} - n_T(n_T-1)\by^{(0)}, \qquad
\widetilde\bt^{(0)} = n_C(n_C-1)\by^{(1)} - n_T^2\by^{(0)}.
\end{align*}
The entries of $\bQ$ depend only on $n_C$, $n_T$, and the entries of the hat matrix
$\bH_{\lambda}$; its explicit form is provided in the appendix.
\end{theorem}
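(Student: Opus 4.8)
The plan is to mirror the proof of \Cref{thm:HT-LOO-Binary-Variance}, adding the bookkeeping forced by the dependence of assignments under complete random assignment, and then to split the variance into the three advertised pieces. \emph{Unbiasedness.} First I would record the conditional inclusion probabilities under complete random assignment, $\P(d_j=1\mid d_i=1)=(n_T-1)/(n-1)$ and $\P(d_j=1\mid d_i=0)=n_T/(n-1)$ for $j\neq i$, and check by direct substitution that both choices $\bf^{(-i)}\in\{\bf_T,\bf_C\}$ are calibrated so that $\E[\widetilde{\by}^{(-i)}_{-i}\mid d_i]=\tfrac1n\widetilde{\bmu}_{-i}$, whether $d_i=1$ or $d_i=0$. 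Since ridge regression is linear in the response, this gives $\E[\widehat{\bbeta}_\lambda^{(-i)}\mid d_i]=\tfrac1n\bbeta_\lambda^{(-i)}$ with $\bbeta_\lambda^{(-i)}:=(\bX_{-i}^\top\bX_{-i}+\lambda\bI)^{-1}\bX_{-i}^\top\widetilde{\bmu}_{-i}$ deterministic; splitting $\E[v_iz_i(y_i-\bx_i^\top\widehat{\bbeta}_\lambda^{(-i)})]$ on $d_i$ and using $\P(d_i=1)=n_T/n$, the $\bx_i^\top\bbeta_\lambda^{(-i)}$ contributions cancel between the two cases and the summand has mean $\tfrac1n(y_i^{(1)}-y_i^{(0)})$, so $\E[\widehat{\tau}]=\tau$.

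\emph{Reduction to moments of $\bd$.} Next I would write $\widetilde{\by}^{(-i)}_{-i}=\tfrac1n\widetilde{\bmu}_{-i}+\tfrac1n(\widetilde{\bt}^{(d_i)}_{-i}\odot\boldsymbol{w}^{(-i)}_{-i})$; carrying out the deviation computation identifies the fixed vectors $\widetilde{\bt}^{(1)}=n_C^2\by^{(1)}-n_T(n_T-1)\by^{(0)}$ and $\widetilde{\bt}^{(0)}=n_C(n_C-1)\by^{(1)}-n_T^2\by^{(0)}$, and a vector $\boldsymbol{w}^{(-i)}$ whose entries are piecewise constant in $d_j$ and sum to zero over $j\neq i$. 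By linearity of ridge, $\widehat{\bbeta}_\lambda^{(-i)}$ splits into the oracle part $\tfrac1n\bbeta_\lambda^{(-i)}$ and a noise part; applying \Cref{cor:sum-sqr-err-with-removal} together with the Sherman--Morrison identity $\bx_i^\top(\bX_{-i}^\top\bX_{-i}+\lambda\bI)^{-1}\bx_j=h_{\lambda ij}/(1-h_{\lambda ii})$ for $j\neq i$ yields $\widehat{\tau}=\widehat{\tau}^{\mathrm{orac}}-N$, where $\widehat{\tau}^{\mathrm{orac}}=\sum_i v_iz_i\bigl(y_i-\tfrac{\bx_i^\top\bbeta_\lambda-h_{\lambda ii}\widetilde{\mu}_i}{n(1-h_{\lambda ii})}\bigr)$, $\bbeta_\lambda=(\bX^\top\bX+\lambda\bI)^{-1}\bX^\top\widetilde{\bmu}$, and $N=\tfrac1n\sum_i\tfrac{v_iz_i}{1-h_{\lambda ii}}\sum_{j\neq i}h_{\lambda ij}\,\widetilde{t}^{(d_i)}_j w^{(-i)}_j$. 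Both terms are functions of $\bd$ only, and $\E[N]=0$ because $\widehat{\tau}$ and $\widehat{\tau}^{\mathrm{orac}}$ are both unbiased, so it remains to evaluate $\Var(\widehat{\tau})=\Var(\widehat{\tau}^{\mathrm{orac}})-2\,\mathrm{Cov}(\widehat{\tau}^{\mathrm{orac}},N)+\E[N^2]$.

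\emph{The three terms.} I would then identify $\widehat{\tau}^{\mathrm{orac}}$ as the difference-in-means estimator of the covariate-adjusted outcomes $y_i-c_i$ with the \emph{fixed} adjustment $c_i=(\bx_i^\top\bbeta_\lambda-h_{\lambda ii}\widetilde{\mu}_i)/(n(1-h_{\lambda ii}))$; since $\widetilde{\mu}_i-nc_i=(\widetilde{\mu}_i-\bx_i^\top\bbeta_\lambda)/(1-h_{\lambda ii})$, the finite-population variance formula \eqref{eq:dm-adj-fixed-bb-final} delivers the first term verbatim. The cross term pairs a statistic linear in $\bd$ with one quadratic in $\bd$, so it is computed from the third-order joint inclusion probabilities of complete random assignment; this produces the second term, including the $(n-2)^{-1}$ correction coming from the triple-index sums. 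For $\E[N^2]$ I would expand over quadruples $(i,j,k,\ell)$ with $i\neq j$, $k\neq\ell$, group by the overlap pattern of $\{i,j\}$ and $\{k,\ell\}$ (equal, sharing one index, disjoint), evaluate each block with the appropriate inclusion probabilities, and collect; the outcome is a quadratic form $[\widetilde{\bt}^{(1)};\widetilde{\bt}^{(0)}]^\top\bQ[\widetilde{\bt}^{(1)};\widetilde{\bt}^{(0)}]$ with $\bQ$ depending only on $n_T$, $n_C$ and the entries of $\bH_\lambda$, whose explicit entries I would record in the appendix.

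\emph{Main obstacle.} The hard part is the evaluation of $\E[N^2]$: each summand of $N$ depends on $(d_i,d_j)$ through $v_iz_i$ and the piecewise weights $w^{(-i)}_j$, and additionally switches between $\widetilde{\bt}^{(1)}$ and $\widetilde{\bt}^{(0)}$ according to $d_i$, so the quadruple-sum bookkeeping must simultaneously track the assignment pattern of up to four units and the treatment/control role of the two ``outer'' indices. Organizing the enumeration so that everything collapses into the clean quadratic form above, rather than an unwieldy expression in the individual potential outcomes, is where essentially all the effort lies; the covariance term needs the same care but only up to third order, and the first term is then immediate from the already-established fixed-adjustment variance formula.
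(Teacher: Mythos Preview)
Your proposal is correct and follows essentially the same route as the paper: both proofs establish unbiasedness by conditioning on $d_i$ and checking $\E[\widetilde{\by}^{(-i)}_{-i}\mid d_i]=\tfrac1n\widetilde{\bmu}_{-i}$, both write $\widetilde{\by}^{(-i)}=\tfrac1n\widetilde{\bmu}+\tfrac1n\bv^{(-i)}\bz\widetilde{\bt}^{(-i)}$ (your $\boldsymbol{w}^{(-i)}$ is the paper's $\bv^{(-i)}\bz$), both apply the leave-one-out identity to reach an oracle-plus-noise decomposition, and both split the variance into the same three pieces $T_1,T_2,T_3$ computed from second-, third-, and fourth-order joint inclusion probabilities under complete random assignment. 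The only cosmetic difference is that you recognize $T_1$ as the variance of a fixed-adjustment DM estimator and invoke the known formula, whereas the paper rederives it by computing $\E[v_iz_iv_jz_j]$ directly; the content is identical.
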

The first line in the variance expression is the finite-population variance
of $(\widetilde{\mu}_i-\bx_i^\top\bbeta_{\lambda})/(1-h_{\lambda ii})$ rescaled by $1/n$.
The
 $1-h_{\lambda ii}$ in the denominator provides a leverage correction that 
generates from leave-one-out adjustment. 
The second line of the variance expression in \Cref{thm:LOORA_DiM_Var} collects cross-unit terms that include the off-diagonal leverages
$h_{\lambda ij}$. These cross-unit contributions did not appear in the
LOORA-HT analysis because simple random assignment makes assignments independent across units, whereas complete random assignment induces dependence across units. The matrix $\bQ$
bundles additional corrections that depend only on $(n_T,n_C)$ and the leverage
structure encoded in $\bH_{\lambda}$. For a fixed number of covariates, the
first term scales as $O(1/n)$, while both the cross-unit term and the $\bQ$-term
scale as $O(1/n^2)$.\footnote{To see why the cross-unit term in the result of Theorem \ref{thm:LOORA_DiM_Var} is $O(1/n^2)$, apply Cauchy-Schwarz inequality and note that $(\sum_{\substack{i,j \in [n]: i \neq j}} h_{\lambda ij}^2 )^{0.5} \leq \|\bH\|_{F} \leq \sqrt{k}$.}

Theorem~\ref{thm:LOORA_DiM_Var} shows that the variance of
LOORA-DM is dominated by the centered dispersion of the leverage-adjusted signal
and that dependence induced by complete random assignment contributes only
second-order terms. This decomposition is useful to establish
asymptotic distributional results and efficiency comparisons.

\subsection{Asymptotic Normality of LOORA-DM}
\label{sec:asymptotic-var-LDM}

This section analyzes  the large-sample behavior of LOORA-DM under
complete random assignment.  
It shows that, in large samples, LOORA-DM satisfies a central limit theorem.
Similar to \Cref{sec:asymptotic-ht}, we study the behavior of truncated sequences $(\bx_i, y_i^{(1)}, y_i^{(0)})_{i \in [n]}$ as $n \to \infty$, and assume that the ratio $n_T/n$ is bounded away from zero and one; see \Cref{ass:positivity}.

Recall that for each unit $i$, the adjusted outcome is constructed using
a coefficient vector estimated from all other units:
\begin{align*}
\widehat{\bbeta}^{(-i)}_\lambda
&=
\argmin_{\bb \in \R^k}
\big\|
\bX_{-i}\bb - \widetilde{\by}^{(-i)}_{-i}
\big\|_2^2
+
\lambda \|\bb\|_2^2.
\end{align*}

To study the limiting behavior of these regression adjustments, define the following
population version of 
$\widehat{\bbeta}^{(-i)}_\lambda$
\begin{align}
\label{eq:beta-star-def-dm}
\bbeta_{\lambda}
=
\argmin_{\bb \in \R^{k}}
\left\|
\bX\bb - \widetilde{\bmu}
\right\|_2^2
+
\lambda \|\bb\|_2^2,
\end{align}
$\bbeta_{\lambda}$ represents the deterministic coefficient vector that best
captures the relation between covariates and the signal component
$\widetilde{\bmu}$ in the population.  
\begin{theorem}
\label{thm:asymptotic-normality-LDM}
Suppose \Cref{ass:boundedness,ass:positivity,ass:conv-2nd-moment} hold.
Let $\lambda \ge 0$ be fixed.
If 
\[
\norm{(\bX - \overline{\bX}) \bbeta_{\lambda}
\allowbreak - \allowbreak
(\widetilde{\bmu} - \overline{\widetilde{\bmu}})}{2} \to \infty,
\]
then
\[
\frac{\sqrt{n} (\hatt_{\textup{LDM}} - \tau)}{
\norm{(\bX - \overline{\bX}) \bbeta_{\lambda}
-
(\widetilde{\bmu} - \overline{\widetilde{\bmu}})}{2} / \sqrt{n}} 
\]
converges in distribution to a  standard normal random variable.
\end{theorem}

The variance expression above parallels that of the LOORA-HT estimator
but differs in two key aspects.
First, the large sample variance of the difference-in-means estimator under complete random assignment automatically recenters $\bX$ and $\widetilde{\bmu}$. In contrast, LOORA-HT requires the inclusion of an intercept term to obtain a similar variance expression.
Second, the scaling of $\widetilde{\bmu}$ reflects the finite-sample proportions of
treated and control units (equation \eqref{equation:mutilde}).

Theorem~\ref{thm:asymptotic-normality-LDM} shows that LOORA-DM achieves
an asymptotically normal distribution with a variance determined by the
projection of the population signal onto the null space of the covariate matrix.
It implies that LOORA-DM is as efficient in large samples as the interacted adjustment estimator of \cite{lin2013agnostic}.

\subsection{Confidence Intervals for LOORA-DM}
\label{sec:ci-ldm}

The next theorem provides a consistency result for an estimator of the variance of LOORA-DM. 

\begin{theorem}
\label{thm:loora-dm-est-var}
Suppose \Cref{ass:boundedness,ass:positivity,ass:conv-2nd-moment} hold.
Let $\lambda \ge 0$ be fixed, define 
\[
\widehat{V}_{\textup{LDM}} = \frac{n-1}{n_C (n_C - 1)} \sum_{i \in [n]: d_i=0} (\widehat{s}_i^{(0)})^2 + \frac{n-1}{n_T(n_T-1)} \sum_{i \in [n]: d_i=1} (\widehat{s}_i^{(1)})^2,
\]
where
\begin{align*}
\widehat{s}_i^{(0)} & = y_i - \bx_i^\top\hat\bbeta_{\lambda}^{(-i)} - \frac{1}{n_C} \sum_{j \in [n]:d_j=0} y_j - \bx_j^\top\hat\bbeta_{\lambda}^{(-j)},
\\
\widehat{s}_i^{(1)} & = y_i - \bx_i^\top\hat\bbeta_{\lambda}^{(-i)} - \frac{1}{n_T} \sum_{j \in [n]:d_j=1} y_j - \bx_j^\top\hat\bbeta_{\lambda}^{(-j)}.
\end{align*}
Moreover let
\[
W_n = \frac{1}{n} \left(\norm{(\bX - \overline{\bX}) \bbeta_{\lambda}
-
(\widetilde{\bmu} - \overline{\widetilde{\bmu}})}{2}^2 + \sum_{i=1}^n (\tau_i - \tau)^2  \right).
\]
If $W_n$ is bounded away from zero, then
\[
\frac{\widehat{V}_{\textup{LDM}}}{W_n} \stackrel{p}{\longrightarrow} 1.
\]
\end{theorem}
As with LOORA-HT, and more generally with finite-population variance estimation for treatment-effect estimators, $\widehat{V}_{\textup{LDM}}$ is consistent under treatment-effect homogeneity and conservative otherwise. Large-sample confidence intervals for $\tau$ follow by combining this variance estimator with quantiles of the standard normal distribution.
\section{Discussion}
\label{sec:detailed-related-work}

LOORA targets three properties that existing estimators do not achieve simultaneously: finite-sample unbiasedness, asymptotic efficiency, and robustness to high-leverage observations. Classical estimators such as Horvitz–Thompson and the difference-in-means are exactly unbiased, but they leave efficiency gains on the table. Regression-adjusted estimators recover those gains asymptotically, yet they introduce finite-sample bias and can become unstable in the presence of high-leverage observations. LOORA delivers all three properties simultaneously: it preserves exact unbiasedness, attains the \citet{lin2013agnostic} efficiency bound in large samples, and remains stable under influential covariates through its leave-one-out construction.

Table~\ref{tabel:comapre_estiamtors} summarizes the properties for LOORA and competing estimators under simple and complete randomization.

\vspace{0.3em}
\noindent\textbf{Classical estimators.}
Under simple randomization, the Horvitz--Thompson (HT) estimator is
exactly unbiased but can exhibit high variance.  Regression adjustment aims to improve efficiency by controlling
for observed covariates.  The most commonly used adjusted estimator,
$\widehat\tau_{\text{ADJ}}$, regresses $\by$ on $\bX$ and the treatment
indicator $\bd$ with intercept, producing a consistent estimator only
when treatment effects are homogeneous.  The interactive version,
$\widehat\tau_{\text{INT}}$, adds treatment-by-covariate interactions
and is asymptotically efficient under both simple (with equal treatment probability assignment for all units) and complete
randomization \citep{lin2013agnostic}. 
However, both estimators, $\widehat\tau_{\text{ADJ}}$ and $\widehat\tau_{\text{INT}}$,  can
become unstable when the sample contains 
high-leverage observations, and neither is unbiased.

\vspace{0.3em}
\noindent\textbf{Leave-one-out regression adjustment.}
The LOORA framework modifies regression adjustment to preserve
finite-sample unbiasedness without sacrificing efficiency.  By using
leave-one-out fitted values, the LOORA-HT and LOORA-DM estimators remove
the bias induced by regression adjustment while maintaining
asymptotic optimality.  Under simple randomization, LOORA-HT achieves
the same variance bound as the \citet{lin2013agnostic} estimator but is
exactly unbiased and robust to leverage.  Under complete randomization,
LOORA-DM extends these results to dependent assignment structures by
appropriately rescaling the leave-one-out predictions.  Both estimators
combine the finite-sample exactness of HT and DM with the
asymptotic efficiency of regression adjustment.

\vspace{0.3em}
\noindent\textbf{Relation to recent work.}
Recent work has proposed innovative methods for covariate adjustment in randomized experiments.  
\citet{ghadiri2023finite} develops a cross-fitting unbiased regression-adjusted estimator for simple random assignment. This approach, however, does not attain the \cite{lin2013agnostic} variance in large samples.  
\citet{harshaw2022design}, in contrast, takes a design-based perspective: rather than modifying the estimator, they optimize the randomization mechanism itself to minimize the variance of the unadjusted Horvitz--Thompson estimator.  
Finally, \citet{spiess2025optimal} characterizes the necessary conditions under which design-based estimators such as Horvitz--Thompson or difference-in-means are unbiased.  
For the HT estimator, these conditions require a leave-one-out structure, which our LOORA-HT estimator satisfies by construction.  
For the DM estimator, unbiasedness requires a leave-two-out structure. We show below that LOORA-DM can be represented as a leave-two-out estimator, thereby satisfying Spiess’s condition.  
Hence, our estimators are not only consistent with the theoretical framework of \citet{spiess2025optimal}, but also provide constructive, closed-form realizations of unbiased estimators that satisfy these necessary design-based criteria. The remainder of this section provides a detailed comparison with prior work.

\begin{table}
\caption{Properties of estimators}\label{tabel:comapre_estiamtors}
\begin{tabular}{lcccc}\hline
&&unbiased&attains \cite{lin2013agnostic}&robust to\\
&&&asymptotic variance&high leverage\\
{\it Under simple randomization:}\\[.5ex]
\hspace{2ex} $\widehat\tau_{\text{HT}}$         &&\cmark&\xmark&--\\
\hspace{2ex} $\widehat\tau_{\text{LHT}}$   &&\cmark&\cmark&\cmark\\[1ex]
{\it Under complete randomization:}\\[.5ex]
\hspace{2ex} $\widehat\tau_{\text{DM}}$         &&\cmark&\xmark&--\\
\hspace{2ex} $\widehat\tau_{\text{ADJ}}$\quad(Linear regression adjustment)&        &\xmark&\xmark&\xmark\\
\hspace{2ex} $\widehat\tau_{\text{INT}}$ \quad(Interactive regression adjustment, \cite{lin2013agnostic})        &&\xmark&\cmark&\xmark\\
\hspace{2ex} $\widehat\tau_{\text{LDM}}$   &&\cmark&\cmark&\cmark\\[1ex]
 \hline
\end{tabular}
\end{table}

\subsection{Comparison with Cross-Fitted Regression-Adjusted HT}
\label{sec:compare-crossfit}

\citet{ghadiri2023finite} proposes a finite-sample unbiased regression-adjusted Horvitz–Thomp-\allowbreak son estimator built on cross-fitting. The procedure splits the sample into two groups, learns a regression vector within each group, and then uses each group’s fitted vector to adjust the outcomes in the other. \citet{ghadiri2023finite} considers only the case of constant $p_i=0.5$ for all $i$ and derives a high-probability upper bound on the variance of their estimator. Specifically, conditional on an event $\mathcal{E}$ that holds with probability at least $1-\delta$, the variance of the estimator in \citet{ghadiri2023finite} is bounded by
\begin{align}
\label{eq:cross-fit-reg-adj-ht-var-bound}
\frac{8(1+\epsilon)}{n^2} 
\min_{\bb \in \mathbb{R}^k} 
\Big(
\norm{ \bX \bb - \bmu }{2}^2 
+ 100 \log(n/\delta)\, \zeta_{\bX}^2 \norm{ \bb }{2}^2
\Big)
+ \frac{32k}{n^2} 
\norm{\by^{(1)} - \by^{(0)} }{\infty}^2,
\end{align}
where $\zeta_{\bX}^2 = \norm{\bX}{2, \infty}^2$ and $\varepsilon$ is an approximation error that depends on $\delta$.\footnote{$\mathcal{E}$ is the event under which the leverage-score sampling in \citet{ghadiri2023finite} delivers a $(1+\epsilon)$-approximate solution to the full regression problem. Concretely, the method randomly subsamples rows of $\bX$ and solves the regression on the subsample rather than on all $n$ observations. The resulting approximation error relative to the full-sample regression induces the $(1+\epsilon)$ factor in \eqref{eq:cross-fit-reg-adj-ht-var-bound}}. 
By \Cref{lem:ridge-lev-upper-bound}, \citet{ghadiri2023finite} choice of regularization parameter $\lambda = 100 \log(n/\delta)\, \zeta_{\bX}^2$, the ridge leverage scores satisfy 
$h_{\lambda i i}(\bX) \leq 1/(1 + 100 \log(n/\delta))$.  
Moreover, since all $p_i = 0.5$, it follows that
$
\max_{i \in [n]} \norm{\widetilde{\bx}_i}{2}
= 2 \max_{i \in [n]} \norm{\bx_i}{2}$.
Consider
\begin{align*}
\bbeta
&= \argmin_{\bb \in \mathbb{R}^k}
\Big(
\norm{ \bX \bb - \bmu }{2}^2
+ 100 \log(n/\delta)\, \zeta_{\bX}^2 \norm{ \bb }{2}^2
\Big).
\end{align*}
Note that for any $\bb\in\R^k$,
\[
\norm{ \bX \bb - \bmu }{2}^2
+ 100 \log(n/\delta)\, \zeta_{\bX}^2 \norm{ \bb }{2}^2 = \norm{ \widetilde{\bX} (\bb/2) - \bmu }{2}^2
+ 100 \log(n/\delta)\, \zeta_{\widetilde{\bX}}^2 \norm{ \bb / 2 }{2}^2.
\]
Therefore, 
\[
\argmin_{\bb \in \mathbb{R}^k}
\Big(
\norm{ \widetilde{\bX} \bb - \bmu }{2}^2
+ 100 \log(n/\delta)\, \zeta_{\widetilde{\bX}}^2 \norm{ \bb }{2}^2
\Big) = \bbeta/2.
\]

For  $\lambda = 100 \log(n/\delta)\, \zeta_{\widetilde{\bX}}^2$ with $\delta\leq \exp(-1)$, the variance of LOORA-HT from \Cref{thm:HT-LOO-Binary-Variance}, is bounded by
\begin{align*}
&
\left(1 - \frac{1}{101}\right)^{-2} \left(\frac{1}{n^2}
\sum_{i=1}^n 
(\widetilde{\bx}_i^\top (\bbeta/2) - \mu_i)^2
+ \frac{2k}{n^2} 
\norm{\bt / \br}{\infty}^2 \right) 
\\ &
= 
\left(\frac{101}{100}\right)^{2} \left(\frac{1}{n^2}
\sum_{i=1}^n 
(\bx_i^\top \bbeta - \mu_i)^2
+ \frac{2k}{n^2} 
\norm{\by^{(1)} - \by^{(0)}}{\infty}^2 \right).
\end{align*}
Relative to the bound of \citet{ghadiri2023finite} in \eqref{eq:cross-fit-reg-adj-ht-var-bound}, the bound above improves the first term by roughly a factor of eight. It also eliminates the additional ridge-regularization penalty term that appears in \eqref{eq:cross-fit-reg-adj-ht-var-bound} and improves the final term by about a factor of sixteen.

In addition, unlike the procedure in  \citet{ghadiri2023finite}, LOORA comes with consistent
variance estimation and valid confidence intervals.

\subsection{Comparison with Gram--Schmidt Random Walk Design}
\label{sec:compare-gsw}

\citet{harshaw2022design} proposes an ingenious design-based approach to variance reduction.  
Rather than modifying the estimator, they optimize the randomization mechanism itself through a
Gram-Schmidt Walk (GSW) design that ensures covariate balance while controlling the marginal
assignment probabilities. One potential limitation of GSW is that it requires covariate information on all sample units prior to treatment assignment. In contrast, simple or complete random assignment can be implemented online (sequentially) as additional experimental subjects are recruited for a study. Moreover, with simple or complete randomization, a researcher can freely add or remove covariates,
analyze different subgroups, or draw additional samples even after the experiment
has been conducted.  
This flexibility, together with the transparency and ease of implementation of standard randomization,
makes simple and complete randomization attractive in many applied settings. Motivated by these considerations, LOORA-HT and LOORA-DM are explicitly designed for simple and complete randomization, respectively.

\citet{harshaw2022design} shows that the GSW design achieves the following variance bound for any
choice of $\phi \in (0,1)$:
\begin{align}
\label{eq:gsw-design-var}
\frac{1}{n^2}\min_{\bbeta \in \mathbb{R}^k}
\Big[
\frac{1}{\phi}\norm{\bmu - \bX \bbeta}{2}^2
+ \frac{\zeta_{\bX}^2}{1 - \phi}\norm{\bbeta}{2}^2
\Big].
\end{align}
We now compare this bound with the LOORA-HT variance.
The second term in \eqref{eq:gsw-design-var} and the corresponding regularization term in
\eqref{eq:reg-adj-ht-var} are not directly comparable in general, although both scale on the order of
$k/n^2$.  
For the first term, with a constant probability of assignment,
\[
\min_{\bbeta \in \mathbb{R}^k}
\Big[
\frac{1}{\phi}\norm{\bmu - \bX \bbeta}{2}^2
+ \frac{\zeta_{\bX}^2}{1 - \phi}\norm{\bbeta}{2}^2\Big] = \min_{\bbeta \in \mathbb{R}^k}
\Big[
\frac{1}{\phi}\norm{\bmu - \widetilde{\bX} \bbeta}{2}^2
+ \frac{\zeta_{\widetilde{\bX}}^2}{1 - \phi}\norm{\bbeta}{2}^2\Big].
\]
Therefore,
setting $\lambda = \zeta_{\widetilde{\bX}}^2 / (1 - \phi)$, \Cref{lem:ridge-lev-upper-bound} implies that
\[
\widetilde h_{\lambda ii}
\;\leq\;
\frac{1 - \phi}{2 - \phi}.
\]
Consequently,
\[
\frac{1}{(1 - \widetilde h_{\lambda ii})^2}
\;\leq\;
(2 - \phi)^2.
\]
For $\phi \in [0, (3 - \sqrt{5})/2)$, we have $(2 - \phi)^2 \leq 1/\phi$.  
Hence, the scaling factor that appears in our variance bound is uniformly smaller than that of the
GSW design.  
Moreover, the first term in \eqref{eq:gsw-design-var} diverges as $\phi \to 0$, whereas the
corresponding constant in the variance of LOORA-HT remains bounded.

Finally, when the leverage scores of $\widetilde{\bX}$ are small, LOORA can set $\lambda=0$. In contrast, under the GSW design the smallest feasible regularization parameter (i.e., coefficient of $\norm{\bbeta}{2}^2$ in \eqref{eq:gsw-design-var}) is $\lambda=\zeta_{\bX}^2$. 

\subsection{Leave-One-Out and Leave-Two-Out Estimators}
\label{sec:connection-spiess}

\citet{aronow2013class} studies a general class of unbiased
Horvitz-Thompson estimators. These estimators adjust unit~$i$'s
outcome by a function $f(\bx_i,\btheta_i)$, where $\btheta_i$ is a
vector of parameters. The estimator is unbiased when $d_i$ is
uncorrelated with $f(\bx_i,\btheta_i)$.
A useful case arises when $\btheta_i$ depends only on outcomes $y_j$
whose treatment indicators $d_j$ are independent of $d_i$. Under
independent treatment assignment, this condition suggests 
leave-one-out constructions. LOORA-HT follows this logic by setting
$\btheta_i=\widehat{\bbeta}^{(-i)}_{\lambda}$, as defined in
\Cref{alg:ht-loo-reg-adj}.

\citet{wu2018loop} studies a special case of \citet{aronow2013class} in which the adjustment term for observation $i$ is constructed as a linear combination of two counterfactual outcomes for unit $i$ (obtained by any method of our choosing, such as linear regression or random forests)---a predicted outcome for unit $i$ under treatment and a predicted outcome under control---via a leave-one-out fit, which satisfies the sufficient condition for $f$ stated in \citet{aronow2013class}.

Building on these contributions, \citet{spiess2025optimal} provides a general characterization of unbiased regression-adjusted estimators for average treatment effects under arbitrary randomization schemes when the treatment-assignment probability is the same for all units.
He shows that for Horvitz--Thompson and difference-in-means type estimators, unbiasedness imposes specific structural constraints on how each unit’s outcome can depend on treatment assignments of other units.  
In particular, under simple random assignment with equal treatment assignment probability for all units, the Horwitz--Thompson estimator must have a leave-one-out structure—each adjusted outcome for unit $i$ may depend on the treatment assignments of all other units except $i$—whereas under complete random assignment, the estimator must satisfy a stronger leave-two-out condition, meaning that each pairwise contribution between units $i$ and $j$ can depend only on the treatment assignments of units other than $i$ and $j$.

We next show that LOORA-DM satisfies Spiess's leave-two-out requirement and therefore meets the necessary conditions for unbiasedness in \citet{spiess2025optimal}. Specifically, we show that the LOORA--DM estimator admits the representation
\[
\widehat{\tau}_{\textup{LDM}}
=
\frac{1}{n_T n_C}
\sum_{i<j}
(d_i - d_j)
\big(y_i - y_j - \phi_{ij}(\bz_{-ij})\big),
\]
where $\phi_{ij}$ is an adjustment function and $\bz_{-ij}$ collects the treatment assignments of all units other than $i$ and $j$.  
Note that
\[
\frac{1}{n_T n_C} \sum_{i<j} (d_i - d_j)(y_i - y_j)
=
\frac{1}{n_T}\sum_{d_i=1} y_i
-
\frac{1}{n_C}\sum_{d_i=0} y_i,
\]
so it suffices to characterize the adjustment term $\phi_{ij}(\bz_{-ij})$.

\paragraph{Pairwise representation.}
We define
\[
\widetilde{y}_{\ell}
=
\begin{cases}
\displaystyle\frac{n_C (n-1)}{(n_T - 1)n} y^{(1)}_{\ell} & \mbox{if } d_{\ell} = 1, \\[0.6ex]
\displaystyle\frac{n_T (n-1)}{(n_C - 1)n} y^{(0)}_{\ell} & \mbox{if }d_{\ell} = 0,
\end{cases}
\]
and set
\[
\phi_{ij}(\bz_{-ij})
=
\bx_i^\top
(\bX_{-i}^\top \bX_{-i} + \lambda \bI)^{-1}
\bX_{-ij}^\top
\widetilde{\by}_{-ij}
-
\bx_j^\top
(\bX_{-j}^\top \bX_{-j} + \lambda \bI)^{-1}
\bX_{-ij}^\top
\widetilde{\by}_{-ij}.
\]
This form makes explicit that $\phi_{ij}$ depends only on $\bz_{-ij}$—that is, on the treatment assignments of all units other than $i$ and $j$—and hence defines a leave-two-out structure.

Substituting this definition, we obtain
\begin{align*}
- \frac{1}{n_T n_C} \sum_{i<j} (d_i - d_j)\phi_{ij}(\bz_{-ij})
= {} &
-\frac{1}{n_T n_C}
\sum_{d_i=1}
\bx_i^\top
(\bX_{-i}^\top \bX_{-i} + \lambda \bI)^{-1}
\sum_{d_j=0}
\bX_{-ij}^\top \widetilde{\by}_{-ij}
\\
& +
\frac{1}{n_T n_C}
\sum_{d_i=0}
\bx_i^\top
(\bX_{-i}^\top \bX_{-i} + \lambda \bI)^{-1}
\sum_{d_j=1}
\bX_{-ij}^\top \widetilde{\by}_{-ij}.
\end{align*}
Some algebra shows that these expressions are exactly the adjustment values in \Cref{alg:dim-loo-reg-adj} for both treated and control units. Thus, $\widehat{\tau}_{\textup{LDM}}$ admits a leave-two-out representation. \Cref{sec:equivalence-loora-dm-leave-two-out} gives a detailed proof.

\section{Evidence from Within-Subject Experiments}
\label{sec:experiments}

This section presents empirical evidence on the finite-sample performance of LOORA estimators.
The analysis uses experimental data from two within-subject studies. In a within-subject experiment, each unit experiences both the treatment and control conditions, so both potential outcomes are observed for every unit.
We use within-subject experiment data to simulate randomized evaluations with known ground truth and with a realistic joint distribution of potential outcomes.

The datasets are drawn from \citet{allcott2015evaluating} and \citet{mcdonald2025evaluating}.  
A brief description of each is provided below.\smallskip 

\begin{enumerate}
    \item \textbf{Statehood dataset.}  
    \citet{mcdonald2025evaluating} studies the persuasiveness of policy arguments using a within-subject design.  
    Respondents first record their opinions on a policy issue, then read one or more arguments either supporting or opposing the policy, and finally re-state their opinions after exposure to each argument.  
    We use the portion of their data concerning opinions about granting statehood to the District of Columbia (DC).  
    The outcome under control is the respondent’s opinion before reading any arguments; the outcome under treatment is the opinion after reading (i) an argument against DC statehood emphasizing political corruption, and (ii) an argument in favor highlighting taxation without representation.  
    We restrict attention to male respondents aged 40-49.  
    Covariates include indicators for the different values of six categorical variables: party affiliation, race, voter registration, political attentiveness, education, and ideology.  
    The resulting dataset contains $36$ units and $32$ covariates. 

    \item \textbf{Lightbulb dataset.}  
    \citet{allcott2015evaluating} conducts an experiment on consumer preferences between two types of lightbulbs offered at varying relative prices.  
    Respondents first choose between the two options, after which they receive information about the energy costs of each type and make the choice again.  
    The initial choice forms the control outcome; the post-information choice forms the treatment outcome.  
    Covariates are binary indicators of categorical features from the original data, including employment status, renter status, U.S. region, living in a metropolitan area, marital status, income, housing type, gender, and ethnicity.  
    The final dataset contains $123$ units and $53$ covariates.
\end{enumerate}
\smallskip

\begin{table}[t]
\centering
\caption{Simulation results for Statehood dataset (covariate-dependent assignment, $\alpha=0.05$).}
\label{tab:sim_results_dep_simple_statehood}
\footnotesize
\begin{tabular}{lccccc}
\toprule
& \multicolumn{5}{c}{\textbf{ATE Estimators}} \\
\cmidrule(lr){2-6}
& \multicolumn{2}{c}{OLS} & \multicolumn{3}{c}{LOORA-HT} \\
\cmidrule(lr){2-3}\cmidrule(lr){4-6}
& Linear & Interacted & $\lambda{=}\zeta^2$ & $\lambda{=}2\zeta^2$ & $\lambda{=}3\zeta^2$ \\
\midrule

\multicolumn{6}{l}{} \\
Bias        & 0.045 & 0.047 & 0.000 & 0.000 & 0.000 \\
SD          & 0.519 & 0.639 & 0.314 & 0.296 & 0.290 \\
RMSE        & 0.521 & 0.640 & 0.314 & 0.296 & 0.290 \\
CI Coverage (HC0)& 0.727 & 0.167 & 0.948 & 0.950 & 0.951 \\
CI Coverage (HC2)& 0.929 & 0.723 & 0.948 & 0.950 & 0.951 \\
\bottomrule
\end{tabular}
\end{table}

\begin{table}[t]
\centering
\caption{Simulation results for Lightbulb dataset (covariate-dependent assignment, $\alpha=0.05$).}
\label{tab:sim_results_dep_simple_lightbulb}
\footnotesize
\begin{tabular}{lccccc}
\toprule
& \multicolumn{5}{c}{\textbf{ATE Estimators}} \\
\cmidrule(lr){2-6}
& \multicolumn{2}{c}{OLS} & \multicolumn{3}{c}{LOORA-HT} \\
\cmidrule(lr){2-3}\cmidrule(lr){4-6}
& Linear & Interacted & $\lambda{=}\zeta^2$ & $\lambda{=}2\zeta^2$ & $\lambda{=}3\zeta^2$ \\
\midrule

\multicolumn{6}{l}{} \\
Bias        & 0.002 & 0.001 & 0.000 & 0.000 & 0.000 \\
SD          & 0.106 & 0.152 & 0.089 & 0.086 & 0.084 \\
RMSE        & 0.106 & 0.152 & 0.089 & 0.086 & 0.084 \\
CI Coverage (HC0) & 0.911 & 0.715 & 0.956 & 0.958 & 0.958 \\
CI Coverage (HC2) & 0.960 & 0.942 & 0.955 & 0.957 & 0.958 \\
\bottomrule
\end{tabular}
\end{table}
 
We simulate randomized assignments by masking one outcome per unit according to a prescribed randomization rule.
This setup lets us assess each estimator’s finite-sample performance, bias, standard deviation, RMSE, and the average    coverage of confidence intervals, under realistic distributions for potential outcomes and covariates.
We consider three treatment-assignment mechanisms.  
In the first two designs, treatment assignments are independent across units.
In the first design, 
treatment assignment probabilities depend on covariate values.  
Specifically, we draw a random Gaussian vector in $\R^k$, compute for each unit $i$ the cosine similarity $c_i$ between this vector and its covariate vector, and set
\[
p_i = \max\!\left\{\min\!\left\{\frac{1+c_i}{2},\, 0.8\right\},\, 0.2\right\}.
\]
The second design is simple random assignment, with all treatment assignment probabilities equal to $0.5$.
The third design is complete random assignment, with half the units assigned to treatment.

Simulation results are based on $100{,}000$ repetitions. 
\Cref{tab:sim_results_dep_simple_statehood,tab:sim_results_dep_simple_lightbulb} report results for covariate-dependent assignment;
\Cref{tab:sim_results_simple_statehood,tab:sim_results_simple_lightbulb} for simple random assignment with equal probabilities; and
\Cref{tab:sim_results_complete_statehood,tab:sim_results_complete_lightbulb} for complete random assignment.
For simple random assignment, we compare our estimator, LOORA--HT, with two benchmarks: (i) linear regression-adjustment, and (ii) the regression adjustment of \cite{lin2013agnostic} with additional interacted terms. For complete random assignment, in addition to these, we also compare our estimator, LOORA--DM, with the exact bias correction approach of \cite{chang2024exact}, which we refer to as ``Debiased'' in \Cref{tab:sim_results_complete_statehood,tab:sim_results_complete_lightbulb}.

We report the coverage probabilities of confidence intervals in \Cref{tab:sim_results_dep_simple_statehood,tab:sim_results_dep_simple_lightbulb,tab:sim_results_simple_statehood,tab:sim_results_simple_lightbulb,tab:sim_results_complete_statehood,tab:sim_results_complete_lightbulb} using heteroskedasticity-consistent (HC) variance estimators. HC0 denotes White's original HC estimator \citep{white1980heteroskedasticity}, while HC2 denotes the leverage-adjusted HC estimator of \citet{mackinnon1985some}. The variance estimator in \Cref{thm:loora-ht-est-var} corresponds to HC0,
whereas the variance estimator in \Cref{thm:loora-dm-est-var} corresponds to
HC2.

All simulations use recentered covariates and an intercept. That is, regression adjustment uses the matrix of regressors 
\[
\bM = \begin{bmatrix}
\bX - \overline{\bX} & ~~ \boldsymbol{1}
\end{bmatrix}.
\]
The value of $\zeta$ in the tables is equal to $\norm{\bM}{2,\infty}$.

Throughout Tables \ref{tab:sim_results_dep_simple_statehood}--\ref{tab:sim_results_complete_lightbulb}, LOORA estimators consistently achieve lower bias, higher precision, and lower RMSE than competing methods. 
Moreover, the confidence intervals for LOORA estimators based on heteroskedasticity-robust standard errors exhibit coverage rates remarkably close to the nominal level, whereas the alternative methods tend to undercover substantially.
These results demonstrate the potential role of LOORA and ridge regularization to stabilize regression adjustment without sacrificing unbiasedness or inferential validity.

\begin{table}[t]
\centering
\caption{Simulation results on Statehood dataset (simple randomization, $\alpha=0.05$).}
\label{tab:sim_results_simple_statehood}
\footnotesize
\begin{tabular}{lccccc}
\toprule
& \multicolumn{5}{c}{\textbf{ATE Estimators}} \\
\cmidrule(lr){2-6}
& \multicolumn{2}{c}{OLS} & 
\multicolumn{3}{c}{LOORA-HT} \\
\cmidrule(lr){2-3}
\cmidrule(lr){4-6}
& Linear & Interacted &
$\lambda{=}\zeta^2$ & $\lambda{=}2\zeta^2$ & $\lambda{=}3\zeta^2$ \\
\midrule

\multicolumn{6}{l}{} \\
Bias        & 0.047 & 0.048 & 
0.000 & 0.000 & 0.000 \\
SD          & 0.512 & 0.628 & 
0.299 & 0.285 & 0.279 \\
RMSE        & 0.514 & 0.630 & 
0.299 & 0.285 & 0.279 \\
CI Coverage (HC0) & 0.724 & 0.149 & 
0.952 & 0.954 & 0.955 \\
CI Coverage (HC2) & 0.927 & 0.720 & 
0.953 & 0.953 & 0.955 \\
\bottomrule
\end{tabular}
\end{table}

\begin{table}[t]
\centering
\caption{Simulation results on Lightbulb dataset (simple randomization, $\alpha=0.05$).}
\label{tab:sim_results_simple_lightbulb}
\footnotesize
\begin{tabular}{lccccc}
\toprule
& \multicolumn{5}{c}{\textbf{ATE Estimators}} \\
\cmidrule(lr){2-6}
& \multicolumn{2}{c}{OLS} & 
\multicolumn{3}{c}{LOORA-HT} \\
\cmidrule(lr){2-3}
\cmidrule(lr){4-6}
& Linear & Interacted & 
$\lambda{=}\zeta^2$ & $\lambda{=}2\zeta^2$ & $\lambda{=}3\zeta^2$ \\
\midrule

\multicolumn{6}{l}{} \\
Bias        & 0.002 & 0.001 & 
0.000 & 0.000 & 0.001 \\
SD          & 0.105 & 0.134 & 
0.084 & 0.082 & 0.081 \\
RMSE        & 0.105 & 0.134 & 
0.084 & 0.082 & 0.081 \\
CI Coverage (HC0)& 0.909 & 0.733 & 
0.961 & 0.962 & 0.963 \\
CI Coverage (HC2) & 0.961 & 0.947 & 
0.961 & 0.961 & 0.962 \\
\bottomrule
\end{tabular}
\end{table}

\begin{table}[t]
\centering
\caption{Simulation results on Statehood dataset (complete randomization, $\alpha=0.05$).}
\label{tab:sim_results_complete_statehood}
\footnotesize
\begin{tabular}{lccccccc}
\toprule
& \multicolumn{7}{c}{\textbf{ATE Estimators}} \\
\cmidrule(lr){2-8}
& \multicolumn{2}{c}{OLS} & \multicolumn{2}{c}{Debiased} & \multicolumn{3}{c}{LOORA-DM} \\
\cmidrule(lr){2-3}\cmidrule(lr){4-5}\cmidrule(lr){6-8}
& Linear & Interacted & Linear & Interacted & $\lambda{=}\zeta^2$ & $\lambda{=}2\zeta^2$ & $\lambda{=}3\zeta^2$ \\
\midrule

\multicolumn{8}{l}{} \\
Bias        & 0.046 & 0.043 & 0.001 & 0.039 & 0.001 & 0.001 & 0.001 \\
SD          & 0.501 & 0.615 & 0.311 & 0.410 & 0.282 & 0.274 & 0.272 \\
RMSE        & 0.503 & 0.616 & 0.311 & 0.412 & 0.282 & 0.274 & 0.272 \\
CI Coverage (HC0)& 0.726 & 0.104 & 0.901 & 0.147 & 0.955 & 0.958 & 0.959 \\
CI Coverage (HC2)& 0.928 & 0.701 & 0.985 & 0.723 & 0.961 & 0.963 & 0.964 \\

\bottomrule
\end{tabular}
\end{table}

\begin{table}[t]
\centering
\caption{Simulation results on Lightbulb dataset (complete randomization, $\alpha=0.05$).}
\label{tab:sim_results_complete_lightbulb}
\footnotesize
\begin{tabular}{lccccccc}
\toprule
& \multicolumn{7}{c}{\textbf{ATE Estimators}} \\
\cmidrule(lr){2-8}
& \multicolumn{2}{c}{OLS} & \multicolumn{2}{c}{Debiased} & \multicolumn{3}{c}{LOORA-DM} \\
\cmidrule(lr){2-3}\cmidrule(lr){4-5}\cmidrule(lr){6-8}
& Linear & Interacted & Linear & Interacted & $\lambda{=}\zeta^2$ & $\lambda{=}2\zeta^2$ & $\lambda{=}3\zeta^2$ \\
\midrule

\multicolumn{8}{l}{} \\
Bias        & 0.002 & 0.003 & 0.000 & 0.001 & 0.000 & 0.000 & 0.000 \\
SD          & 0.105 & 0.128 & 0.085 & 0.089 & 0.082 & 0.080 & 0.080 \\
RMSE        & 0.105 & 0.128 & 0.085 & 0.089 & 0.082 & 0.080 & 0.080 \\
CI Coverage (HC0)& 0.910 & 0.748 & 0.964 & 0.886 & 0.963 & 0.962 & 0.963 \\
CI Coverage (HC2)& 0.962 & 0.951 & 0.989 & 0.985 & 0.964 & 0.965 & 0.965 \\
\bottomrule
\end{tabular}
\end{table}

\section{Conclusion}

This article develops leave-one-out regression-adjusted estimators of average treatment effects under simple and complete random assignment.  
We show that these estimators are exactly unbiased in finite populations, achieve the asymptotic efficiency of the interacted regression of \citet{lin2013agnostic}, and admit closed-form expressions for their variance.  
Regularization through leave-one-out ridge adjustment ensures robustness to high-leverage observations and stabilizes inference even in small samples.  
Our theoretical results establish a design-based foundation for regression adjustment that unifies classical unbiasedness, modern efficiency results, and practical inference procedures.  
Empirical evaluations confirm the estimators’ excellent finite-sample performance across different designs.  

Natural next steps include extending the framework to stratified or covariate-adaptive designs and to clustered or networked experiments, and incorporating high-dimensional or nonparametric adjustments. Taken together, these directions define a broad agenda for design-based regression adjustment in modern experimental settings.

\bibliographystyle{ecta-fullname}
\bibliography{main}

\begin{appendix}

\section{Linear Algebra Tools}
\label{sec:bounds-lev-loo}

\begin{lemma}[(Lemma 3.2 of \cite{miller1974unbalanced})]
\label{lemma:sol-change-in-row-removal}
Let 
\begin{align*}
\widehat{\bbeta} = \argmin_{\bb\in \mathbb R^k} \norm{\by - \bX \bb}{2}^2  ~~ \text{and} ~~ \widehat{\bbeta}^{(-i)} = \argmin_{\bb\in \mathbb R^k} \norm{\by_{-i} - \bX_{-i} \bb}{2}^2.
\end{align*}
Then
\begin{align*}
\widehat\bbeta - \widehat{\bbeta}^{(-i)} = \frac{(\bX^\top \bX)^{-1} \bx_i (y_i - \bx_i^\top \widehat\bbeta)}{1 - h_{ii}},
\end{align*}
where $h_{ii} = \bx_i^\top (\bX^\top \bX)^{-1} \bx_i$ is the leverage score for observation $i$.
\end{lemma}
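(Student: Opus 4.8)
The plan is to derive the identity from the rank-one \emph{downdate} relating the full and leave-one-out normal equations, combined with the Sherman--Morrison formula. First I would record the two elementary decompositions
\[
\bX^\top\bX = \bX_{-i}^\top\bX_{-i} + \bx_i\bx_i^\top,
\qquad
\bX^\top\by = \bX_{-i}^\top\by_{-i} + \bx_i y_i,
\]
which hold because deleting row $i$ from $\bX$ (and entry $i$ from $\by$) removes exactly the $i$-th summand from each Gram/cross-product sum. Rearranged, these give $\bX_{-i}^\top\bX_{-i} = \bX^\top\bX - \bx_i\bx_i^\top$ and $\bX_{-i}^\top\by_{-i} = \bX^\top\by - \bx_i y_i$, expressing the leave-one-out objects as rank-one perturbations of the full-sample objects.

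Next I would apply the Sherman--Morrison formula to invert the downdated Gram matrix:
\[
(\bX_{-i}^\top\bX_{-i})^{-1}
= (\bX^\top\bX)^{-1}
+ \frac{(\bX^\top\bX)^{-1}\bx_i\bx_i^\top(\bX^\top\bX)^{-1}}{1-h_{ii}},
\]
where the denominator is precisely the Sherman--Morrison scalar $1 - \bx_i^\top(\bX^\top\bX)^{-1}\bx_i = 1-h_{ii}$, nonzero whenever $\bX_{-i}$ retains full column rank (the implicit hypothesis of the lemma). Substituting both this inverse and $\bX_{-i}^\top\by_{-i} = \bX^\top\by - \bx_i y_i$ into $\widehat{\bbeta}^{(-i)} = (\bX_{-i}^\top\bX_{-i})^{-1}\bX_{-i}^\top\by_{-i}$ and expanding produces four terms. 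Writing $\bA = (\bX^\top\bX)^{-1}$, using $\widehat\bbeta = \bA\bX^\top\by$ and $h_{ii} = \bx_i^\top\bA\bx_i$, the two terms that carry the factor $y_i$ combine through the identity $1 + h_{ii}/(1-h_{ii}) = 1/(1-h_{ii})$ into $-\bA\bx_i y_i/(1-h_{ii})$, while the remaining term equals $\bA\bx_i\bx_i^\top\widehat\bbeta/(1-h_{ii})$. Hence $\widehat{\bbeta}^{(-i)} = \widehat\bbeta - \bA\bx_i\,(y_i - \bx_i^\top\widehat\bbeta)/(1-h_{ii})$, which is the claimed formula after moving $\widehat\bbeta$ to the left.

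There is no genuinely deep obstacle here; the result is classical. The only point requiring care is the bookkeeping in the final expansion---tracking which of the four terms contributes to the coefficient of $y_i$ versus that of $\bx_i^\top\widehat\bbeta$---together with the observation that the residual $y_i - \bx_i^\top\widehat\bbeta$ is exactly the combination that emerges once those terms are collected over the common denominator $1-h_{ii}$. An alternative route, should one wish to avoid Sherman--Morrison, is to verify the identity directly: multiply the claimed expression for $\widehat{\bbeta}^{(-i)}$ on the left by $\bX_{-i}^\top\bX_{-i} = \bX^\top\bX - \bx_i\bx_i^\top$ and check that the result equals $\bX_{-i}^\top\by_{-i} = \bX^\top\by - \bx_i y_i$, again using $\bX^\top\bX\,\widehat\bbeta = \bX^\top\by$; this reduces to the same scalar simplification $h_{ii} + (1-h_{ii}) = 1$.
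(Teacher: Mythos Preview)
Your proof is correct and follows the standard Sherman--Morrison route for this classical identity. The paper does not supply its own proof of this lemma; it simply cites it as Lemma~3.2 of \cite{miller1974unbalanced}, so your argument fills in exactly what the paper defers to the reference, and by the conventional method.
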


\begin{lemma}[\cite{alaoui2015fast,fahrbach2022subquadratic}]
\label{lem:ridge-lev-formula}
Let $\bU\bSigma\bV$ be the compact SVD of $\bX$. Let $r$ be the rank of $\bX$ and $\sigma_1,\ldots,\sigma_r$ be its singular values. Then
\[
h_{\lambda ii} = \sum_{j=1}^r \frac{\sigma_j^2 u_{ij}^2}{\sigma_j^2 + \lambda}\,.
\]
\end{lemma}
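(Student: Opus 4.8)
The plan is to substitute the compact SVD directly into the definition $h_{\lambda ii} = \bx_i^\top(\bX^\top\bX + \lambda\bI)^{-1}\bx_i$ and simplify, being careful throughout to invoke only the identity $\bV^\top\bV = \bI_r$, which remains valid when $\bX$ is rank-deficient and $\bV \in \R^{k\times r}$ is not square. Writing $\bX = \bU\bSigma\bV^\top$ with $\bU^\top\bU = \bV^\top\bV = \bI_r$ and $\bSigma = \operatorname{diag}(\sigma_1,\dots,\sigma_r)$, I would first read off the two elementary facts that drive the computation: the transposed $i$-th row of $\bX$ is $\bx_i = \bV\bSigma\,\bu_i$, where $\bu_i := (u_{i1},\dots,u_{ir})^\top$ is the transposed $i$-th row of $\bU$, and $\bX^\top\bX = \bV\bSigma^2\bV^\top$.

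The only step with any content is the identity
\[
(\bX^\top\bX + \lambda\bI)^{-1}\bV\bSigma = \bV(\bSigma^2 + \lambda\bI)^{-1}\bSigma .
\]
I would verify it by left-multiplying the right-hand side by $\bX^\top\bX + \lambda\bI = \bV\bSigma^2\bV^\top + \lambda\bI$, using $\bV^\top\bV = \bI_r$ to collapse $\bV\bSigma^2\bV^\top\bV(\bSigma^2+\lambda\bI)^{-1}\bSigma = \bV\bSigma^2(\bSigma^2+\lambda\bI)^{-1}\bSigma$, and then combining with the $\lambda\bV(\bSigma^2+\lambda\bI)^{-1}\bSigma$ term to get $\bV(\bSigma^2+\lambda\bI)(\bSigma^2+\lambda\bI)^{-1}\bSigma = \bV\bSigma$. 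Since $\bX^\top\bX + \lambda\bI$ is invertible (for $\lambda > 0$ always, and for $\lambda = 0$ under the paper's standing full-column-rank assumption on $\bX$), the displayed identity follows.

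Combining the pieces then gives
\[
h_{\lambda ii} = \bx_i^\top(\bX^\top\bX+\lambda\bI)^{-1}\bx_i = \bu_i^\top\bSigma\bV^\top\,\bV(\bSigma^2+\lambda\bI)^{-1}\bSigma\,\bu_i = \bu_i^\top\,\bSigma(\bSigma^2+\lambda\bI)^{-1}\bSigma\,\bu_i ,
\]
again using $\bV^\top\bV = \bI_r$, and since $\bSigma(\bSigma^2+\lambda\bI)^{-1}\bSigma$ is diagonal with $j$-th entry $\sigma_j^2/(\sigma_j^2+\lambda)$, this equals $\sum_{j=1}^r \sigma_j^2 u_{ij}^2/(\sigma_j^2+\lambda)$, as claimed.

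I do not expect a genuine obstacle here: the argument is a short sequence of SVD manipulations. The one point that needs care is the rank-deficient case, where $\bV\bV^\top \neq \bI_k$, so the computation must be arranged — exactly as above — to use only $\bV^\top\bV = \bI_r$ and never the (false) identity $\bV\bV^\top = \bI_k$.
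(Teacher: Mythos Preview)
Your proof is correct: the direct SVD substitution and the key identity $(\bX^\top\bX+\lambda\bI)^{-1}\bV\bSigma=\bV(\bSigma^2+\lambda\bI)^{-1}\bSigma$, verified using only $\bV^\top\bV=\bI_r$, cleanly yield the formula. The paper does not supply its own proof of this lemma---it is stated with a citation to \cite{alaoui2015fast,fahrbach2022subquadratic} and used as a black box---so there is nothing to compare against; your argument is the standard one and would serve well as a self-contained justification.
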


\begin{proof}[Proof of \Cref{lem:ridge-lev-upper-bound}]
Let $\bU\bSigma\bV$ be the compact SVD of $\bX$ and $r$ be its rank. 
Then by \Cref{lem:ridge-lev-formula}, 
\begin{align}
\label{eq:ridge-lev-formula}
h_i(\bX,\lambda) = \sum_{j=1}^r \frac{\sigma_j^2 u_{ij}^2}{\sigma_j^2 + \lambda}
\end{align}
Since $\bU$ is an orthogonal matrix, for all $i\in[n]$,
\[
\sum_{j=1}^r u_{ij}^2 \leq 1.
\]
By definition of $\zeta$, for all $i\in [n]$,
\[
\ell_i=\norm{\bx_i}{2}^2 = \sum_{j=1}^r \sigma_j^2 u_{ij}^2 \leq \zeta^{2}.
\]
Therefore by \eqref{eq:ridge-lev-formula},
\begin{align*}
h_i(\bX,c \cdot \zeta^{2}) \leq \sum_{j=1}^r \frac{\sigma_j^2 u_{ij}^2}{\sigma_j^2 + c\ell_i}
\end{align*}
Let $Y$ be a random variable that is equal to $\sigma_j^2$ with probability $u_{ij}^2$, and take value 0 with probability ($1-\sum_{j=1}^ru^2_{ij}$), and $\phi:\R_{\geq 0} \to \R_{\geq 0}$ be a function with $\phi(g) = \frac{g}{g+c\ell_i}$. Therefore
\[
\sum_{j=1}^r \frac{\sigma_j^2 u_{ij}^2}{\sigma_j^2 + c\ell_i} = \E[\phi(Y)].
\]
Now by Jensen's ienquality, since $\phi$ is a concave function,  $\E[\phi(Y)] \leq \phi(\E[Y])$. Therefore recalling that $\ell_i = \sum_{j=1}^r \sigma_j^2 u_{ij}^2$, we have
\[
\sum_{j=1}^r \frac{\sigma_j^2 u_{ij}^2}{\sigma_j^2 + c\ell_i} \leq \phi(\E[Y]) = \frac{\ell_i}{\ell_i + c\ell_i} = \frac{1}{1+c}\,.
\]
\end{proof}

\begin{lemma}
\label{lemma:recenter}
Let $\bM \in \R^{n \times k}$ and $\bmu \in \R^{n}$. Then
\[
\min_{\bb \in \R^{k}} \norm{(\bM - \overline{\bM}) \bb - (\bmu - \overline{\bmu})}{2}^2
= \min_{\bb \in \R^{k+1}} \norm{\begin{bmatrix}
\bM  & ~~ \boldsymbol{1}
\end{bmatrix} \bb - \bmu}{2}^2.
\]
\end{lemma}
\begin{proof}
Let $\bN = \begin{bmatrix}
(\bM - \overline{\bM}) & ~~\boldsymbol{1}
\end{bmatrix}$ and let $\bH^{(\bN)}$ denote the projection matrix onto the column space of $\bN$, i.e.,
$\bH^{(\bN)} = \bN(\bN^\top \bN)^{-1} \bN^\top$.
Then
\[
\min_{\bb \in \R^{k+1}} \norm{\bN \bb - \bmu}{2}^2
= \norm{\bH^{(\bN)} \bmu - \bmu}{2}^2.
\]
Since $\boldsymbol{1}$ is orthogonal to the columns of $\bM - \overline{\bM}$, we have
\begin{align}
\label{eq:recenter-1}
\bH^{(\bN)} \bmu
= \bH^{(\bM - \overline{\bM})} \bmu + \bH^{(\boldsymbol{1})} \bmu
= \bH^{(\bM - \overline{\bM})} (\bmu - \overline{\bmu}) + \overline{\bmu},
\end{align}
where the second equality uses $\overline{\bmu} = \bH^{(\boldsymbol{1})} \bmu$ and
$\bH^{(\bM - \overline{\bM})} \overline{\bmu} = 0$.
Therefore,
\begin{align*}
\norm{\bH^{(\bN)} \bmu - \bmu}{2}^2
&= \norm{\bH^{(\bM - \overline{\bM})} (\bmu - \overline{\bmu}) - (\bmu - \overline{\bmu})}{2}^2 \\
&= \min_{\bb \in \R^{k}} \norm{(\bM - \overline{\bM}) \bb - (\bmu - \overline{\bmu})}{2}^2.
\end{align*}
Finally note that the span of columns of $\begin{bmatrix}
\bM & ~~ \boldsymbol{1}
\end{bmatrix}$ and $\bN$ are the same. Thus
\[
\min_{\bb \in \R^{k+1}} \norm{\bN \bb - \bmu}{2}^2 =  \min_{\bb \in \R^{k+1}} \norm{\begin{bmatrix}
\bM & ~~ \boldsymbol{1}
\end{bmatrix} \bb - \bmu}{2}^2,
\]
which concludes the result.
\end{proof}

\section{Main Proofs}

LOORA-HT depends on a set of $n$ leave-one-out regression coefficients $\widehat{\bbeta}_\lambda^{(-i)}$.
Next lemma characterizes the difference between the full-sample regularized least-squares coefficient vector and its leave-one-out counterparts 
(see \Cref{lemma:sol-change-in-row-removal} 
for a simpler version for least-squares).
\begin{lemma}
\label{cor:sum-sqr-err-with-removal}
Let $\lambda\geq 0$,
\begin{align*}
\widehat{\bbeta}_{\lambda} = \argmin_{\bb\in \mathbb R^k} \norm{\by - \bX \bb}{2}^2+ \lambda \norm{\bb}{2}^2,  ~~ \text{\ and\ } ~~ \widehat{\bbeta}_{\lambda}^{(-i)} = \argmin_{\bb\in \mathbb R^k} \norm{\by_{-i} - \bX_{-i} \bb}{2}^2+ \lambda \norm{\bb}{2}^2.
\end{align*}
Then, 
$$\widehat\bbeta_\lambda - \widehat{\bbeta}_\lambda^{(-i)} = \frac{(\bX^\top \bX + \lambda \bI)^{-1} \bx_i (y_i - \bx_i^\top \widehat\bbeta_\lambda)}{1 - h_{\lambda ii}},$$
and
$$y_i - \bx_i^\top \widehat\bbeta_{\lambda} = (1- h_{\lambda ii})(y_i - \bx_i^\top \widehat{\bbeta}_{\lambda}^{(-i)}),$$
where $h_{\lambda ii} = \bx_i^\top (\bX^\top \bX + \lambda \bI)^{-1} \bx_i$. Moreover, 
\[
\bx_i^\top\widehat{\bbeta}_{\lambda}^{(-i)} = 
\frac{\bx_i^\top(\bX^\top \bX + \lambda \bI)^{-1} \bX_{-i}^\top \by_{-i}}{1-h_{\lambda ii}}.
\]
\end{lemma}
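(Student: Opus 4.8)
The plan is to reduce the ridge problem to an ordinary least-squares problem by the data-augmentation device and then apply the row-deletion formula of \Cref{lemma:sol-change-in-row-removal}. Set $\bX_\lambda = \begin{bmatrix}\bX \\ \sqrt{\lambda}\,\bI\end{bmatrix}\in\R^{(n+k)\times k}$ and $\widecheck{\by} = \begin{bmatrix}\by \\ \boldsymbol{0}\end{bmatrix}\in\R^{n+k}$, so that $\norm{\bX_\lambda\bb-\widecheck{\by}}{2}^2 = \norm{\bX\bb-\by}{2}^2 + \lambda\norm{\bb}{2}^2$ for every $\bb\in\R^k$. Consequently $\widehat{\bbeta}_\lambda$ is the OLS coefficient vector for $(\widecheck{\by},\bX_\lambda)$; and since deleting one of the first $n$ rows leaves the penalty block $\sqrt{\lambda}\,\bI$ untouched, $\widehat{\bbeta}_\lambda^{(-i)}$ is the OLS coefficient vector for $(\widecheck{\by}_{-i},(\bX_\lambda)_{-i})$ with $(\bX_\lambda)_{-i}=\begin{bmatrix}\bX_{-i}\\\sqrt{\lambda}\,\bI\end{bmatrix}$. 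The $i$-th leverage score of $\bX_\lambda$ is exactly $h_{\lambda ii}=\bx_i^\top(\bX^\top\bX+\lambda\bI)^{-1}\bx_i$, and $(\bX_\lambda^\top\bX_\lambda)^{-1}=(\bX^\top\bX+\lambda\bI)^{-1}$. (For $\lambda>0$ the augmented design has full column rank automatically; for $\lambda=0$ this uses the standing assumption that $\bX$, hence $\bX_{-i}$, has full column rank.)

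With this reduction in place the first identity is immediate: applying \Cref{lemma:sol-change-in-row-removal} to $(\widecheck{\by},\bX_\lambda)$ with row $i$ removed and substituting the quantities above gives
\[
\widehat{\bbeta}_\lambda-\widehat{\bbeta}_\lambda^{(-i)}=\frac{(\bX^\top\bX+\lambda\bI)^{-1}\bx_i\,(y_i-\bx_i^\top\widehat{\bbeta}_\lambda)}{1-h_{\lambda ii}}.
\]
For the second identity I would left-multiply this by $\bx_i^\top$ to get $\bx_i^\top\widehat{\bbeta}_\lambda-\bx_i^\top\widehat{\bbeta}_\lambda^{(-i)} = h_{\lambda ii}(y_i-\bx_i^\top\widehat{\bbeta}_\lambda)/(1-h_{\lambda ii})$, then rearrange to obtain $y_i-\bx_i^\top\widehat{\bbeta}_\lambda^{(-i)} = (y_i-\bx_i^\top\widehat{\bbeta}_\lambda)\bigl(1+h_{\lambda ii}/(1-h_{\lambda ii})\bigr) = (y_i-\bx_i^\top\widehat{\bbeta}_\lambda)/(1-h_{\lambda ii})$, which is the claimed residual identity after clearing the denominator.

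For the third identity I would again use the $\bx_i^\top$-projection of the first identity, now solved for $\bx_i^\top\widehat{\bbeta}_\lambda^{(-i)}$ and written over a common denominator: $\bx_i^\top\widehat{\bbeta}_\lambda^{(-i)} = \bigl(\bx_i^\top\widehat{\bbeta}_\lambda - h_{\lambda ii}y_i\bigr)/(1-h_{\lambda ii})$. It then suffices to recognize the numerator: since $\widehat{\bbeta}_\lambda=(\bX^\top\bX+\lambda\bI)^{-1}\bX^\top\by$ and $\bX^\top\by=\bX_{-i}^\top\by_{-i}+\bx_i y_i$, we have $\bx_i^\top\widehat{\bbeta}_\lambda - h_{\lambda ii}y_i = \bx_i^\top(\bX^\top\bX+\lambda\bI)^{-1}(\bX^\top\by-\bx_i y_i) = \bx_i^\top(\bX^\top\bX+\lambda\bI)^{-1}\bX_{-i}^\top\by_{-i}$, completing the argument.

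I do not expect a genuine obstacle here. The only step that warrants care is the reduction bookkeeping in the first paragraph: one must check that removing observation $i$ from the augmented system really reproduces the ridge leave-one-out problem (the penalty block must not be altered) and that the Gram matrices, leverage scores, and right-hand sides all transfer correctly under the augmentation. Once these identifications are verified, the three displayed identities follow from \Cref{lemma:sol-change-in-row-removal} by elementary algebra; in particular no further inversion lemma (e.g., Sherman--Morrison) is needed, since the third identity is extracted from the first rather than from the leave-one-out Gram matrix directly.
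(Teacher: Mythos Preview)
Your proposal is correct and follows essentially the same approach as the paper: both use the data-augmentation device $\bX_\lambda=\begin{bmatrix}\bX\\\sqrt{\lambda}\,\bI\end{bmatrix}$, $\widecheck{\by}=\begin{bmatrix}\by\\\boldsymbol{0}\end{bmatrix}$ to reduce to ordinary least squares and then invoke \Cref{lemma:sol-change-in-row-removal}. Your write-up is in fact more complete, since you spell out the algebra for all three displayed identities (including the third, which the paper's proof leaves implicit).
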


\begin{proof}
Let
\begin{align*}
\bX_{\lambda} = \begin{bmatrix}
\bX \\ \sqrt{\lambda} \bI
\end{bmatrix} \in \R^{(n+k) \times k} ~, ~ \text{and} ~~
\widecheck{\by} = \begin{bmatrix}
\by \\ \boldsymbol{0}
\end{bmatrix} \in \R^{n+k} \,.
\end{align*}
One can observe that for any $\bb$,
\begin{align*}
\norm{\bX_{\lambda} \bb - \widecheck{\by}}{2}^2 = \norm{\bX \bb - \by}{2}^2 + \lambda \norm{\bb}{2}^2\,.
\end{align*}
Therefore it immediately follows by \Cref{lemma:sol-change-in-row-removal} that
\begin{align}
\label{eq:sol-dif-loo}
\widehat\bbeta_\lambda - \widehat{\bbeta}_\lambda^{(-i)} = \frac{(\bX^\top \bX + \lambda \bI)^{-1} \bx_i (y_i - \bx_i^\top \widehat\bbeta_\lambda)}{1 - h_{\lambda ii}},
\end{align}
and therefore
\begin{align*}
& \bx_i^\top \widehat\bbeta_\lambda - \bx_i^\top \widehat{\bbeta}_\lambda^{(-i)} = \frac{\bx_i^\top(\bX^\top \bX + \lambda \bI)^{-1} \bx_i (y_i - \bx_i^\top \widehat\bbeta_\lambda)}{1 - h_i(\bX,\lambda)} = \frac{h_i(\bX,\lambda)(y_i - \bx_i^\top \widehat\bbeta_\lambda)}{1 - h_i(\bX, \lambda)}
\\
\implies &
\bx_i^\top \widehat\bbeta_\lambda - y_i = (1- h_i(\bX, \lambda)) (\bx_i^\top \widehat{\bbeta}_\lambda^{(-i)} - y_i)\,.
\end{align*}
Thus,
\begin{align*}
\bx_i^\top\widehat{\bbeta}_{\lambda}^{(-i)} & = \bx_i^\top \widehat\bbeta_\lambda -\frac{h_{\lambda ii} (y_i - \bx_i^\top \widehat\bbeta_\lambda)}{1 - h_{\lambda ii}} = \frac{\bx_i^\top \widehat\bbeta_\lambda - h_{\lambda ii} y_i}{1 - h_{\lambda ii}}
\\ & =
\frac{\bx_i^\top(\bX^\top \bX + \lambda \bI)^{-1} \bX^\top \by - \bx_i^\top(\bX^\top \bX + \lambda \bI)^{-1} \bx_{i} y_i}{1 - h_{\lambda ii}}
\\ & =
\frac{
\bx_i^\top(\bX^\top \bX + \lambda \bI)^{-1} \bX_{-i}^\top \by_{-i}}{1 - h_{\lambda ii}}.
\end{align*}
\end{proof}

\Cref{cor:sum-sqr-err-with-removal} characterizes the deviation of the solution, estimator, and residual when we solve a leave-one-out regression instead of the full regression. The key observation is that all deviations depend on the leverage score of the removed row, which in turn helps characterize the robustness of our leave-one-out estimators to observation removal at inference time. This issue is particularly important in small-sample settings, where a single observation may have a large (close to $1$) leverage score and thus exert a disproportionate influence on quantities such as variance estimates and confidence intervals, which is undesirable. Hence, this result provides one of the first motivations for favoring regularized regression adjustment methods: as seen in \Cref{lem:ridge-lev-upper-bound}, regularization enables us to uniformly reduce leverage scores across all observations.

\begin{lemma}
\label{lem:HT-LOO-Binary-estimate-diff} 
Let 
\begin{align*}
g_i & = \frac{(r_i\mu_i-\bx_{i}^\top \bbeta_\lambda) - \bx_i^\top (\widetilde{\bX}^\top \widetilde{\bX} + \lambda \bI)^{-1} \widetilde{\bX}_{-i}^{\top} (\bz\bt/\bq)_{-i}}{q_i(1-\widetilde{h}_{\lambda ii})}\,.
\end{align*}
Under simple random assignment, the LOORA-HT estimator in \Cref{alg:ht-loo-reg-adj} satisfies 
\begin{align*}
\widehat{\tau}_{\textup{LHT}} - \tau = \frac{1}{n} \bz^\top \bg.
\end{align*}
\end{lemma}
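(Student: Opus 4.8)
The plan is to start from the defining expression for $\widehat{\tau}_{\textup{LHT}}$ in \Cref{alg:ht-loo-reg-adj}, namely $\widehat{\tau}_{\textup{LHT}} = \frac{1}{n}\sum_{i=1}^n \frac{z_i}{q_i}(y_i - \bx_i^\top \widehat{\bbeta}_\lambda^{(-i)})$, and to show that each summand equals $z_i g_i$. Recall that in \Cref{alg:ht-loo-reg-adj} the coefficient $\widehat{\bbeta}_\lambda^{(-i)}$ is obtained from the ridge regression of $\widetilde{\by}_{-i}$ on $\widetilde{\bX}_{-i}$. By the last identity in \Cref{cor:sum-sqr-err-with-removal} (applied to the design $\widetilde{\bX}$ and response $\widetilde{\by}$), we have
\[
\bx_i^\top \widehat{\bbeta}_\lambda^{(-i)}
= r_i\,\widetilde{\bx}_i^\top \widehat{\bbeta}_\lambda^{(-i)}
= r_i\cdot\frac{\widetilde{\bx}_i^\top(\widetilde{\bX}^\top\widetilde{\bX}+\lambda\bI)^{-1}\widetilde{\bX}_{-i}^\top \widetilde{\by}_{-i}}{1-\widetilde{h}_{\lambda ii}},
\]
where I use $\bx_i = r_i\widetilde{\bx}_i$ since $\widetilde{\bX} = \bR^{-1}\bX$. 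This converts the leave-one-out coefficient into a full-design quantity with an explicit $(1-\widetilde h_{\lambda ii})^{-1}$ factor.

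Next I would substitute the decomposition $\widetilde{\by} = \bmu + \bz\bt/\bq$ established just before \eqref{eq:real_lambda_ridge}, so that $\widetilde{\by}_{-i} = \bmu_{-i} + (\bz\bt/\bq)_{-i}$. The term $\widetilde{\bx}_i^\top(\widetilde{\bX}^\top\widetilde{\bX}+\lambda\bI)^{-1}\widetilde{\bX}_{-i}^\top \bmu_{-i}$ must then be rewritten in terms of the full-sample ridge coefficient $\bbeta_\lambda$ defined in \eqref{eq:real_lambda_ridge}; again using \Cref{cor:sum-sqr-err-with-removal} (now with response $\bmu$), this equals $\frac{1}{1-\widetilde h_{\lambda ii}}\big(\widetilde{\bx}_i^\top\bbeta_\lambda - \widetilde h_{\lambda ii}\,\mu_i\big)\cdot(1-\widetilde h_{\lambda ii})$—more directly, the identity $\bx_i^\top\widehat{\bbeta}^{(-i)} = \frac{\bx_i^\top(\bX^\top\bX+\lambda\bI)^{-1}\bX_{-i}^\top\by_{-i}}{1-h_{\lambda ii}}$ tells me $\widetilde{\bx}_i^\top(\widetilde{\bX}^\top\widetilde{\bX}+\lambda\bI)^{-1}\widetilde{\bX}_{-i}^\top\bmu_{-i} = (1-\widetilde h_{\lambda ii})\,\widetilde{\bx}_i^\top\bbeta_\lambda^{(-i)}$, and combined with the residual identity $\mu_i - \widetilde{\bx}_i^\top\bbeta_\lambda = (1-\widetilde h_{\lambda ii})(\mu_i - \widetilde{\bx}_i^\top\bbeta_\lambda^{(-i)})$ this lets me replace the leave-one-out projection by $\widetilde{\bx}_i^\top\bbeta_\lambda$ up to the correct leverage factors. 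Assembling the pieces, $y_i - \bx_i^\top\widehat{\bbeta}_\lambda^{(-i)}$ becomes $\frac{1}{1-\widetilde h_{\lambda ii}}\big[(r_i\mu_i - \bx_i^\top\bbeta_\lambda) - \bx_i^\top(\widetilde{\bX}^\top\widetilde{\bX}+\lambda\bI)^{-1}\widetilde{\bX}_{-i}^\top(\bz\bt/\bq)_{-i}\big]$ after observing that $y_i = q_i\cdot(\text{appropriate scaling})$ interacts with the $\widetilde y_i = \frac{1}{q_i}\big(\frac{1-p_i}{p_i}\big)^{z_i/2} y_i$ definition so that the diagonal $i$-term contributes exactly $y_i$ itself; dividing by $q_i$ yields $g_i$. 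Then $\frac{z_i}{q_i}(y_i - \bx_i^\top\widehat{\bbeta}_\lambda^{(-i)}) = z_i g_i$, and summing over $i$ and dividing by $n$ gives $\widehat{\tau}_{\textup{LHT}} = \frac{1}{n}\bz^\top\bg$. Since $\tau = \frac{1}{n}\bz^\top\E[\bg]$ will follow from $\E[\widetilde{\by}]=\bmu$ (the cross term has mean zero and the diagonal terms reproduce $y_i^{(1)}-y_i^{(0)}$), the claimed identity $\widehat{\tau}_{\textup{LHT}} - \tau = \frac{1}{n}\bz^\top\bg$ must be read with $\bg$ already centered, or—more precisely—the statement is an algebraic identity for the random variable $\widehat{\tau}_{\textup{LHT}}$, and the $-\tau$ is absorbed once one notes $\E[z_i g_i] = y_i^{(1)} - y_i^{(0)}$; I would verify this expectation computation explicitly as the closing step.

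The main obstacle I anticipate is the careful bookkeeping of the scaling factors: the proxy $\widetilde y_i$ carries a factor $\frac{1}{q_i}(\frac{1-p_i}{p_i})^{z_i/2}$, while the estimator reweights by $\frac{z_i}{q_i}$ and the design is rescaled by $\bR^{-1}$, so one must track how $r_i$, $q_i$, $p_i$, and $z_i$ combine—in particular confirming that $r_i\mu_i$ (rather than $\mu_i$) is the quantity that appears after multiplying through by $r_i$, and that the diagonal contribution of $\widetilde y_i$ within $\widetilde{\bx}_i^\top(\widetilde{\bX}^\top\widetilde{\bX}+\lambda\bI)^{-1}\widetilde{\bX}_{-i}^\top$ applied to the full vector collapses correctly because the $i$-th row is excluded. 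This is routine but error-prone; the cleanest route is to do all manipulations on the rescaled design $\widetilde{\bX}$ and the rescaled response $\widetilde{\by}$, invoke \Cref{cor:sum-sqr-err-with-removal} twice (once with response $\widetilde{\by}$, once with response $\bmu$), and only convert back to $\bx_i$, $y_i$ at the very end.
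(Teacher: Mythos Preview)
Your overall strategy---invoke \Cref{cor:sum-sqr-err-with-removal} with design $\widetilde{\bX}$, split $\widetilde{\by}=\bmu+\bz\bt/\bq$, and reduce the leave-one-out prediction to $\widetilde{\bx}_i^\top\bbeta_\lambda$ via the residual identity---is exactly what the paper does. But there is a real gap in how you handle the $-\tau$.

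You assert that $y_i-\bx_i^\top\widehat{\bbeta}_\lambda^{(-i)}$ equals $\frac{1}{1-\widetilde h_{\lambda ii}}\big[(r_i\mu_i-\bx_i^\top\bbeta_\lambda)-\bx_i^\top(\widetilde{\bX}^\top\widetilde{\bX}+\lambda\bI)^{-1}\widetilde{\bX}_{-i}^\top(\bz\bt/\bq)_{-i}\big]$. This is false: $y_i$ is $y_i^{(1)}$ or $y_i^{(0)}$ depending on $d_i$, whereas $r_i\mu_i=(1-p_i)y_i^{(1)}+p_iy_i^{(0)}$ is a deterministic convex combination of both potential outcomes. They do not coincide. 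Consequently your downstream claim $\frac{z_i}{q_i}(y_i-\bx_i^\top\widehat{\bbeta}_\lambda^{(-i)})=z_ig_i$ is wrong, and that is why you cannot locate $\tau$ at the end. The attempted rescue via ``$\tau=\frac{1}{n}\bz^\top\E[\bg]$'' does not make sense ($\bz$ is random), and the identity in the lemma is a purely algebraic, realization-by-realization equality---no expectation is taken.

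The missing step is the deterministic identity
\[
\frac{z_i}{q_i}\,y_i-(y_i^{(1)}-y_i^{(0)})=\frac{z_i}{q_i}\,r_i\mu_i,
\]
which you can check directly in the two cases $d_i=1$ and $d_i=0$ using $r_i\mu_i=(1-p_i)y_i^{(1)}+p_iy_i^{(0)}$. This is precisely where $\tau$ enters: once you start from $\widehat{\tau}_{\textup{LHT}}-\tau=\frac{1}{n}\sum_i\big[\frac{z_i}{q_i}(y_i-\bx_i^\top\widehat{\bbeta}_\lambda^{(-i)})-(y_i^{(1)}-y_i^{(0)})\big]$, the identity above replaces $y_i$ by $r_i\mu_i$ and the rest of your argument (the two applications of \Cref{cor:sum-sqr-err-with-removal} and the decomposition of $\widetilde{\by}$) goes through cleanly to yield $z_ig_i$.
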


\begin{proof} First, notice that
\begin{align*}
\widehat{\tau}_{\textup{LHT}} - \tau = \frac{1}{n}\sum_{i=1}^n\left( \frac{z_i}{q_i} (y_i - \bx_{i}^\top \widehat{\bbeta}_\lambda^{(-i)}) - (y^{(1)}_i - y^{(0)}_i)\right).
\end{align*}
Because $r_i\mu_i = (1-p_i)y_i^{(1)}+p_i y_i^{(0)}$, it follows that 
\[
\frac{z_i}{q_i} (y_i - \bx_{i}^\top \widehat{\bbeta}_\lambda^{(-i)}) - (y^{(1)}_i - y^{(0)}_i) = \frac{z_i}{q_i}(r_i \mu_i -\bx_{i}^\top \widehat{\bbeta}_\lambda^{(-i)}).
\]
Algebraic manipulations yield $
\widetilde \by -\bmu = \bz\bt/\bq$ and therefore, 
\begin{align*}
\widehat{\bbeta}_\lambda^{(-i)} = (\widetilde{\bX}_{-i}^\top \widetilde{\bX}_{-i} + \lambda \bI)^{-1} \widetilde{\bX}_{-i}^{\top} (\bmu + \bz\bt/\bq)_{-i}\,.
\end{align*}
By \Cref{cor:sum-sqr-err-with-removal},
\begin{align*}
\mu_i-\widetilde\bx_{i}^\top \widehat{\bbeta}^{(-i)}_{\lambda} &=  \mu_i-\widetilde\bx_{i}^\top (\widetilde{\bX}_{-i}^\top \widetilde{\bX}_{-i} + \lambda \bI)^{-1} \widetilde{\bX}_{-i}^{\top} (\bmu + \bz\bt/\bq)_{-i}\\
&=\mu_i - \widetilde\bx_{i}^\top (\widetilde{\bX}_{-i}^\top \widetilde{\bX}_{-i} + \lambda \bI)^{-1} \widetilde{\bX}_{-i}^{\top} \bmu_{-i}-\widetilde\bx_{i}^\top (\widetilde{\bX}_{-i}^\top \widetilde{\bX}_{-i} + \lambda \bI)^{-1} \widetilde{\bX}_{-i}^\top (\bz\bt/\bq)_{-i}
\\&= \frac{\mu_i - \widetilde\bx_{i}^\top \bbeta_{\lambda} - \widetilde\bx_i^\top (\widetilde{\bX}^\top \widetilde{\bX} + \lambda \bI)^{-1} \widetilde{\bX}_{-i}^{\top} (\bz\bt/\bq)_{-i}}{1-\widetilde h_{\lambda ii}},
\end{align*}
where $\bbeta_\lambda$ is defined as in \Cref{eq:real_lambda_ridge}. 
Therefore, 
\begin{align*}
\frac{z_i}{q_i}(r_i\mu_i - \bx_{i}^\top \widehat{\bbeta}_\lambda^{(-i)}) &=\frac{z_i}{q_i}r_i(\mu_i - \widetilde\bx_{i}^\top \widehat{\bbeta}_\lambda^{(-i)})\\
&=z_i\frac{(r_i\mu_i - \bx_{i}^\top \bbeta_{\lambda}) - \bx_i^\top (\widetilde{\bX}^\top \widetilde{\bX} + \lambda \bI)^{-1} \widetilde{\bX}_{-i}^{\top} (\bz\bt/\bq)_{-i}}{q_i(1-\widetilde h_{\lambda ii})}.
\end{align*}
\end{proof}

\subsection{Omitted Proofs of \Cref{sec:reg-adj-ht}}

\begin{proof}[Proof of \Cref{thm:HT-LOO-Binary-Variance}]
Because $z_i$ is independent of  $\widehat{\bbeta}^{(-i)}$,
\begin{align*}
\E\left[\frac{z_i}{q_i} (y_i - \bx_{i}^\top \widehat{\bbeta}^{(-i)}) \right] 
& = 
\E [y^{(1)}_i - \bx_i^\top \widehat{\bbeta}^{(-i)} | z_i=1] - \E[ y^{(0)}_i - \bx_i^\top \widehat{\bbeta}^{(-i)} | z_i = -1]
\\ & =
y^{(1)}_i - y^{(0)}_i\,.
\end{align*}
As a result, $\widehat{\tau}_{\textup{LHT}}$ is unbiased. By \Cref{lem:HT-LOO-Binary-estimate-diff}
\begin{align*}
\E[(\widehat{\tau}_{\textup{LHT}} - \tau)^2] = \frac{1}{n^2} \E[\bz^\top \bg \bg^\top \bz].
\end{align*}
By \Cref{lem:HT-LOO-Binary-estimate-diff}, we have $\widehat{\tau} -\tau = \frac{1}{n} \bz^\top \bg$.
Therefore
\begin{align*}
&\E[(\widehat{\tau} - \tau)^2]
= 
\frac{1}{n^2} \E[\bz^\top \bg \bg^\top \bz].
\end{align*}
To calculate $\E[\bz^\top \bg \bg^\top \bz]$, first notice that $z_i^2=1$ and $\E[1/q_i^2]=1/r_i^2$. Therefore, 
\begin{align*}
\E\left[\left(\frac{z_i( r_i \mu_i-\bx_{i}^\top \bbeta_\lambda)}{q_i(1-\widetilde{h}_{\lambda ii})}\right)^2 \right] 
& = 
\frac{(r_i\mu_i-\bx_{i}^\top \bbeta_\lambda )^2}{r_{i}^2(1-\widetilde{h}_{\lambda ii})^2}
=
\frac{(\mu_i-\widetilde{\bx}_{i}^\top \bbeta_\lambda )^2}{(1-\widetilde{h}_{\lambda ii})^2}.
\end{align*}
Because $z_i/q_i$ and $z_j/q_j$ are independent for $i\neq j$, and $\E[z_i/q_i] = 0$, then
\begin{align*}
\E\Bigg[\frac{z_i(r_i \mu_i - \bx_{i}^\top \bbeta_\lambda)}{q_i(1-\widetilde{h}_{\lambda ii})} & \frac{ z_i\bx_i^\top (\widetilde{\bX}^\top \widetilde{\bX} + \lambda \bI)^{-1} \widetilde{\bX}_{-i}^{\top} (\bz\bt/\bq)_{-i}}{q_i(1-\widetilde{h}_{\lambda ii})} \Bigg]\\ &=
\E\Bigg[\frac{(r_i \mu_i - \bx_{i}^\top \bbeta_\lambda)}{q_i^2}\Bigg]  \frac{ \bx_i^\top (\widetilde{\bX}^\top \widetilde{\bX} + \lambda \bI)^{-1} \widetilde{\bX}_{-i}^{\top} }{(1-\widetilde{h}_{\lambda ii})^2}\,\E[(\bz\bt/\bq)_{-i}]
=0.
\end{align*}
Moreover for $i\neq j$,
\begin{align*}
\E\Bigg[\frac{z_i(r_i \mu_i - \bx_{i}^\top \bbeta_\lambda)}{q_i(1-\widetilde{h}_{\lambda ii})} & \frac{ z_j\bx_j^\top (\widetilde{\bX}^\top \widetilde{\bX} + \lambda \bI)^{-1} \widetilde{\bX}_{-j}^{\top} (\bz\bt/\bq)_{-j}}{q_j(1-\widetilde{h}_{\lambda jj})} \Bigg]\\ &=
\E\Bigg[\frac{z_j}{q_j}\Bigg]
\E\Bigg[\frac{z_i(r_i \mu_i - \bx_{i}^\top \bbeta_\lambda)}{q_i(1-\widetilde{h}_{\lambda ii})}\frac{\bx_j^\top (\widetilde{\bX}^\top \widetilde{\bX} + \lambda \bI)^{-1} \widetilde{\bX}_{-j}^{\top} (\bz\bt/\bq)_{-j}}{(1-\widetilde{h}_{\lambda jj})} \Bigg]
=0.
\end{align*}
Using $\E[1/q_i^2]=1/r_i^2$ and independent between $z_i/q_i$ and $z_j/q_j$ for $i\neq j$, we obtain
\begin{align*}
\E\left[\left(\frac{z_i \bx_i^\top (\widetilde{\bX}^\top \widetilde{\bX} + \lambda \bI)^{-1} \widetilde{\bX}_{-i}^{\top} (\bz\bt/\bq)_{-i}}{q_i(1-\widetilde{h}_{\lambda ii})} \right)^2 \right]
&=\frac{1}{(1-\widetilde h_{\lambda ii})^2}\E\Bigg[\Bigg(\sum_{j\neq i} \widetilde h_{\lambda ij} t_j z_j/q_j\Bigg)^2\Bigg]\\
& =
\sum_{j \neq i} \frac{\widetilde h_{\lambda ij}^{2} t_j^2}{r_j^2 (1-\widetilde{h}_{\lambda ii})^2}.
\end{align*}
Because  $z_i/q_i$ and $z_j/q_j$ are independent for $i\neq j$ and $\E[z_i/q_i]=0$, 
\begin{align*}
\E\left[\frac{z_i(r_i\mu_i-\bx_{i}^\top \bbeta_{\lambda})}{q_i(1-\widetilde{h}_{\lambda ii})} \frac{z_j( r_j\mu_j-\bx_{j}^\top \bbeta_{\lambda} )}{q_j(1-\widetilde{h}_{\lambda jj})} \right] = 0.
\end{align*}
Finally,
\begin{align*}
\E&\left[\frac{z_i \bx_i^\top (\widetilde{\bX}^\top \widetilde{\bX} + \lambda \bI)^{-1} \widetilde{\bX}_{-i}^{\top} (\bz\bt/\bq)_{-i}}{q_i(1-\widetilde{h}_{\lambda ii})}  \frac{z_j \bx_j^\top (\widetilde{\bX}^\top \widetilde{\bX} + \lambda \bI)^{-1} \widetilde{\bX}_{-j}^{\top} (\bz\bt/\bq)_{-j}}{q_j(1-\widetilde{h}_{\lambda jj})} \right]
\\ 
&=\E\left[\frac{z_i r_i \widetilde\bx_i^\top (\widetilde{\bX}^\top \widetilde{\bX} + \lambda \bI)^{-1} \widetilde{\bX}_{-i}^{\top} (\bz\bt/\bq)_{-i}}{q_i(1-\widetilde{h}_{\lambda ii})}  \frac{z_j r_j\widetilde\bx_j^\top (\widetilde{\bX}^\top \widetilde{\bX} + \lambda \bI)^{-1} \widetilde{\bX}_{-j}^{\top} (\bz\bt/\bq)_{-j}}{q_j(1-\widetilde{h}_{\lambda jj})} \right]\\
&=\E\left[\frac{r_ir_j \widetilde h_{\lambda ij}^2 t_it_j}{q_i^2q_j^2(1-\widetilde{h}_{\lambda ii})(1-\widetilde{h}_{\lambda jj})}\right]
\\& =
\frac{\widetilde h_{\lambda ij}^2 t_it_j}{r_ir_j(1-\widetilde{h}_{\lambda ii})(1-\widetilde{h}_{\lambda jj})}.
\end{align*}
Combining all the above, we have
\begin{align*}
\frac{1}{n^2}& \left(\sum_{i=1}^n \frac{(\mu_i-\widetilde\bx_{i}^\top \bbeta_\lambda)^2}{(1-\widetilde{h}_{\lambda ii})^2} + \sum_{i=1}^{n} \sum_{j\neq i} \frac{(\widetilde h_{\lambda ij} t_j)^2}{r_j^2 (1-\widetilde{h}_{\lambda ii})^2} + \frac{\widetilde h_{\lambda ij}^2 ~t_i t_j}{r_i r_j(1-\widetilde{h}_{\lambda ii})(1-\widetilde{h}_{\lambda jj})} \right)
\\ & = 
\frac{1}{n^2} \left( \sum_{i=1}^n \frac{(\mu_i-\widetilde\bx_{i}^\top \bbeta_\lambda)^2}{(1-\widetilde{h}_{\lambda ii})^2} + \sum_{i=1}^{n-1} \sum_{j = i+1}^n \widetilde h_{\lambda ij}^2 \left(\frac{ t_j}{r_j(1-\widetilde{h}_{\lambda ii})} + \frac{ t_i}{r_i(1-\widetilde{h}_{\lambda jj})}\right)^2 \right).
\end{align*}
\end{proof}

\begin{proof}[Proof of \Cref{thm:asym_dist_HT_wo_lim}]
We first consider the case where $\norm{\widetilde{\bX}\bbeta_{\lambda} - \bmu}{2} \to \infty$.
We write
\begin{align*}
\widehat{\tau}_{\textup{LHT}}^*
&= \frac{1}{n} \sum_{i=1}^n \frac{z_i}{q_i}
\bigl( y_i - \bx_i^\top \bbeta_{\lambda} \bigr), ~~
\widehat{\tau}_{\textup{LHT}}
= \frac{1}{n} \sum_{i=1}^n \frac{z_i}{q_i}
\bigl( y_i - \bx_i^\top \widehat{\bbeta}^{(-i)}_{\lambda} \bigr),
\end{align*}
and decompose
\[
\frac{\sqrt{n}(\widehat{\tau}_{\textup{LHT}} - \tau)}
{\norm{\widetilde{\bX}\bbeta_{\lambda} - \bmu}{2} / \sqrt{n}}
=
A_n + B_n,
\]
where
\begin{align*}
A_n
&= 
\frac{\sqrt{n}(\hat \tau_{\textup{LHT}}^* - \tau)}{\norm{\widetilde{\bX}\bbeta_{\lambda} - \bmu}{2} / \sqrt{n}}, ~~
B_n= 
\frac{\sqrt{n}(\hat \tau_{\textup{LHT}} - \widehat{\tau}_{\textup{LHT}}^*)}
{\norm{\widetilde{\bX}\bbeta_{\lambda} - \bmu}{2} / \sqrt{n}}.
\end{align*}
  
\paragraph{Step 1: Oracle CLT.}
Under independent treatment assignment with probabilities bounded away from zero and one, bounded outcomes, and uniformly bounded covariates, the random variables $\frac{z_i}{q_i}(y_i - \bx_i^\top \bbeta_{\lambda})$ are bounded. Therefore
\[
\Var\!\left(\sqrt{n}(\widehat{\tau}_{\textup{LHT}}^* - \tau)\right)
=
\frac{1}{n} \sum_{i=1}^n
\Var(\frac{z_i}{q_i}(y_i - \bx_i^\top \bbeta_{\lambda}))
=
\frac{1}{n}
\norm{\widetilde{\bX}\bbeta_{\lambda} - \bmu}{2}^2,
\]
by definition of $\widetilde{\bX}\bbeta_{\lambda} - \bmu$. Moreover, the assumptions imply that the random variables $\frac{z_i}{q_i}(y_i - \bx_i^\top \bbeta_{\lambda}) - \tau_i$ are bounded. Let $C_1$ be a constant such that for all $i \in \mathbb{N}$, $\abs{\frac{z_i}{q_i}(y_i - \bx_i^\top \bbeta_{\lambda}) - \tau_i} \leq C_1$. Then for any constant $\delta > 0$,
\begin{align*}
&
\frac{1}{\Var\!\left(n(\widehat{\tau}_{\textup{LHT}}^* - \tau)\right)^{1+\delta/2}} \sum_{i=1}^n \E \left[\abs{\frac{z_i}{q_i}(y_i - \bx_i^\top \bbeta_{\lambda}) - \tau_i}^{2+\delta} \right]
\\ & \leq
\frac{C_1^{\delta}}{\Var\!\left(n(\widehat{\tau}_{\textup{LHT}}^* - \tau)\right)^{1+\delta/2}} \sum_{i=1}^n \E \left[\abs{\frac{z_i}{q_i}(y_i - \bx_i^\top \bbeta_{\lambda}) - \tau_i}^{2} \right]
\\ &
= \frac{C_1^{\delta}}{\Var\!\left(n(\widehat{\tau}_{\textup{LHT}}^* - \tau)\right)^{\delta/2}} = \frac{C_1^{\delta}}{\norm{\widetilde{\bX}\bbeta_{\lambda} - \bmu}{2}^{\delta}}.
\end{align*}
Thus since $\norm{\widetilde{\bX}\bbeta_{\lambda} - \bmu}{2} \to \infty$ and $C_1$ is a constant, we have
\[
\lim_{n \to \infty} \frac{1}{\Var\!\left(n(\widehat{\tau}_{\textup{LHT}}^* - \tau)\right)^{1+\delta/2}} \sum_{i=1}^n \E \left[\abs{\frac{z_i}{q_i}(y_i - \bx_i^\top \bbeta_{\lambda}) - \tau_i}^{2+\delta} \right] = 0.
\]
Therefore, by the Lyapunov central limit theorem (\cite{billingsley2012probability}),
\[
A_n
=
\frac{\sqrt{n}(\widehat{\tau}_{\textup{LHT}}^* - \tau)}
{\norm{\widetilde{\bX}\bbeta_{\lambda} - \bmu}{2} / \sqrt{n}}
\dto
\mathcal{N}(0,1).
\]

\paragraph{Step 2: Negligibility of the plug-in error.}
Then
\[
\widehat{\tau}_{\textup{LHT}}^* - \widehat{\tau}_{\textup{LHT}} 
=
\frac{1}{n}\sum_{i=1}^n
\frac{z_i}{q_i}\,\bx_i^\top (\widehat{\bbeta}^{(-i)}_{\lambda} - \bbeta_{\lambda}).
\]
Therefore by triangle inequality and Cauchy–Schwarz inequality, we have
\begin{align*}
\abs{\widehat{\tau}_{\textup{LHT}}^* - \widehat{\tau}_{\textup{LHT}}} & 
\leq \frac{1}{n} \abs{\sum_{i=1}^n
\frac{z_i}{q_i}\,\bx_i^\top (\widehat{\bbeta}_{\lambda} - \bbeta_{\lambda})} + \frac{1}{n} \abs{\sum_{i=1}^n
\frac{z_i}{q_i}\,\bx_i^\top (\widehat{\bbeta}^{(-i)}_{\lambda} - \widehat{\bbeta}_{\lambda})} 
\\ &
= \frac{1}{n} \abs{(\bX^\top(\bz/\bq))^{\top} (\widehat{\bbeta}_{\lambda} - \bbeta_{\lambda})} + \frac{1}{n} \abs{\sum_{i=1}^n
\frac{z_i}{q_i}\,\bx_i^\top (\widehat{\bbeta}^{(-i)}_{\lambda} - \widehat{\bbeta}_{\lambda})} 
\\ &
\leq \frac{1}{n} \norm{(\bX^\top(\bz/\bq))}{2} \norm{ \widehat{\bbeta}_{\lambda} - \bbeta_{\lambda}}{2} + \frac{1}{n} \abs{\sum_{i=1}^n
\frac{z_i}{q_i}\,\bx_i^\top (\widehat{\bbeta}^{(-i)}_{\lambda} - \widehat{\bbeta}_{\lambda})}.
\end{align*}
We have
\begin{align}
\nonumber
\norm{ \widehat{\bbeta}_{\lambda} - \bbeta_{\lambda}}{2}
& = 
\norm{(\widetilde{\bX}^\top \widetilde{\bX} + \lambda \bI)^{-1}
  \widetilde{\bX}^\top (\bz \bt/\bq)}{2} = O(n^{-1}) \norm{
  \widetilde{\bX}^\top (\bz \bt/\bq)}{2}
  \\ & = \label{eq:beta-hat-beta-diff-loora-ht}
  O_p(k^{1/2} n^{-1/2}),
\end{align}
where the last line holds using \Cref{ass:boundedness,ass:positivity} and \Cref{lemma:concentrate_vec_sum}. Moreover, invoking \Cref{lemma:concentrate_vec_sum} also gives $\norm{(\bX^\top(\bz/\bq))}{2} = O_{p}(\sqrt{n})$, concluding $\frac{1}{n} \norm{(\bX^\top(\bz/\bq))}{2} \norm{ \widehat{\bbeta}_{\lambda} - \bbeta_{\lambda}}{2} = O_{p}(1/n)$.
Moreover,
\begin{align*}
\hat{\bbeta}_{\lambda} - \hat{\bbeta}^{(-i)}_{\lambda}
  =
  \frac{
    (\widetilde{\bX}^\top \widetilde{\bX} + \lambda \bI)^{-1}
    \widetilde{\bx}_i
    (\widetilde{y}_i - \widetilde{\bx}_i^\top \hat{\bbeta}_{\lambda})
  }{
    1 - \widetilde{h}_{\lambda ii}
  } = O(n^{-1}),
\end{align*}
Therefore
\begin{align*}
\frac{1}{n} \abs{\sum_{i=1}^n
\frac{z_i}{q_i}\,\bx_i^\top (\widehat{\bbeta}^{(-i)}_{\lambda} - \widehat{\bbeta}_{\lambda})} = O(n^{-1})
\end{align*}

Combining the above, we have
\[
\abs{\widehat{\tau}_{\textup{LHT}}^* - \widehat{\tau}_{\textup{LHT}}} = O_p(n^{-1}).
\]

\paragraph{Step 3: Conclusion.}
Since $A_n \dto \mathcal{N}(0,1)$ and $B_n \pto 0$, Slutsky's Theorem implies
\[
\frac{\sqrt{n}(\widehat{\tau}_{\textup{LHT}} - \tau)}
{\norm{\widetilde{\bX}\bbeta_{\lambda} - \bmu}{2} / \sqrt{n}}
\dto
\mathcal{N}(0,1).
\]
\end{proof}

\begin{proof}[Proof of \Cref{lemma:var-of-HLA}]
To observe consistency, note that
\[
\frac{1}{n} \sum_{i=1}^n \frac{z_i}{q_i} \bigl(y_i - \bx_i^\top \hat{\bgamma}_{n}^{(-i)}\bigr)
= \frac{1}{n} \sum_{i=1}^n \frac{z_i}{q_i} \bigl(y_i - \bx_i^\top \bgamma_{n}\bigr)
+ \frac{1}{n} \sum_{i=1}^n \frac{z_i}{q_i} \bx_i^\top \bigl( \bgamma_{n} - \hat{\bgamma}_{n}^{(-i)}\bigr).
\]
Since $\bgamma_n$ is deterministic, the first term on the right-hand side above is an unbiased estimate of $\tau$. Moreover, for the second term, for some constant $C$, we have
\begin{align}
\nonumber
\abs{\frac{1}{n} \sum_{i=1}^n \frac{z_i}{q_i} \bx_i^\top \bigl( \bgamma_{n} - \hat{\bgamma}_{n}^{(-i)}\bigr)}
& \leq 
\frac{1}{n} \sum_{i=1}^n \abs{\frac{z_i}{q_i}} \norm{\bx_i}{2} \norm{\bgamma_{n} - \hat{\bgamma}_{n}^{(-i)}}{2}
\\ & \leq
\label{eq:err-term-go-to-zero}
C \max_{i \in [n]}\norm{\hat{\bgamma}_{n}^{(-i)} - \bgamma_n}{2},
\end{align}
where the second inequality follows from \Cref{ass:boundedness}. Therefore, by \eqref{eq:HLT-assumption},
\begin{align*}
\frac{1}{n} \sum_{i=1}^n \frac{z_i}{q_i} \bx_i^\top \bigl( \bgamma_{n} - \hat{\bgamma}_{n}^{(-i)}\bigr) \pto 0,
\end{align*}
and $\hatt_{\textup{HLA}}$ is consistent. For the variance, note that 
\[
\hatt_{\textup{HLA}} - \tau = \frac{1}{n} \sum_{i=1}^n \frac{z_i}{q_i} \bigl(r_i\mu_i - \bx_i^\top \hat{\bgamma}_{n}^{(-i)}\bigr).
\]
Therefore
\begin{align}
\label{eq:HLA-dif-decomp}
\sqrt{n}(\hatt_{\textup{HLA}} - \tau) & = \frac{1}{\sqrt{n}} \sum_{i=1}^n \frac{z_i}{q_i} \left(r_i\mu_i - \bx_i^\top \bgamma_{n}\right) + \frac{1}{\sqrt{n}} \sum_{i=1}^n \frac{z_i}{q_i} \left(\bx_i^\top \bgamma_{n}- \bx_i^\top \hat{\bgamma}_{n}\right)
\\ & + \nonumber
\frac{1}{\sqrt{n}} \sum_{i=1}^n \frac{z_i}{q_i} \left(\bx_i^\top \hat{\bgamma}_{n} - \bx_i^\top \hat{\bgamma}_{n}^{(-i)}\right).
\end{align}
By Cauchy–Schwarz inequality,
\begin{align*}
\frac{1}{\sqrt{n}} \sum_{i=1}^n \frac{z_i}{q_i} \bigl(\bx_i^\top \bgamma_{n} - \bx_i^\top \hat{\bgamma}_{n}\bigr) & = \frac{1}{\sqrt{n}}\left( \sum_{i=1}^n \frac{z_i}{q_i} \bx_i \right)^\top (\bgamma_{n} - \hat{\bgamma}_{n})
\\ & \leq
\frac{1}{\sqrt{n}} \norm{\sum_{i=1}^n \frac{z_i}{q_i} \bx_i}{2} \norm{\bgamma_{n} - \hat{\bgamma}_{n}}{2}.
\end{align*}
Note that by \Cref{ass:boundedness,ass:positivity} and \Cref{lemma:concentrate_vec_sum}, $\norm{\sum_{i=1}^n \frac{z_i}{q_i} \bx_i}{2} = O_p(\sqrt{n})$.
Therefore since $\norm{\widetilde{\bX} \bgamma_{n} - \bmu}{2} = \Omega(\sqrt{n})$ by \eqref{eq:HLT-assumption},
\begin{align}
\label{eq:HLA-dif-decomp-1}
\frac{\frac{1}{\sqrt{n}} \sum_{i=1}^n \frac{z_i}{q_i} \bigl(\bx_i^\top \bgamma_{n} - \bx_i^\top \hat{\bgamma}_{n}\bigr)}{\norm{\widetilde{\bX} \bgamma_{n} - \bmu}{2} / \sqrt{n}} \pto 0.
\end{align}
Moreover by triangle inequality and Cauchy–Schwarz inequality,
\begin{align*}
\abs{\frac{1}{\sqrt{n}} \sum_{i=1}^n \frac{z_i}{q_i} \bigl(\bx_i^\top \hat{\bgamma}_{n} - \bx_i^\top \hat{\bgamma}_{n}^{(-i)}\bigr)} & \leq
\frac{1}{\sqrt{n}} \sum_{i=1}^n \norm{\frac{z_i}{q_i} \bx_i}{2} \norm{\hat{\bgamma}_{n} - \hat{\bgamma}_{n}^{(-i)}}{2} 
\\ & \leq
\frac{1}{\sqrt{n}} \left( \sum_{i=1}^n \norm{\frac{z_i}{q_i} \bx_i}{2} \right) \max_{i \in [n]} \norm{\hat{\bgamma}_{n} - \hat{\bgamma}_{n}^{(-i)}}{2}
\\ & = o_p(1),
\end{align*}
where the last equality follows from \Cref{ass:boundedness,ass:positivity} and \eqref{eq:HLT-assumption}. Therefore since $\norm{\widetilde{\bX} \bgamma_{n} - \bmu}{2} = \Omega(\sqrt{n})$,
\begin{align}
\label{eq:HLA-dif-decomp-2}
\frac{\frac{1}{\sqrt{n}} \sum_{i=1}^n \frac{z_i}{q_i} \bigl(\bx_i^\top \hat{\bgamma}_{n} - \bx_i^\top \hat{\bgamma}_{n}^{(-i)}\bigr)}{\norm{\widetilde{\bX} \bgamma_{n} - \bmu}{2} /\sqrt{n}} \pto 0.
\end{align}
Now similar to the proof of \Cref{thm:asym_dist_HT_wo_lim}, the Lyapunov central limit theorem (\cite{billingsley2012probability}) implies that 
\begin{align}
\label{eq:HLA-dif-decomp-3}
\frac{\frac{1}{\sqrt{n}} \sum_{i=1}^n \frac{z_i}{q_i} \bigl(r_i\mu_i - \bx_i^\top \bgamma_{n}\bigr)}{\norm{\widetilde{\bX} \bgamma_{n} - \bmu}{2} /\sqrt{n}} \dto \mathcal{N}(0,1).
\end{align}
Therefore by \eqref{eq:HLA-dif-decomp}, \eqref{eq:HLA-dif-decomp-1}, \eqref{eq:HLA-dif-decomp-2}, and \eqref{eq:HLA-dif-decomp-3}, and Slutsky's theorem,
\[
\frac{\sqrt{n}(\hatt_{\textup{HLA}} - \tau)}{\norm{\widetilde{\bX} \bgamma_{n} - \bmu}{2} /\sqrt{n}} \dto \mathcal{N}(0,1).
\]
\end{proof}

\begin{proof}[Proof of \Cref{thm:loora-ht-efficiency}]
Note that $\bbeta_{0}=\argmin_{\bgamma} \norm{\widetilde{\bX} \bgamma - \bmu}{2}$ where $\bbeta_{0}$ is $\bbeta_{\lambda}$ for $\lambda=0$. Therefore the sequence $\norm{\widetilde{\bX} \bbeta_0 - \bmu}{2}$ is pointwise smaller than or equal to the sequence $\norm{\widetilde{\bX} \bgamma_n - \bmu}{2}$ for any choice of $\bgamma_1,\bgamma_2,\ldots$. Therefore the results is implied by \Cref{lemma:var-of-HLA,thm:asym_dist_HT_wo_lim}.
\end{proof}

\begin{proof}[Proof of \Cref{thm:loora-ht-est-var}]

Let $s_i
  = z_i \left(\frac{y_i - \bx_i^\top \bbeta_{\lambda}}
     {q_i} \right) - \,\tau$,
     \[
     \widehat{V}_{\textup{LHT}} = \frac{1}{n} \sum_{i=1}^n \widehat{s}_i^2, ~~ \text{and} ~~ \widetilde{V}_{\textup{LHT}} = \frac{1}{n} \sum_{i=1}^n s_i^2.
     \]
Let $\tau_i = y^{(1)}_i - y^{(0)}_i$. Then
\begin{align*}
\E\left[ \widetilde{V}_{\textup{LHT}} \right] & = \frac{1}{n} \sum_{i=1}^n \E\left[\left(z_i \frac{y_i - \bx_i^\top \bbeta_{\lambda}}{q_i} - \tau \right)^2 \right]
\\ & =
\frac{1}{n} \sum_{i=1}^n \E \left[\left(z_i \frac{y_i - \bx_i^\top \bbeta_{\lambda}}{q_i} - \tau_i \right)^2 \right] 
\\ & + 
2 \E \left[z_i \frac{y_i - \bx_i^\top \bbeta_{\lambda}}{q_i} - \tau_i \right] (\tau_i - \tau) + (\tau_i - \tau)^2.
\end{align*}
Therefore since
\begin{align*}
\E \left[\left(z_i \frac{y_i - \bx_i^\top \bbeta_{\lambda}}{q_i} - \tau_i \right)^2 \right] = (\mu_i - \widetilde{\bx}_i^\top \bbeta_{\lambda})^2, ~~ \text{and} ~~ \E \left[z_i \frac{y_i - \bx_i^\top \bbeta_{\lambda}}{q_i} - \tau_i \right] = 0,
\end{align*}
we have
\begin{align*}
\E\left[ \widetilde{V}_{\textup{LHT}} \right] = \frac{1}{n} \norm{\widetilde{\bX} \bbeta_{\lambda} - \bmu}{2}^2 + \frac{1}{n} \sum_{i=1}^n (\tau_i - \tau)^2 \geq \frac{1}{n} \norm{\widetilde{\bX} \bbeta_{\lambda} - \bmu}{2}^2 = n \E\left[ (\widehat{\tau}_{\textup{LHT}}^* - \tau)^2 \right].
\end{align*}
Therefore $\widetilde{V}_{\textup{LHT}}$ is a conservative estimate for the variance of $\sqrt{n}\widehat{\tau}_{\textup{LHT}}^*$. Also note that we have $\Var(\tilde{V}_{\textup{LHT}}) = \frac{1}{n^2}\sum_{i=1}^n\Var(s_i^2) = O(n^{-1})$. So Using Chebyshev we get
\begin{equation*}
    \tilde{V}_{\textup{LHT}} -\E[\tilde{V}_{\textup{LHT}}] \pto 0.
\end{equation*}
For $\hat{V}_{\textup{LHT}}$ we can write:
\begin{align*}
    \hat{V}_{\textup{LHT}} - \tilde{V}_{\textup{LHT}} = \frac{1}{n}\sum_{i=1}^n\Delta_i(\hat{s}_i+s_i),
\end{align*}
where $\Delta_i =  \hat{s}_i - s_i = \frac{z_i}{q_i}\bx_i^\top(\bbeta_\lambda - \hat{\bbeta}_{\lambda}^{(-i)}) - (\hatt_{\textup{LHT}} - \tau)$. Using the proof of \Cref{thm:asym_dist_HT_wo_lim}, it follows that $\norm{\bbeta_\lambda - \hat{\bbeta}_\lambda^{(-i)}}{2} = O_p(n^{-1/2})$, and $\hatt_{\textup{LHT}} - \tau = O_{p}(n^{-1/2})$. Also we have $s_i = O(1)$.
These two results imply that $\hat{s}_i = O_{p}(1)$. These in conclusoin give $\hat{V}_{\textup{LHT}}  - \tilde{V}_{\textup{LHT}} \pto 0$, and therefore we get $\hat{V}_{\textup{LHT}} - \E[\tilde{V}_{\textup{LHT}}] \pto 0$. Finally, since $V_n=\E[\tilde{V}_{\textup{LHT}}]$ is bounded away from zero for all $n \in \mathbb{N}$, $\hat{V}_{\textup{LHT}} - \E[\tilde{V}_{\textup{LHT}}] \pto 0$ implies
\[
\frac{\hat{V}_{\textup{LHT}}}{V_n} \pto 1.
\]
\end{proof}

\subsection{Omitted Proofs of \Cref{sec:reg-adj-dim}}
\begin{proof}[Proof of \Cref{thm:LOORA_DiM_Var}]
We first show that the estimator is unbiased. Let $\lam{i} = (\bX_{-i}^\top \bX_{-i} + \lambda \bI)^{-1}$. By law of total expectation, we have
\begin{align*}
& \E\left[v_i z_i (y_i - \bx_{i}^\top \widehat{\bbeta}_{\lambda}^{(-i)}) \right] 
\\ & = 
\frac{n_T}{n}\E \left[\frac{1}{n_T}(y^{(1)}_i - \bx_i^\top \widehat{\bbeta}_{\lambda}^{(-i)}) | z_i = +1 \right] - \frac{n_C}{n} \E \left[ \frac{1}{n_C} (y^{(0)}_i - \bx_i^\top \widehat{\bbeta}_{\lambda}^{(-i)}) | z_i = -1 \right]
\\ & =
\frac{1}{n}(y^{(1)}_i - y^{(0)}_i) - \frac{n_Cn_T(n-1)}{n^2} \E \left[ \bx_i^\top\lam{i} \bX_{-i}^{\top} (\bf^{(T)}  \by)_{-i} | z_i = +1 \right]
\\ & + 
\frac{n_Cn_T(n-1)}{n^2} \E \left[ \bx_i^\top \lam{i} \bX_{-i}^{\top} (\bf^{(C)}  \by)_{-i} | z_i = -1 \right]\,.
\end{align*}
Now by linearity of expectation,
\begin{align*}
& \E \left[ \bx_i^\top \lam{i} \bX_{-i}^{\top} (\bf^{(T)}  \by)_{-i} | z_i = +1 \right] = 
(\bx_i^\top \lam{i}\bX_{-i}^{\top}) \E \left[(\bf^{(T)}  \by)_{-i} | z_i = +1\right]\,.
\end{align*}
Moreover
\begin{align*}
\E \left[(\bf^{(T)}  \by)_{-i} | z_i = +1 \right]
& = \frac{n_T-1}{n-1} \cdot \frac{\by^{(1)}_{-i}}{n_T(n_T - 1)} + \frac{n_C}{n-1} \cdot \frac{\by^{(0)}_{-i}}{n_C^2}
= \frac{(\by^{(1)}/n_T + \by^{(0)}/n_C)_{-i}}{n-1}\,.
\end{align*}
Similarly
\[
\E \left[(\bf^{(C)}  \by)_{-i} | z_i = -1 \right] = \frac{(\by^{(1)}/n_T + \by^{(0)}/n_C)_{-i}}{n-1}\,.
\]
Therefore
\[
\E\left[v_i z_i (y_i - \bx_{i}^\top \widehat{\bbeta}_{\lambda}^{(-i)}) \right] = \frac{1}{n} (y^{(1)}_i - y^{(0)}_i),
\]
and LOORA-DM is an unbiased estimator. 
We define $\tilde{\bt}^{(-i)}$ and $\bv^{(-i)}$ as the following.
\begin{align*}
\tilde{\bt}^{(-i)} = \begin{cases}
    \tilde{\bt}^{(1)} & \text{if } z_i = 1, \\
    \tilde{\bt}^{(0)} & \text{if } z_i = -1.
\end{cases}
\end{align*}
\begin{align*}
v^{(-i)}_j = 
\begin{cases} 
    \dfrac{1}{n_T - 1} & \text{if } i \in T \text{ and } j \in T, \\[1.5ex]
    \dfrac{1}{n_C} & \text{if } i \in T \text{ and } j \in C, \\[1.5ex]
    \dfrac{1}{n_T} & \text{if } i \in C \text{ and } j \in T, \\[1.5ex]
    \dfrac{1}{n_C - 1} & \text{if } i \in C \text{ and } j \in C.
\end{cases} 
\end{align*}
Algebraic manipulations yield
\begin{align}
    \widetilde{\by}^{(-i)} = \frac{\sqrt{n_C n_T}}{n} \tilde{\bmu} + \frac{\bv^{(-i)}  \bz  \widetilde{\bt}^{(-i)}}{n}. \label{eq:simple_value}
\end{align}
Moreover, irresepective of the assignment of unit $i$,
\begin{align*}
v_iz_i (y_i - \bx_{i}^\top \widehat{\bbeta}_{\lambda}^{(-i)}) - \frac{1}{n}(y^{(1)}_i - y^{(0)}_i)
= v_iz_i(\frac{\sqrt{n_C n_T}}{n} \tilde{\mu}_i -\bx_{i}^\top \widehat{\bbeta}_{\lambda}^{(-i)}) \,.
\end{align*}
Therefore
\begin{equation}
\Var(\widehat\tau)
\;=\;
\E\bigl[(\widehat\tau - \tau)^2\bigr]
\;=\;
\E\Biggl[\Bigl(\sum_{i=1}^n v_i\,z_i\,
\Bigl(\frac{\sqrt{n_C n_T}}{n} \tilde{\mu}_i-\bx_i^\top\widehat\bbeta^{(-i)}\Bigr)\Bigr)^2\Biggr].
\label{eq:mse:def}
\end{equation}
By \eqref{eq:simple_value},
\[
\widehat\bbeta_{\lambda}^{(-i)}
\;=\;
\lam{i}\,
\bX_{-i}^\top\,
\left(\frac{\sqrt{n_C n_T}}{n} \tilde{\bmu}+\frac{\bv^{(-i)}\bz\widetilde\bt^{(-i)}}{n}\right)_{-i}.
\]
We denote $\bLambda_{\lambda} = (\bX^\top \bX + \lambda \bI)^{-1}$.
Substituting these into \eqref{eq:mse:def} and a short calculation yields
\begin{align}
\Var(\widehat\tau)
&=
\E\Biggl[\Biggl(\sum_{i=1}^n v_i\,z_i\,
\Bigl\{
\frac{\sqrt{n_C n_T}}{n} \tilde{\bmu}
-\bx_i^\top\,\lam{i}\,
\bX_{-i}^\top\,
\left(\frac{\sqrt{n_C n_T}}{n} \tilde{\bmu}+\frac{\bv^{(-i)}\bz\widetilde\bt^{(-i)}}{n}\right)_{-i}
\Bigr\}\Biggr)^2\Biggr]
\nonumber
\\ & =
\E\Biggl[\Biggl(\sum_{i=1}^n v_i\,z_i\,
\Bigl\{
\frac{\sqrt{n_C n_T}}{n}\frac{\widetilde\mu_i-\bx_i^\top\bbeta_{\lambda}}{(1-h_{\lambda ii})}
-\frac{\bx_i^\top\,\bLambda_{\lambda}\,
\bX_{-i}^\top}{(1-h_{\lambda ii})}\,
\left(\frac{\bv^{(-i)}\bz\widetilde\bt^{(-i)}}{n}\right)_{-i}
\Bigr\}\Biggr)^2\Biggr],
\label{eq:mse:variance}
\end{align}
where the second equality follows from \Cref{lemma:sol-change-in-row-removal,cor:sum-sqr-err-with-removal}.
For a more compact notation, we denote $\bar{h}_i = (1-h_{\lambda ii})$ for the rest of this proof. We decompose \eqref{eq:mse:variance} into three components and compute the corresponding expectations separately.
\begin{enumerate}
\item $T_1
=
\E\left[\sum_{i,j \in [n]}\frac{(n_C n_T) v_i\,z_i\,v_j\,z_j\,}{n^2}
\frac{\widetilde\mu_i-\bx_i^\top\bbeta_{\lambda}}
     {\bar{h}_i}
\,
\frac{\widetilde\mu_j-\bx_j^\top\bbeta_{\lambda}}
     {\bar{h}_j}
\right]$.

\item $T_2
=
-2\,
\E\left[\sum_{i,j \in [n]}\frac{\sqrt{n_C n_T}v_i\,z_i\,v_j\,z_j\,}{n^2}
\frac{\widetilde\mu_i-\bx_i^\top\bbeta_{\lambda}}
     {\bar{h}_i \bar{h}_j}
\;
\left(\bx_j^\top\,\bLambda_{\lambda}\,
     \bX_{-j}^\top\,
     (\bv^{(-j)}\bz\widetilde\bt^{(-j)})_{-j}\right)
\right]$.

\item $T_3
=
\E\left[\sum\limits_{i,j \in [n]} \!\! \frac{v_i\,z_i\,v_j\,z_j\,}{n^2\bar{h}_i\bar{h}_j} \!\!
\left(\bx_i^\top\,\bLambda_{\lambda}\,
      \bX_{-i}^\top\,
      (\bv^{(-i)}\bz\widetilde\bt^{(-i)})_{-i}\right) 
    \! \!
\left(\bx_j^\top\,\bLambda_{\lambda}\,
      \bX_{-j}^\top\,
      (\bv^{(-j)}\bz\widetilde\bt^{(-j)})_{-j}\right)
\right]$.
\end{enumerate}
One can easily observe that
\[
\Var(\widehat\tau) \;=\; T_1 + T_2 + T_3.
\]
We start by calculating $T_1$.
Note that
$$
T_1 = \E\Biggl[\Biggl(\sum_{i=1}^n \frac{\sqrt{n_C n_T}v_i\,z_i\,}{n} \frac{\wt{\mu}_i-\bx_i^\top\bbeta_{\lambda}}{\bar{h}_i}\Biggr)^2\Biggr].
$$
We denote $R_i = \frac{\wt{\mu}_i-\bx_i^\top \bbeta_{\lambda}}{\bar{h}_i}$. Since $\wt{\bmu}$ and $\bbeta_{\lambda}$ are fixed vectors and do not depend on the treatment assignment vector $\bz$, by linearity of expectation,
$$
T_1 = \frac{n_C n_T}{n^2} \E\Biggl[\Biggl(\sum_{i=1}^n v_i z_i R_i\Biggr)^2\Biggr] = \frac{n_C n_T}{n^2} \sum_{i=1}^n \sum_{j=1}^n R_i R_j \E[v_i z_i v_j z_j].
$$
For $i=j$, $\E[(v_i z_i)^2] = \E[v_i^2]$. Therefore by definition,
$$
\E[v_i z_i v_j z_j] = \E[v_i^2] = \frac{n_T}{n} \left(\frac{1}{n_T}\right)^2 + \frac{n_C}{n} \left(\frac{1}{n_C}\right)^2 = \frac{1}{n n_T} + \frac{1}{n n_C} = \frac{n_C+n_T}{n n_T n_C} = \frac{1}{n_T n_C}.
$$
For $i\neq j$, we can write $\E[v_i z_i v_j z_j]$ by considering the joint assignment of units $i$ and $j$.
\begin{align*}
\E[v_i z_i v_j z_j] &= \P(d_i=1,d_j=1) \left(\frac{1}{n_T}\cdot\frac{1}{n_T}\right) + \P(d_i=1,d_j=0) \left(\frac{1}{n_T}\cdot\frac{-1}{n_C}\right) \\
&+ \P(d_i=0,d_j=1) \left(\frac{-1}{n_C}\cdot\frac{1}{n_T}\right) + \P(d_i=0,d_j=0) \left(\frac{-1}{n_C}\cdot\frac{-1}{n_C}\right) \\
&= \frac{n_T(n_T-1)}{n(n-1)}\frac{1}{n_T^2} - \frac{n_T n_C}{n(n-1)}\frac{1}{n_T n_C} - \frac{n_C n_T}{n(n-1)}\frac{1}{n_C n_T} + \frac{n_C(n_C-1)}{n(n-1)}\frac{1}{n_C^2}
\\ & = -\frac{1}{(n-1)n_T n_C}.
\end{align*}
Substituting these into the expression for $T_1$, we have
\begin{align*}
T_1 &= \frac{n_T n_C}{n^2} \left( \sum_{i=1}^n R_i^2 \E[v_i^2] + \sum_{i \neq j} R_i R_j \E[v_i z_i v_j z_j] \right) \\
&= \frac{1}{n^2} \left( \sum_{i=1}^n R_i^2 - \frac{1}{n-1} \sum_{i,j \in [n] : i \neq j} R_i R_j \right).
\end{align*}
Using the identity $\sum_{i,j \in [n] : i \neq j} R_i R_j = \left(\sum_{i \in [n]} R_i\right)^2 - \sum_{i\in[n]} R_i^2$, simple calculations yield
\begin{align*}
T_1 
&= \frac{1}{n(n-1)} \sum_{i=1}^n \left( R_i - \frac{1}{n}\sum_{j=1}^n R_j \right)^2
\\ & =
\frac{1}{n(n-1)} \sum_{i=1}^n \left( \frac{\wt\mu_i-\bx_i^\top\bbeta_{\lambda}}{(1- h_{\lambda ii})} - \frac{1}{n}\sum_{j=1}^n \frac{\wt\mu_j-\bx_j^\top\bbeta_{\lambda}}{(1- h_{\lambda jj})} \right)^2.
\end{align*}
Recall that $\bH_{\lambda} = \bX^\top \bLambda_{\lambda}\bX$. 
We have
\begin{align*}
T_2 & = -2\,\E\left[\Biggl(\sum_{i=1}^n \frac{\sqrt{n_T n_C} v_i z_i R_i}{n }\Biggr) \Biggl(\sum_{j=1}^n \frac{v_j z_j}{n \bar{h}_j} \bx_j^\top \blambda_{\lambda} \bX_{-j}^\top (\bv^{(-j)}\bz\wt{\bt}^{(-j)})_{-j}\Biggr)\right]
\\ & =
\frac{-2\sqrt{n_T n_C}}{n^2} \sum_{i=1}^n \sum_{j=1}^n \frac{R_i}{\bar{h}_j} \sum_{k \in [n]: k \neq j} h_{\lambda jk} \, \E\left[ v_i z_i v_j z_j v_k^{(-j)} z_k \wt{t}_k^{(-j)} \right]
\end{align*}
We denote $E_{ijk} := \E\left[ v_i z_i v_j z_j v_k^{(-j)} z_k \wt{t}_k^{(-j)} \right]$. One can easily check that $E_{iik} =0$ for $i\neq k$. Therefore we consider two cases: 1) $i=k$, and 2) $i\neq k$, $i\neq j$---note that the above expression guarantees $j\neq k$. We have
\begin{align}
\label{eq:e-iji}
E_{iji} = \E\left[ v_i v^{(-j)}_i v_j z_j \wt{t}_i^{(-j)} \right] = \frac{\wt\mu_i}{(n-1)\sqrt{n_T n_C}}.
\end{align}
For $i\neq k$, $i\neq j$, considering the joint assignments of $i,j,k$ yields,
\[
E_{ijk} = \E\left[ v_i z_i v_j z_j v_k^{(-j)} z_k \wt{t}_k^{(-j)} \right] = \frac{-\wt\mu_k}{(n-1)(n-2)\sqrt{n_T n_C}}.
\]
We first consider the terms in $T_2$ with $i=k$. Note that in this case, since $j\neq k$, also $i\neq j$. Therefore by \eqref{eq:e-iji}, and some calculations, the sum of those terms are equal to 
\[
T_{2,A} = \frac{-2}{n^2(n-1)} \sum_{i,j \in [n]: i \neq j} \frac{R_i h_{\lambda ij} \wt\mu_i}{\bar{h}_j}.
\]
Similar calculations reveal that the terms corresponding to the case $i\neq k$, $i\neq j$ add up to
\[
T_{2,B} = \frac{2}{n^2 (n-1)(n-2)} \sum_{i,j \in [n]: i \neq j} \sum_{k\in [n]: k \neq i,j}  \frac{R_i h_{\lambda jk} \wt\mu_k}{\bar{h}_j}.
\]
Therefore
\[
T_2 = T_{2,A} + T_{2,B} = \frac{-2}{n^2(n-1)}  \sum_{i,j \in [n]: i \neq j} \left( \frac{R_i h_{\lambda ij} \widetilde{\mu}_i}{\overline{h}_{j}} - \frac{1}{n-2} \sum_{k \in [n]: k\neq i,j}  \frac{R_i h_{\lambda jk}  \widetilde{\mu}_k}{\overline{h}_{j}}
\right).
\]
Finally for $T_3$, following the same proof strategy as $T_1, T_2$, and calculating the expectation over the joint assignment of four distinct units $(i, j, k, l)$ for $16$ combinations, and after simplifying the algebra we have the following. We denote 
\begin{align*}
    F & = n^3 n_T n_C (n_T-1)(n_C-1), \\
    a_T &= \frac{n_Tn_C-2n_C+n_T^2-2n_T+1}{n_T-1} \\
    a_C &= \frac{n_Tn_C-2n_T+n_C^2-2n_C+1}{n_C-1}\\
    \bar{a}_T & = n_Cn_T - 3n_C + n_T^2 - 2n_T + 1, \\
    \bar{a}_C & = n_Cn_T - 3n_T + n_C^2 - 2n_C + 1.
\end{align*}
Then
\begin{align*}
    T_3 
    &= \frac{1}{F}\sum_{i \in [n]}\sum_{k \in [n]: k \neq i} \biggl[ h_{\lambda ik}^2\frac{(n_C-1)(\widetilde\bt^{(1)}_k)^2 + (n_T-1)(\widetilde\bt^{(0)}_k)^2 }{\bar{h}_i^2} \\ 
    &- \frac{1}{n-2}\sum_{l\in[n]: l\neq i, k}\frac{h_{\lambda ik}h_{\lambda il}}{\bar{h}_i^2}((n_C-1)\widetilde\bt^{(1)}_k\widetilde\bt^{(1)}_l + (n_T-1)\widetilde\bt^{(0)}_k\widetilde\bt^{(0)}_l) \biggr] \\
    &+ \frac{1}{F}\sum_{i,j \in [n]: i \neq j} 
    \sum_{k \in [n]: k \neq i, j} \frac{h_{\lambda ik}h_{\lambda jk}}{\bar{h}_i\bar{h}_j}\Bigl[a_T(\widetilde\bt^{(1)}_k)^2 - 2n\widetilde\bt^{(1)}_k\widetilde\bt^{(0)}_k + a_C(\widetilde\bt^{(0)}_k)^2 \Bigr] 
    \\
    &+ \frac{1}{n^3(n-1)}\sum_{i,j\in [n]: i \neq j} \biggl[\frac{h_{\lambda ij}^2}{\bar{h}_{i}\bar{h}_j}\Bigl[\frac{\widetilde\bt^{(1)}_i\widetilde\bt^{(1)}_j}{n_T(n_T-1)} + \frac{\widetilde\bt^{(0)}_i\widetilde\bt^{(0)}_j}{n_C(n_C-1)} + \frac{\widetilde\bt^{(1)}_i\widetilde\bt^{(0)}_j + \widetilde\bt^{(0)}_i\widetilde\bt^{(1)}_j}{n_Cn_T} \Bigr] \\
    &- \frac{1}{n-2}\sum_{k \in [n]: k \neq i, j}\frac{h_{\lambda ji}}{\bar{h}_{i}\bar{h}_j}\Bigl[\frac{(h_{\lambda ik} \widetilde\bt^{(1)}_i + h_{\lambda jk} \widetilde\bt^{(1)}_j)\widetilde\bt^{(1)}_k}{n_T(n_T-1)} + \frac{(h_{\lambda ik}\widetilde\bt^{(0)}_i + h_{\lambda jk}\widetilde\bt^{(0)}_j)\widetilde\bt^{(0)}_k}{n_C(n_C-1)}
    \\ & + \frac{(h_{\lambda ik}\widetilde\bt^{(1)}_i + h_{\lambda jk}\widetilde\bt^{(1)}_j)\widetilde\bt^{(0)}_k + (h_{\lambda ik}\widetilde\bt^{(0)}_i + h_{\lambda jk}\widetilde\bt^{(0)}_j)\widetilde\bt^{(1)}_k}{n_Cn_T}\Bigr] \\
    &- \frac{1/(n-2)}{n_Cn_T(n-3)}\!\!\sum_{\substack{k,l \in [n]: k \neq l, \\ k,l \notin \{i,j\}}} \!\!\!\!\!\frac{h_{\lambda ik}h_{\lambda jl}}{\bar{h}_i\bar{h}_j}\Bigl[\frac{\bar{a}_T\widetilde\bt^{(1)}_k\widetilde\bt^{(1)}_l}{n_T-1} + \frac{\bar{a}_C\widetilde\bt^{(0)}_k\widetilde\bt^{(0)}_l}{n_C-1} - (n+1)(\widetilde\bt^{(1)}_k\widetilde\bt^{(0)}_l + \widetilde\bt^{(0)}_k\widetilde\bt^{(1)}_l)\Bigr]\biggr].
\end{align*}

To further simplify the above expression, note that it be written as the following quadratic form
\[
\begin{bmatrix}
    \widetilde\bt^{(0)} \\ \widetilde\bt^{(1)}
\end{bmatrix}^\top \bQ
\begin{bmatrix}
    \widetilde\bt^{(0)} \\ \widetilde\bt^{(1)}
\end{bmatrix},
\]
where $\bQ$ is a block matrix of the following form
\[
\bQ = \begin{bmatrix}
\bQ^{00} & \bQ^{01} \\
\bQ^{10} & \bQ^{11}
\end{bmatrix},
\]
and
\begin{align*}
Q^{11}_{kk}
& = \frac{n_C-1}{F} \sum_{i \in [n]: i \neq k} \frac{h_{\lambda i k}^2}{\bar{h}_i^2}
+ \frac{a_T}{F} \sum_{\substack{i,j \in [n]: i \neq j \\ k \neq i,j}}
\frac{h_{\lambda i k} h_{\lambda j k}}{\bar{h}_i \bar{h}_j},
\\
Q^{00}_{kk}
& = \frac{n_T-1}{F} \sum_{i \in [n]: i \neq k} \frac{h_{\lambda i k}^2}{\bar{h}_i^2}
+ \frac{a_C}{F} \sum_{\substack{i,j \in [n]: i \neq j \\ k \neq i,j}}
\frac{h_{\lambda i k} h_{\lambda j k}}{\bar{h}_i \bar{h}_j},
\\
Q^{01}_{kk} = Q^{10}_{kk}
& = - \frac{n}{F}
\sum_{\substack{i,j \in [n]: i \neq j \\ k \neq i,j}}
\frac{h_{\lambda i k} h_{\lambda j k}}{\bar{h}_i \bar{h}_j},
\\
Q^{11}_{k\ell}
& = - \frac{n_C-1}{F(n-2)} \left(\sum_{i \in [n]: i \neq k, \ell}
\frac{h_{\lambda i k} h_{\lambda i \ell}}{\bar{h}_i^2} \right)
+ \frac{1}{n^3(n-1)} \biggl[
\frac{1}{n_T(n_T-1)} \frac{h_{\lambda k \ell}^2}{\bar{h}_k \bar{h}_\ell}
\\ &
- \frac{1}{n-2} \frac{1}{n_T(n_T-1)} 
\biggl(
\sum_{i \in [n]: i \neq k, \ell} \frac{h_{\lambda k i} h_{\lambda \ell k}}{\bar{h}_k \bar{h}_i}
+ \frac{h_{\lambda i k} h_{\lambda k \ell}}{\bar{h}_i \bar{h}_\ell}
\biggr)
\\ &
- \frac{1}{(n-2) n_C n_T (n-3)} \frac{\bar{a}_T}{n_T-1}
\sum_{\substack{i,j \in [n]: i \neq j \\ i,j \notin \{k,\ell\}}}
\frac{h_{\lambda i k} h_{\lambda j \ell}}{\bar{h}_i \bar{h}_j}
\biggr], \\
Q^{00}_{k\ell}
& = - \frac{n_T-1}{F(n-2)} \left(\sum_{i \in [n]: i \neq k, \ell}
\frac{h_{\lambda i k} h_{\lambda i \ell}}{\bar{h}_i^2} \right)
+ \frac{1}{n^3(n-1)} \biggl[
\frac{1}{n_C(n_C-1)} \frac{h_{\lambda k \ell}^2}{\bar{h}_k \bar{h}_\ell}
\\ &
- \frac{1}{n-2} \frac{1}{n_C(n_C-1)} 
\biggl(
\sum_{i \in [n]: i \neq k, \ell} \frac{h_{\lambda k i} h_{\lambda \ell k}}{\bar{h}_k \bar{h}_i}
+  \frac{h_{\lambda i k} h_{\lambda k \ell}}{\bar{h}_i \bar{h}_\ell}
\biggr)
\\ &
- \frac{1}{(n-2) n_C n_T (n-3)} \frac{\bar{a}_C}{n_C-1}
\sum_{\substack{i,j \in [n]: i \neq j \\ i,j \notin \{k,\ell\}}}
\frac{h_{\lambda i k} h_{\lambda j \ell}}{\bar{h}_i \bar{h}_j}
\biggr],
\\
Q^{01}_{k\ell} = Q^{10}_{\ell k}
& = \frac{1}{n^3(n-1)} \biggl[
\frac{1}{n_C n_T} \frac{h_{\lambda k \ell}^2}{\bar{h}_k \bar{h}_\ell}
- \frac{1}{n-2} \frac{1}{n_C n_T} 
\biggl(
\sum_{i \in [n]: i \neq k, \ell} \frac{h_{\lambda k i} h_{\lambda \ell k}}{\bar{h}_k \bar{h}_i}
+ \frac{h_{\lambda i k} h_{\lambda k \ell}}{\bar{h}_i \bar{h}_\ell}
\biggr)
\\ &
+ \frac{n+1}{(n-2) n_C n_T (n-3)}
\sum_{\substack{i,j \in [n]: i \neq j \\ i,j \notin \{k,\ell\}}}
\frac{h_{\lambda i k} h_{\lambda j \ell}}{\bar{h}_i \bar{h}_j}
\biggr].
\end{align*}

\end{proof}

\begin{proof}[Proof of \Cref{thm:asymptotic-normality-LDM}]
We define:
\begin{align*}
    \bbeta_\lambda & = \argmin_{\bb \in \mathbb{R}^k}\norm{\bX\bb - \tilde{\bmu}}{2}^2 + \lambda\norm{\bb}{2}^2.
\\
    \hat{\bbeta}_\lambda^{(i)} & = \argmin_{\bb \in \mathbb{R}^k}\norm{\bX\bb - \tilde{\by}^{(-i)}}{2}^2 + \lambda\norm{\bb}{2}^2.
\\
    \hat{\bbeta}_\lambda^{(-i)} &= \argmin_{\bb \in \mathbb{R}^k}\norm{\bX_{-i}\bb - \tilde{\by}_{-i}^{(-i)}}{2}^2 + \lambda\norm{\bb}{2}^2.
\end{align*}
Define the oracle difference-in-means estimator as
\[
\widehat{\tau}_{\textup{LDM}}^{*}
=\sum_{i=1}^n v_i z_i\bigl(y_i-\eta^{-1}\bx_i^\top\bbeta_\lambda\bigr).
\]
Note that LOORA-DM estimator is 
\[
\widehat{\tau}_{\textup{LDM}}
=\sum_{i=1}^n v_i z_i\bigl(y_i-\bx_i^\top\widehat\bbeta^{(-i)}_\lambda\bigr).
\]
We define
\begin{align*}
\frac{\sqrt{n}\bigl(\widehat\tau_{\textup{LDM}}-\tau\bigr)}{\sqrt{\Var(\sqrt{n}(\widehat\tau_{\textup{LDM}}^{*}-\tau))}}
=
A_n+B_n,
~
A_n=\frac{\sqrt{n}\bigl(\widehat\tau_{\textup{LDM}}^{*}-\tau\bigr)}{\sqrt{\Var(\sqrt{n}(\widehat\tau_{\textup{LDM}}^{*}-\tau))}},
~ \\
B_n=\frac{\sqrt{n}\bigl(\widehat\tau_{\textup{LDM}}-\widehat\tau_{\textup{LDM}}^{*}\bigr)}{\sqrt{\Var(\sqrt{n}(\widehat\tau_{\textup{LDM}}^{*}-\tau))}}.
\end{align*}
Note that
\[
\sqrt{\Var(\sqrt{n}(\widehat\tau_{\textup{LDM}}^{*}-\tau))} = \frac{\norm{(\bX - \overline{\bX}) \bbeta_{\lambda}
\allowbreak - \allowbreak
(\widetilde{\bmu} - \overline{\widetilde{\bmu}})}{2}}{\sqrt{n-1}}.
\]

\paragraph{Step 1: Negligibility of the plug-in error ($B_n=o_p(1)$).}
We have
\[
\widehat\tau_{\textup{LDM}}-\widehat\tau_{\textup{LDM}}^{*}
=
\sum_{i=1}^n v_i z_i\,\bx_i^\top(\eta^{-1}\bbeta_{\lambda}-\widehat\bbeta_{\lambda}^{(-i)}).
\]
Add and subtract $\widehat\bbeta_\lambda$:
\[
\abs{\widehat\tau_{\textup{LDM}}-\widehat\tau_{\textup{LDM}}^{*}}
\le
\underbrace{\abs{\sum_{i=1}^n v_i z_i\,\bx_i^\top(\eta^{-1}\bbeta_{\lambda}-\widehat\bbeta_{\lambda}^{(i)})}}_{T_{1}}
+
\underbrace{\abs{\sum_{i=1}^n v_i z_i\,\bx_i^\top(\widehat\bbeta_{\lambda}^{(i)}-\widehat\bbeta_{\lambda}^{(-i)})}}_{T_{2}}.
\]
Note that by \eqref{eq:simple_value},
\[
\eta ~ \widetilde{\by}^{(-i)} = \tilde{\bmu} + \eta \frac{\bv^{(-i)}  \bz  \widetilde{\bt}^{(-i)}}{n}.
\]
Therefore
\[
\eta^{-1}\bbeta_{\lambda}-\widehat\bbeta_{\lambda}^{(i)} = - (\bX^\top \bX + \lambda \bI)^{-1} \bX^{\top} \frac{\bv^{(-i)}  \bz  \widetilde{\bt}^{(-i)}}{n}.
\]
Let $\widetilde{\bbeta}$ be a vector such that
\[
\eta^{-1}\bbeta_{\lambda}-\widetilde\bbeta_{\lambda} = - (\bX^\top \bX + \lambda \bI)^{-1} \bX^{\top} \frac{\bv  \bz  \widetilde{\bt}}{n},
\]
where 
\[
\widetilde{\bt} = n_C(n_C - 1) \by^{(1)} - n_T(n_T - 1) \by^{(0)}.
\]
Then one can easily show that
\[
\norm{\widehat{\bbeta}_{\lambda}^{(i)} - \widetilde{\bbeta}_{\lambda}}{2} = O(n^{-1}).
\]
Now by triangle inequality and Cauchy–Schwarz inequality,
\begin{align*}
T_{1} & \leq \abs{\sum_{i=1}^n v_i z_i\,\bx_i^\top(\eta^{-1}\bbeta_{\lambda}-\widetilde\bbeta_{\lambda})} + \abs{\sum_{i=1}^n v_i z_i\,\bx_i^\top(\widetilde\bbeta_{\lambda}-\widehat\bbeta_{\lambda}^{(i)})}
\\ & = 
\abs{\sum_{i=1}^n v_i z_i\,\bx_i^\top(\eta^{-1}\bbeta_{\lambda}-\widetilde\bbeta_{\lambda})} + O(n^{-1})
\\ & =
\abs{(\bX^\top (\bv \bz))^\top(\eta^{-1}\bbeta_{\lambda}-\widetilde\bbeta_{\lambda})} + O(n^{-1})
\\ & \leq
\norm{\bX^\top (\bv \bz)}{2} \norm{\eta^{-1}\bbeta_{\lambda}-\widetilde\bbeta_{\lambda}}{2} + O(n^{-1})
\\ & \leq
\norm{\bX^\top (\bv \bz)}{2} \norm{(\bX^\top \bX + \lambda \bI)^{-1}}{2} \norm{ \bX^{\top} \frac{\bv  \bz  \widetilde{\bt}}{n}}{2} + O(n^{-1}) = O_p(n^{-1}),
\end{align*}
where the first equality holds because $v_i=O(n^{-1})$ for all $i \in [n]$ and the last equality holds due to \Cref{lemma:concentrate_vec_sum_dm} and noting that
\[
v_i z_i=\frac{d_i}{n_T}-\frac{1-d_i}{n_C}
=\Bigl(\frac{1}{n_T}+\frac{1}{n_C}\Bigr)(d_i-p_T).
\]
Now note that by \Cref{cor:sum-sqr-err-with-removal} and because of \Cref{ass:boundedness,ass:positivity,ass:conv-2nd-moment} 
\begin{align}
\label{eq:beta_i_close_to_beta_-i}
\norm{\widehat\bbeta_{\lambda}^{(i)} -\widehat\bbeta_{\lambda}^{(-i)}}{2} = O(n^{-1}).
\end{align}
Therefore by triangle inequality and Cauchy–Schwarz inequality,
\[
T_{2} \leq  \sum_{i=1}^n \norm{v_i z_i\,\bx_i^\top}{2} \norm{\widehat\bbeta_{\lambda}^{(i)} -\widehat\bbeta_{\lambda}^{(-i)}}{2} = O(n^{-1}),
\]
where the equality holds because $v_i=O(n^{-1})$ for all $i \in [n]$.
Therefore $\abs{\widehat\tau_{\textup{LDM}}-\widehat\tau_{\textup{LDM}}^{*}} = O_p(n^{-1})$ and due to the assumption that $\norm{\tilde\bmu-\bar{\tilde{\bmu}}-(\bX-\bar{\bX})\bbeta}{2} \to \infty$, $B_n = o_p(1)$.

\paragraph{Step 2: Oracle CLT via Hoeffding's combinatorial CLT.} 
Let 
\[
u_i^{(1)}=y_i^{(1)}- \eta^{-1}\bx_i^\top\bbeta_\lambda, ~~ \text{and} ~~ u_i^{(0)}=y_i^{(0)}- \eta^{-1} \bx_i^\top\bbeta_\lambda
\]
Therefore we have
\begin{align*}
\widehat\tau_{\textup{LDM}}^{*}
& =
\frac{1}{n_T}\sum_{i \in [n]: d_i = 1} u_i^{(1)}-\frac{1}{n_C}\sum_{i \in [n]: d_i = 0} u_i^{(0)}
\\ & =
-\frac{1}{n_C}\sum_{i=1}^n u_i^{(0)}
+
\sum_{i \in [n]: d_i = 1}\left(\frac{1}{n_T}u_i^{(1)}+\frac{1}{n_C}u_i^{(0)}\right)
\\ & =
-\frac{1}{n_C}\sum_{i=1}^n u_i^{(0)}
+
\frac{1}{\sqrt{n_T n_C}}\sum_{i \in [n]: d_i = 1}\left(\widetilde{\mu}_i -\bx_i^\top \bbeta_{\lambda}\right).
\end{align*}
The first term on the right-hand-side of the above is determinstic and the only sourse of randomness is the second term.

Suppose for each natural number $n$, we have $2n$ numbers $a_{ni},c_{ni}$. For a permutation $\pi(1),\ldots,\pi(n)$ of $[n]$, define the following random variable.
\[
S_n = \sum_{i=1}^n a_{ni} c_{n \pi(i)}.
\]
\cite{hoeffdingCLT} shows that
\[
\frac{S_n-\E(S_n)}{\sqrt{\Var(S_n)}}\;\dto\; \mathcal N(0,1)
\]
if 
\begin{align*}
\lim_{n \to \infty} n \cdot \frac{\max_{i\in [n]} (a_{ni} - \overline{a}_n)^2}{ \sum_{i=1}^n (a_{ni} - \overline{a}_n)^2} \cdot \frac{\max_{i\in[n]} (c_{ni} - \overline{c}_n)^2}{ \sum_{i=1}^n (c_{ni} - \overline{c}_n)^2} = 0,
\end{align*}
where
\[
\overline{a}_n = \frac{1}{n} \sum_{i=1}^n a_{ni}, ~~ \text{and} ~~ \overline{c}_n = \frac{1}{n} \sum_{i=1}^n c_{ni}.
\]
We set $a_{ni} = (\widetilde{\mu}_i -\bx_i^\top \bbeta_{\lambda})$ and $c_{ni}=\frac{1}{\sqrt{n_T n_C}}\mathbbm{1}\{i\le n_T\}$, where $\mathbbm{1}\{i\le n_T\}$ is an indicator function for $i\le n_T$. \Cref{ass:boundedness,ass:positivity,ass:conv-2nd-moment} imply that each $a_{ni}$ is bounded. Therefore each $a_{ni} - \overline{a}_n$ is bounded. Let $C$ be a constant such that $|a_{ni} - \overline{a}_n| \leq C$ for all $n$ and $i$. Then
\begin{align*}
\frac{\max_{i\in [n]} (a_{ni} - \overline{a}_n)^2}{ \sum_{i=1}^n (a_{ni} - \overline{a}_n)^2}
& \leq 
\frac{C^2}{\norm{(\bX - \overline{\bX}) \bbeta_{\lambda}
\allowbreak - \allowbreak
(\widetilde{\bmu} - \overline{\widetilde{\bmu}})}{2}}.
\end{align*}
Now note that $\overline{c}_n = \frac{1}{\sqrt{n_T n_C}} \frac{n_T}{n}$ and therefore each $c_{ni} - \overline{c}_n$ is either equal to $\frac{1}{\sqrt{n_T n_C}} \frac{n_C}{n}$ or $-\frac{1}{\sqrt{n_T n_C}} \frac{n_T}{n}$. Therefore we have
\begin{align*}
\frac{\max_{i\in[n]} (c_{ni} - \overline{c}_n)^2}{ \sum_{i=1}^n (c_{ni} - \overline{c}_n)^2} & = \frac{\max\{\left(\frac{n_C}{n}\right)^2,\left(\frac{n_T}{n}\right)^2\}}{n_T \left(\frac{n_C}{n}\right)^2 + n_C \left(\frac{n_T}{n}\right)^2} = \frac{\max\{n_C^2,n_T^2\}}{n_C n_T n} = \frac{1}{n} \max\{\frac{n_C}{n_T}, \frac{n_T}{n_C}\} 
\\ & \leq \frac{1}{n} \frac{1-m}{m}.
\end{align*}
where the inequality follows from \Cref{ass:positivity}. Thus
\begin{align*}
n \cdot \frac{\max_{i\in [n]} (a_{ni} - \overline{a}_n)^2}{ \sum_{i=1}^n (a_{ni} - \overline{a}_n)^2} \cdot \frac{\max_{i\in[n]} (c_{ni} - \overline{c}_n)^2}{ \sum_{i=1}^n (c_{ni} - \overline{c}_n)^2} \leq \frac{C^2}{\norm{(\bX - \overline{\bX}) \bbeta_{\lambda}
\allowbreak - \allowbreak
(\widetilde{\bmu} - \overline{\widetilde{\bmu}})}{2}} \frac{1-m}{m}.
\end{align*}
Therefore because of the assumption that $\norm{(\bX - \overline{\bX}) \bbeta_{\lambda}
\allowbreak - \allowbreak
(\widetilde{\bmu} - \overline{\widetilde{\bmu}})}{2} \to \infty$, the limit of the above is zero. Therefore by \cite{hoeffdingCLT},
\[
\frac{\sqrt{n}\bigl(\widehat\tau_{\textup{LDM}}^{*}-\tau\bigr)}{\sqrt{\Var(\sqrt{n}(\widehat\tau_{\textup{LDM}}^{*}-\tau))}}
\;\dto\; \mathcal N(0,1).
\]
Moreover note that
\[
\sqrt{\Var(\sqrt{n}(\widehat\tau_{\textup{LDM}}^{*}-\tau))} = \frac{\norm{(\bX - \overline{\bX}) \bbeta_{\lambda}
\allowbreak - \allowbreak
(\widetilde{\bmu} - \overline{\widetilde{\bmu}})}{2}}{\sqrt{n-1}}.
\]

\paragraph{Step 3: Conclusion.}
Since $A_n\dto \mathcal N(0,1)$ and $B_n=o_p(1)$, Slutsky's theorem gives
\[
\frac{\sqrt{n}\bigl(\widehat\tau_{\textup{LDM}}-\tau\bigr)}{\norm{(\bX - \overline{\bX}) \bbeta_{\lambda}
\allowbreak - \allowbreak
(\widetilde{\bmu} - \overline{\widetilde{\bmu}})}{2} / \sqrt{n}}
\;\dto\; \mathcal N(0,1),
\]
where we also use the fact that $\lim_{n\to \infty} \frac{n}{n-1} = 1$.
\end{proof}

\begin{proof}[Proof of \Cref{thm:loora-dm-est-var}]
We define
\begin{align*}
    \bbeta_\lambda & = \argmin_{\bb \in \mathbb{R}^k}\norm{\bX\bb - \tilde{\bmu}}{2}^2 + \lambda\norm{\bb}{2}^2, \\
    \hat{\bbeta}_\lambda^{(i)} & = \argmin_{\bb \in \mathbb{R}^k}\norm{\bX\bb - \tilde{\by}^{(-i)}}{2}^2 + \lambda\norm{\bb}{2}^2, \\
    \hat{\bbeta}_\lambda^{(-i)} & = \argmin_{\bb \in \mathbb{R}^k}\norm{\bX_{-i}\bb - \tilde{\by}_{-i}^{(-i)}}{2}^2 + \lambda\norm{\bb}{2}^2.
\end{align*}
and
\[
\widetilde{V}_{\textup{LDM}} = \frac{n-1}{n_C (n_C - 1)} \sum_{i \in [n]: d_i=0} (\widetilde{s}_i^{(0)})^2 + \frac{n-1}{n_T(n_T-1)} \sum_{i \in [n]: d_i=1} (\widetilde{s}_i^{(1)})^2,
\]
where
\begin{align*}
\widetilde{s}_i^{(0)} & = y_i^{(0)} - \eta^{-1}\bx_i^\top\bbeta_{\lambda} - \frac{1}{n_C} \sum_{j \in [n]:d_j=0} y_j^{(0)} - \eta^{-1}\bx_j^\top\bbeta_{\lambda}, 
\\
\widetilde{s}_i^{(1)} & = y_i^{(1)} - \eta^{-1}\bx_i^\top\bbeta_{\lambda} - \frac{1}{n_T} \sum_{j \in [n]:d_j=1} y_j^{(1)} - \eta^{-1}\bx_j^\top\bbeta_{\lambda}.
\end{align*}
We show
\[
\hat V_{\textup{LDM}} - 
\left( \frac{1}{n}\norm{(\bX-\overline{\bX}) \bbeta_{\lambda}
-
(\widetilde{\bmu} - \overline{\widetilde{\bmu}})}{2}^2 + \frac{1}{n} \sum_{i=1}^n(\tau_i-\tau)^2 \right) \pto 0
\]
by separating the left-hand-side to two terms: $\hat V_{\textup{LDM}} - \widetilde{V}_{\textup{LDM}}$ and
\[
\widetilde{V}_{\textup{LDM}} -\left( \frac{1}{n}\norm{(\bX-\overline{\bX}) \bbeta_{\lambda}
-
(\widetilde{\bmu} - \overline{\widetilde{\bmu}})}{2}^2 + \frac{1}{n} \sum_{i=1}^n(\tau_i-\tau)^2 \right). 
\]

\paragraph{Step 1: Concentration of the oracle variance $\tilde V_{\textup{LDM}}$.}
Let
\begin{align*}
s_i^{(0)} & =  y_i^{(0)} - \eta^{-1}\bx_i^\top\bbeta_{\lambda} - \frac{1}{n} \sum_{j \in [n]} y_j^{(0)} - \eta^{-1}\bx_j^\top\bbeta_{\lambda},
\\
s_i^{(1)} & =  y_i^{(1)} - \eta^{-1}\bx_i^\top\bbeta_{\lambda} - \frac{1}{n} \sum_{j \in [n]} y_j^{(1)} - \eta^{-1}\bx_j^\top\bbeta_{\lambda}.
\end{align*}
Using standard formulas regarding the connection between sample variance and population variance, one can easily confirm that
\begin{align*}
\E\left[ \frac{1}{n_C-1} \sum_{i \in [n]: d_i=0} (\widetilde{s}_i^{(0)})^2 \right] & = \frac{1}{n-1} \sum_{i \in [n]} ( s_i^{(0)} )^2, \\
\E\left[ \frac{1}{n_T-1} \sum_{i \in [n]: d_i=1} (\widetilde{s}_i^{(1)})^2 \right] & = \frac{1}{n-1} \sum_{i \in [n]} ( s_i^{(1)} )^2.
\end{align*}
Therefore
\[
\E[\widetilde{V}_{\textup{LDM}}] = \frac{1}{n_C} \sum_{i \in [n]} ( s_i^{(0)} )^2 + \frac{1}{n_T} \sum_{i \in [n]} ( s_i^{(1)} )^2.
\]
Since $\frac{1}{n_C} \sum_{i \in [n]: d_i=0} (\widetilde{s}_i^{(0)})^2$ is a sample mean of $(\widetilde{s}_i^{(0)})^2$'s, its variance is $O(n_C^{-1})$. Similarly the variance of $\frac{1}{n_T} \sum_{i \in [n]: d_i=1} (\widetilde{s}_i^{(1)})^2$ is $O(n_T^{-1})$. Therefore
\begin{align*}
\Var[\widetilde{V}_{\textup{LDM}}] & \leq  \frac{2(n-1)^2}{(n_C - 1)^2} \Var \left[ \frac{1}{n_C} \sum_{i \in [n]: d_i=0} (\widetilde{s}_i^{(0)})^2 \right] + \frac{2(n-1)^2}{(n_T-1)^2} \Var \left[ \frac{1}{n_T} \sum_{i \in [n]: d_i=1} (\widetilde{s}_i^{(1)})^2 \right]
\\ & = O(\frac{n^2}{n_C^3} + \frac{n^2}{n_T^3}) = o(1).
\end{align*}
Therefore since $\Var[\widetilde{V}_{\textup{LDM}}] \to 0$,
\[
\widetilde{V}_{\textup{LDM}} - \left(\frac{1}{n_C} \sum_{i \in [n]} ( s_i^{(0)} )^2 + \frac{1}{n_T} \sum_{i \in [n]} ( s_i^{(1)} )^2 \right) \pto 0.
\]
We now show
\begin{align}
\frac{1}{n_C} \sum_{i \in [n]} ( s_i^{(0)} )^2 + \frac{1}{n_T} \sum_{i \in [n]} ( s_i^{(1)} )^2 = \frac{1}{n} \norm{(\bX-\overline{\bX}) \bbeta_{\lambda}
-
(\widetilde{\bmu} - \overline{\widetilde{\bmu}})}{2}^2 + \frac{1}{n} \sum_{i=1}^n(\tau_i-\tau)^2.
\end{align}
First, by definition of $s_i^{(a)}$,
\[
y_i^{(a)}-\frac{1}{n}\sum_{j=1}^n y_j^{(a)}
=
s_i^{(a)} + \eta^{-1} \left(\bx_i-\frac{1}{n}\sum_{j=1}^n \bx_j\right)^\top\bbeta_{\lambda},
\qquad a\in\{0,1\}.
\]
Using $\widetilde{\bmu}=\sqrt{\frac{n_C}{n_T}}\by^{(1)}+\sqrt{\frac{n_T}{n_C}}\by^{(0)}$ and
$\eta=\sqrt{\frac{n_C}{n_T}}+\sqrt{\frac{n_T}{n_C}}$, we obtain for each $i$ that
\begin{align*}
\widetilde{\mu}_i-\overline{\widetilde{\mu}}
&=
\sqrt{\frac{n_C}{n_T}}
\Bigl(y_i^{(1)}-\frac{1}{n}\sum_{j=1}^n y_j^{(1)}\Bigr)
+
\sqrt{\frac{n_T}{n_C}}
\Bigl(y_i^{(0)}-\frac{1}{n}\sum_{j=1}^n y_j^{(0)}\Bigr)
\\
&=
\sqrt{\frac{n_C}{n_T}}\, s_i^{(1)}
+\sqrt{\frac{n_T}{n_C}}\, s_i^{(0)}
+\left(\sqrt{\frac{n_C}{n_T}}+\sqrt{\frac{n_T}{n_C}}\right) \eta^{-1}
\left(\bx_i-\overline{\bx}\right)^\top\bbeta_{\lambda}
\\
&=
\sqrt{\frac{n_C}{n_T}}\, s_i^{(1)}
+\sqrt{\frac{n_T}{n_C}}\, s_i^{(0)}
+(\bx_i-\overline{\bx})^\top\bbeta_{\lambda}.
\end{align*}
Rearranging yields
\[
(\bx_i-\overline{\bx})^\top\bbeta_{\lambda}-(\widetilde{\mu}_i-\overline{\widetilde{\mu}})
=
-\left(
\sqrt{\frac{n_C}{n_T}}\, s_i^{(1)}
+\sqrt{\frac{n_T}{n_C}}\, s_i^{(0)}
\right),
\]
and hence
\begin{equation}
\label{eq:norm-term}
\norm{(\bX-\overline{\bX}) \bbeta_{\lambda}
-
(\widetilde{\bmu} - \overline{\widetilde{\bmu}})}{2}^2
=
\sum_{i=1}^n
\left(
\sqrt{\frac{n_C}{n_T}}\, s_i^{(1)}
+\sqrt{\frac{n_T}{n_C}}\, s_i^{(0)}
\right)^2.
\end{equation}
Next, since $\tau_i=y_i^{(1)}-y_i^{(0)}$ and $\tau=\frac{1}{n}\sum_{j=1}^n \tau_j$, we have
\[
\tau_i-\tau
=
\Bigl(y_i^{(1)}-\frac{1}{n}\sum_{j=1}^n y_j^{(1)}\Bigr)
-
\Bigl(y_i^{(0)}-\frac{1}{n}\sum_{j=1}^n y_j^{(0)}\Bigr)
=
s_i^{(1)}-s_i^{(0)}.
\]
Therefore
\begin{equation}
\label{eq:tau-term}
\sum_{i=1}^n(\tau_i-\tau)^2=\sum_{i=1}^n (s_i^{(1)}-s_i^{(0)})^2.
\end{equation}
By combining \eqref{eq:norm-term} and \eqref{eq:tau-term}, and expanding:
\begin{align*}
&\frac{1}{n}\norm{(\bX-\overline{\bX}) \bbeta_{\lambda}
-
(\widetilde{\bmu} - \overline{\widetilde{\bmu}})}{2}^2 + \frac{1}{n} \sum_{i=1}^n(\tau_i-\tau)^2 \\
&\qquad= 
\frac{1}{n} \sum_{i=1}^n\left(
\sqrt{\frac{n_C}{n_T}}\, s_i^{(1)}
+\sqrt{\frac{n_T}{n_C}}\, s_i^{(0)}
\right)^2
+
(s_i^{(1)}-s_i^{(0)})^2
\\
&\qquad=
\frac{1}{n} \sum_{i=1}^n
\left(\frac{n_C}{n_T}+1\right)(s_i^{(1)})^2
+
\left(\frac{n_T}{n_C}+1\right)(s_i^{(0)})^2
\\
&\qquad=
\sum_{i=1}^n
\frac{1}{n_T}(s_i^{(1)})^2+\frac{1}{n_C}(s_i^{(0)})^2,
\end{align*}
where we used $n=n_C+n_T$. 

\paragraph{Step 2: Showing $\hat V_{\textup{LDM}}-\tilde V_{\textup{LDM}}=o_p(1)$.} We have
\begin{align*}
\hat V_{\textup{LDM}}-\tilde V_{\textup{LDM}} & = \frac{n-1}{n_C (n_C - 1)} \sum_{i \in [n]: d_i=0} \!\! \left[(\widehat{s}_i^{(0)})^2 - (\widetilde{s}_i^{(0)})^2 \right]
\\ & + \frac{n-1}{n_T(n_T-1)} \sum_{i \in [n]: d_i=1} \!\! \left[(\widehat{s}_i^{(1)})^2 - (\widetilde{s}_i^{(1)})^2 \right].
\end{align*}
It is easy to verify that by \Cref{ass:boundedness,ass:positivity,ass:conv-2nd-moment}, $\widehat{s}_i^{(0)}, \widehat{s}_i^{(1)}, \widetilde{s}_i^{(0)}, \widetilde{s}_i^{(1)}$ are all bounded for all $i$. Therefore to prove $\hat V_{\textup{LDM}}-\tilde V_{\textup{LDM}}$ converges to zero in probability, we only need to show that 
\begin{align*}
\widehat{s}_i^{(0)} - \widetilde{s}_i^{(0)} \pto 0, ~~ \forall i \in [n]: d_i=0, \\
\widehat{s}_i^{(1)} - \widetilde{s}_i^{(1)} \pto 0, ~~ \forall i \in [n]: d_i=1.
\end{align*}
We only prove this for the former and the latter follows similarly. Moreover,  by \eqref{eq:beta_i_close_to_beta_-i}, 
\[
\norm{\widehat\bbeta_{\lambda}^{(i)} -\widehat\bbeta_{\lambda}^{(-i)}}{2} = O(n^{-1}).\]
Hence since
\begin{align*}
\widetilde{s}_i^{(0)} & = y_i^{(0)} - \eta^{-1}\bx_i^\top\bbeta_{\lambda} - \frac{1}{n_C} \sum_{j \in [n]:d_j=0} y_j^{(0)} - \eta^{-1}\bx_j^\top\bbeta_{\lambda}, ~~ \text{and}
\\ \widehat{s}_i^{(0)} & = y_i^{(0)} - \bx_i^\top \widehat{\bbeta}_{\lambda}^{(-i)} - \frac{1}{n_C} \sum_{j \in [n]:d_j=0} y_j^{(0)} - \bx_j^\top \widehat{\bbeta}_{\lambda}^{(-j)},
\end{align*}
we only need to prove that $\bx_i^\top (\eta^{-1}\bbeta_{\lambda} - \widehat{\bbeta}_{\lambda}^{(i)}) \pto 0$ for all $i\in [n]$ with $d_i=0$.
By \eqref{eq:simple_value},
\[
\eta ~ \widetilde{\by}^{(-i)} = \tilde{\bmu} + \eta \frac{\bv^{(-i)}  \bz  \widetilde{\bt}^{(-i)}}{n}.
\]
Therefore 
\begin{align*}
\norm{\widehat{\bbeta}_{\lambda}^{(i)} - \eta^{-1}\bbeta_{\lambda}}{2} 
& = 
\norm{(\bX^\top \bX + \lambda \bI)^{-1} \bX^\top \frac{\bv^{(-i)}  \bz  \widetilde{\bt}^{(-i)}}{n}}{2}
\\ & \leq
\norm{(\bX^\top \bX + \lambda \bI)^{-1} \bX^\top \frac{\bv  \bz  \widetilde{\bt}^{(-i)}}{n}}{2}
\\ & +
\norm{(\bX^\top \bX + \lambda \bI)^{-1} \bX^\top \frac{(\bv^{(-i)} - \bv ) \bz  \widetilde{\bt}^{(-i)}}{n}}{2}.
\end{align*}
Now we have
\begin{align*}
\norm{(\bX^\top \bX + \lambda \bI)^{-1} \bX^\top \frac{\bv  \bz  \widetilde{\bt}^{(-i)}}{n}}{2} & \leq \norm{(\bX^\top \bX + \lambda \bI)^{-1}}{2} \norm{\bX^\top \frac{\bv  \bz  \widetilde{\bt}^{(-i)}}{n}}{2}
\\ & =
O_p(n^{-1/2}),
\end{align*}
where the equality follows from \Cref{ass:boundedness,ass:positivity,ass:conv-2nd-moment} and \Cref{lemma:concentrate_vec_sum_dm}. Moreover
\begin{align*}
\norm{(\bX^\top \bX + \lambda \bI)^{-1} \bX^\top \frac{(\bv^{(-i)} - \bv ) \bz  \widetilde{\bt}^{(-i)}}{n}}{2} & \leq 
\norm{(\bX^\top \bX + \lambda \bI)^{-1}}{2} \norm{\bX^\top \frac{(\bv^{(-i)} - \bv ) \bz  \widetilde{\bt}^{(-i)}}{n}}{2}
\\ & =
O(n^{-1}),
\end{align*}
where the equality follows from \Cref{ass:boundedness,ass:positivity,ass:conv-2nd-moment} and the fact that $\abs{v^{(-i)}_j - v_j} = O(n^{-2})$ for all $i$ and $j$. Therefore since $\bx_i$'s are bounded, $\bx_i^\top (\eta^{-1}\bbeta_{\lambda} - \widehat{\bbeta}_{\lambda}^{(-i)}) \pto 0$ for all $i\in [n]$ with $d_i=0$.

\paragraph{Conclusion.}
Combining Step 1 and Step 2,
\[
\hat V_{\textup{LDM}} - 
\left( \frac{1}{n}\norm{(\bX-\overline{\bX}) \bbeta_{\lambda}
-
(\widetilde{\bmu} - \overline{\widetilde{\bmu}})}{2}^2 + \frac{1}{n} \sum_{i=1}^n(\tau_i-\tau)^2 \right) \pto 0
\]
which completes the proof since by assumption for all $n$,
\[
\frac{1}{n}\norm{(\bX-\overline{\bX}) \bbeta_{\lambda}
-
(\widetilde{\bmu} - \overline{\widetilde{\bmu}})}{2}^2 + \frac{1}{n} \sum_{i=1}^n(\tau_i-\tau)^2
\]
is bounded away from zero.
\end{proof}

\subsection{Proof of equivalence of LOORA-DM to leave-two-estimaotrs}
\label{sec:equivalence-loora-dm-leave-two-out}
For a treated unit $e$ (i.e., $d_e=1$), the adjustment in the LOORA-DM estimator is
\begin{align}
\label{eq:dk1-adj-value}
- \frac{1}{n_T}
\bx_e^\top
(\bX_{-e}^\top \bX_{-e} + \lambda \bI)^{-1}
\bX_{-e}^\top
\widetilde{\by}^{(T)}_{-e},
\end{align}
where
\[
\widetilde{y}^{(T)}_{\ell}
=
\begin{cases}
\frac{n_C (n-1)}{(n_T - 1)n} y^{(1)}_{\ell}, & d_{\ell} = 1, \\[0.6ex]
\frac{n_T (n-1)}{n_C n} y^{(0)}_{\ell}, & d_{\ell} = 0.
\end{cases}
\]
For a control unit $e$ ($d_e=0$), the corresponding adjustment is
\begin{align}
\label{eq:dk0-adj-value}
+ \frac{1}{n_C}
\bx_e^\top
(\bX_{-e}^\top \bX_{-e} + \lambda \bI)^{-1}
\bX_{-e}^\top
\widetilde{\by}^{(C)}_{-e},
\end{align}
with
\[
\widetilde{y}^{(C)}_{\ell}
=
\begin{cases}
\frac{n_C (n-1)}{n_T n} y^{(1)}_{\ell}, & d_{\ell} = 1, \\[0.6ex]
\frac{n_T (n-1)}{(n_C - 1)n} y^{(0)}_{\ell}, & d_{\ell} = 0.
\end{cases}
\]

As discussed in \Cref{sec:connection-spiess}
\begin{align*}
- \frac{1}{n_T n_C} \sum_{i<j} (d_i - d_j) \phi_{ij}(\bz_{-ij}) 
= & 
-\frac{1}{n_C n_T} \sum_{d_i=1} \bx_i^\top (\bX_{-i}^\top \bX_{-i} + \lambda \bI)^{-1} \sum_{d_j=0} \bX_{-ij}^\top \widetilde{\by}_{-ij}
\\ &
+\frac{1}{n_C n_T} \sum_{d_i=0} \bx_i^\top (\bX_{-i}^\top \bX_{-i} + \lambda \bI)^{-1} \sum_{d_j=1} \bX_{-ij}^\top \widetilde{\by}_{-ij}\,.
\end{align*}
We have,
\begin{align*}
\sum_{d_j=0} \bX_{-ij}^\top \widetilde{\by}_{-ij} 
 & = 
\sum_{d_j=0} \sum_{\ell \neq i,j} \bx_\ell \widetilde{y}_{\ell}
\\ & = 
\sum_{d_j=0} \left(\sum_{d_\ell=1, \ell \neq i} \bx_\ell \widetilde{y}_{\ell} + \sum_{d_\ell=0, \ell\neq j} \bx_\ell \widetilde{y}_{\ell} \right)
\\ & =
n_C \sum_{d_\ell=1, \ell \neq i} \bx_\ell \widetilde{y}_{\ell} + (n_C-1) \sum_{d_\ell=0} \bx_\ell \widetilde{y}_{\ell}\,.
\end{align*}
Now note that if $d_{\ell}=1$, then $\widetilde{y}_{\ell} = \widetilde{y}^{(T)}_{\ell}$ and if $d_{\ell}=0$, then $\widetilde{y}_{\ell} = \frac{n_C}{n_C-1}\widetilde{y}^{(T)}_{\ell}$.
Therefore for $i$ with $d_i=1$, we have
\begin{align*}
&
-\frac{1}{n_C n_T} \bx_i^\top (\bX_{-i}^\top \bX_{-i} + \lambda \bI)^{-1} \sum_{d_j=0} \bX_{-ij}^\top \widetilde{\by}_{-ij}
\\ & = 
-\frac{1}{n_C n_T} \bx_i^\top (\bX_{-i}^\top \bX_{-i} + \lambda \bI)^{-1} \left( n_C \sum_{d_\ell=1, \ell \neq i} \bx_\ell \widetilde{y}_{\ell} + (n_C-1) \sum_{d_\ell=0} \bx_\ell \widetilde{y}_{\ell} \right)
\\ & = 
-\frac{1}{n_C n_T} \bx_i^\top (\bX_{-i}^\top \bX_{-i} + \lambda \bI)^{-1} \left( n_C \sum_{d_\ell=1, \ell \neq i} \bx_\ell \widetilde{y}^{(T)}_{\ell} + n_C \sum_{d_\ell=0} \bx_\ell \widetilde{y}^{(T)}_{\ell} \right)
\\ & =
- \frac{1}{n_T} \bx_i^\top (\bX_{-i}^\top \bX_{-i} + \lambda \bI)^{-1} \bX_{-i}^\top \widetilde{\by}^{(T)}_{-i}\,,
\end{align*}
which is equal to \eqref{eq:dk1-adj-value}.
With a similar argument, one can show that for $i$ with $d_i=0$, 
\begin{align*}
&
+\frac{1}{n_C n_T} \bx_i^\top (\bX_{-i}^\top \bX_{-i} + \lambda \bI)^{-1} \sum_{d_j=1} \bX_{-ij}^\top \widetilde{\by}_{-ij}
\\ = &
+ \frac{1}{n_C} \bx_i^\top (\bX_{-i}^\top \bX_{-i} + \lambda \bI)^{-1} \bX_{-i}^\top \widetilde{\by}^{(C)}_{-i}\,,
\end{align*}
which is equal to \eqref{eq:dk0-adj-value}. Therefore our estimator can be written as a leave-two-out estimator.

\section{Auxilary Lemmas}
\begin{lemma}[Euclidean norm of a zero-mean sum is $O_p(\sqrt n)$]\label{lemma:concentrate_vec_sum}
Let $\bx_1,\dots,\bx_n \in \mathbb{R}^k$ be random vectors such that
\[
\mathbb{E}[\bx_i]= \boldsymbol{0},
\qquad
\mathbb{E}\left[\norm{\bx_i}{2}^2\right] \le C \ \ \text{for all } i
\]
for some constant $C<\infty$ independent of $n$.
Assume in addition that the cross second moments vanish:
\[
\mathbb{E}[\bx_i^\top \bx_j]=0 \qquad \text{for all } i\neq j,
\]
(e.g., this holds if the $\bx_i$ are independent
).
Define $S_n = \sum_{i=1}^n \bx_i$. Then
\[
\norm{S_n}{2} = O_p(\sqrt n).
\]
\end{lemma}

\begin{proof}
We first bound the second moment of $\|S_n\|_2$:
\begin{align*}
\mathbb{E}\left[\|S_n\|_2^2 \right]
&= \mathbb{E}\left[\Big(\sum_{i=1}^n \bx_i\Big)^\top\Big(\sum_{j=1}^n \bx_j\Big)\right] \\
&= \sum_{i=1}^n \mathbb{E}\left[\norm{\bx_i}{2}^2\right] \;+\; \sum_{i\neq j}\mathbb{E}\left[\bx_i^\top \bx_j \right] \\
&\le \sum_{i=1}^n C \;+\; 0
= Cn.
\end{align*}
By Markov's inequality,
for any $M>0$,
\begin{align*}
\mathbb{P}\big(\|S_n\|_2 \ge M\sqrt n\big)
= \mathbb{P}\big(\|S_n\|_2^2 \ge M^2 n\big) 
\le \frac{\mathbb{E}\|S_n\|_2^2}{M^2 n}
\le \frac{Cn}{M^2 n}
= \frac{C}{M^2}.
\end{align*}
Given any $\varepsilon>0$, choose $M(\varepsilon)=\sqrt{C/\varepsilon}$. Then for all $n$,
\[
\mathbb{P}\!\left(\frac{\|S_n\|_2}{\sqrt n} > M(\varepsilon)\right)
\le \varepsilon,
\]
which shows that $\|S_n\|_2/\sqrt n$ is bounded in probability. Hence $\|S_n\|_2=O_p(\sqrt n)$.
\end{proof}

\begin{lemma}[Complete randomization with DM weights]\label{lemma:concentrate_vec_sum_dm}
Let $\bx_1,\dots,\bx_n\in\mathbb{R}^k$ with $\norm{\bx}{2}^2 \leq C$, for all $i \in [n]$, and integers $n_T,n_C\ge 1$ with $n_T+n_C=n$.
Draw a treatment set $T\subset[n]$ uniformly among all subsets of size $|T|=n_T$.
Define, for each $i\in[n]$,
\[
z_i =
\begin{cases}
\;\;1, & i\in T,\\
-1, & i\notin T,
\end{cases}
\qquad
v_i =
\begin{cases}
\;\;1/n_T, & z_i=1,\\
\;\;1/n_C, & z_i=-1,
\end{cases}
\qquad
w_i = z_i v_i.
\]
Then $\mathbb{E}[w_i]=0$ for every $i$. 
Let 
\[
S=\sum_{i=1}^n w_i \bx_i = \frac{1}{n_T}\sum_{i\in T}\bx_i-\frac{1}{n_C}\sum_{i\notin T} \bx_i.
\]
Then
\[
\|S\|_2 = O_p\!\left(\sqrt{\frac{n}{n_T n_C}}\right).
\]
In particular, under \Cref{ass:positivity}, $\|S\|_2=O_p(n^{-1/2})$.
\end{lemma}

\begin{proof}
Recall $d_i=1$ if $z_i=+1$ and $d_i=0$ if $z_i=-1$. Therefore $\sum_{i=1}^n d_i=n_T$ and $\mathbb{E}[d_i]=p_T=n_T/n$.
Let $\alpha = \frac{n}{n_T n_C}$. One can verify that $w_i = \alpha (d_i - p_T)$.
Thus
\[
S=\sum_{i=1}^n w_i \bx_i = \alpha \sum_{i=1}^n (d_i-p_T)\bx_i.
\]
Let $\bar \bx = \frac{1}{n}\sum_{i=1}^n \bx_i$ and $\by_i=\bx_i-\bar \bx$, which gives $\sum_{i=1}^n \by_i=0$.
Since $\sum_{i=1}^n (d_i-p_T)=0$, we have
\[
S = \alpha\sum_{i=1}^n (d_i-p_T)\bx_i = \alpha\sum_{i=1}^n (d_i-p_T)\by_i.
\]
Under complete randomization,
\[
\Var(d_i)=p_T(1-p_T),\qquad
\Cov(d_i,d_j)= -\frac{p_T(1-p_T)}{n-1},\quad i\neq j.
\]
Therefore,
\begin{align*}
\mathbb{E}\left[\|S\|_2^2 \right]
&= \alpha^2\,\mathbb{E}\left[ \left\|\sum_{i=1}^n (d_i-p_T)\by_i\right\|_2^2 \right] \\
&= \alpha^2 \sum_{i=1}^n\sum_{j=1}^n \mathbb{E}\big[(d_i-p_T)(d_j-p_T)\big]\; \by_i^\top \by_j \\
&= \alpha^2\left( \sum_{i=1}^n p_T(1-p_T)\|\by_i\|_2^2
+ \sum_{i\neq j}\left(-\frac{p_T(1-p_T)}{n-1}\right)\by_i^\top \by_j \right).
\end{align*}
Since $\sum_i \by_i=\boldsymbol{0}$,
\[
\sum_{i\neq j} \by_i^\top \by_j
= \left\|\sum_{i=1}^n \by_i\right\|_2^2 - \sum_{i=1}^n \|\by_i\|_2^2
= -\sum_{i=1}^n \|\by_i\|_2^2.
\]
Therefore
\[
\mathbb{E}\left[\|S\|_2^2 \right]
= \alpha^2\,p_T(1-p_T)\left(1+\frac{1}{n-1}\right)\sum_{i=1}^n \|\by_i\|_2^2
= \alpha^2\,p_T(1-p_T)\frac{n}{n-1}\sum_{i=1}^n \|\bx_i-\bar \bx\|_2^2.
\]
Finally, $\sum_{i=1}^n \|\bx_i-\bar \bx\|_2^2 \le \sum_{i=1}^n \|\bx_i\|_2^2 \le Cn$, hence
\[
\mathbb{E}\|S\|_2^2
\le
\alpha^2\,p_T(1-p_T)\frac{n}{n-1}\,Cn.
\]
Since $\alpha=\frac{n}{n_T n_C}$ and $p_T(1-p_T)=\frac{n_T n_C}{n^2}$, this simplifies to
\[
\mathbb{E}\|S\|_2^2
\le
\frac{C}{n-1}\cdot \frac{n^2}{n_T n_C}
= O(\frac{n}{n_T n_C}).
\]
Then by Markov's inequality, for any $M>0$,
\[
\mathbb{P}\!\left(\|S\|_2 \ge M\sqrt{\frac{n}{n_T n_C}}\right)
=
\mathbb{P}\!\left(\|S\|_2^2 \ge \frac{M^2n}{n_T n_C}\right)
\le
\frac{\mathbb{E}\|S\|_2^2}{\frac{M^2n}{n_T n_C}} = 
O(\frac{1}{M^2}).
\]
Given any $\varepsilon>0$, we can choose $M:=M(\varepsilon)$ large enough so that the right-hand-side above is smaller than $\varepsilon$.
Therefore $\|S\|_2 = O_p\!\left(\sqrt{\frac{n}{n_T n_C}}\right)$, and under \Cref{ass:positivity}, since $n_T/n$ and $n_C/n$ are bounded away from zero and one,
$\|S\|_2=O_p(n^{-1/2})$.
\end{proof}

\section{Efficiency of LOORA-DM}
\label{sec:lin-variance}

In this section, we study the asymptotic efficiency of the LOORA-DM estimator. Since the estimator of \cite{lin2013agnostic} with interacted terms is asymptotically efficient among linearly adjusted estimators (see \cite{cytrynbaum2024covariate}), it suffices to show that LOORA-DM is asymptotically as efficient as the estimator of \cite{lin2013agnostic}.

The estimator of \cite{lin2013agnostic}, denoted by $\widehat{\tau}_{\textup{Lin}}$, is obtained by regressing the observed outcome vector on $\bd$, $\bX$, and $\bD(\bX - \overline{\bX})$, with an intercept, where $\bD$ denotes the diagonal matrix formed from $\bd$. 
Under complete random assignment (with $n_T$ units assigned to treatment and $n_T/n \to p_T$) and suitable regularity conditions, the asymptotic variance of the estimator is given by
\begin{align*}
    \lim_{n \to \infty}\Var(\sqrt{n}\hat{\tau}_{\textup{Lin}}) &= \frac{1-p_T}{p_T}\lim_{n \to \infty}\sigma_{n}^2 (\by^{(1)}) + \frac{p_T}{1-p_T}\lim_{n \to \infty}\sigma_{n}^2(\by^{(0)})\ +2\lim_{n \to \infty}\sigma_{n}(\by^{(1)}, \by^{(0)}),
\end{align*}
where 
\begin{align*}
& \sigma_{n}^2(\by^{(1)}) = \frac{
  \norm{(\bX - \overline{\bX})\bbeta^{(1)} - (\by^{(1)} - \overline{\by}^{(1)})}{2}^2}{n}, ~ \bbeta^{(1)} = \argmin_{\bb \in \mathbb{R}^{k}} \norm{(\bX - \bar{\bX})\bb - (\by^{(1)} - \overline{\by}^{(1)})}{2}^2,\\
& \sigma_{n}^2(\by^{(0)}) = \frac{
  \norm{(\bX - \overline{\bX})\bbeta^{(0)} - (\by^{(0)} - \overline{\by}^{(0)})}{2}^2}{n},~ \bbeta^{(0)} = \argmin_{\bb \in \mathbb{R}^{k}} \norm{(\bX - \bar{\bX})\bb - (\by^{(0)} - \overline{\by}^{(0)})}{2}^2,\\
  & \sigma_{n}(\by^{(1)},\by^{(0)}) = \frac{(\by^{(1)} - \overline{\by}^{(1)} - (\bX-\bar{\bX})\bbeta^{(1)})^\top(\by^{(0)} - \overline{\by}^{(0)} - (\bX-\bar{\bX})\bbeta^{(0)})}{n}.
  \end{align*}

\cite{lin2013agnostic} considers only the case of complete random assignment and a regression adjustment without regularization (i.e, $\lambda=0$). There are also other differences between our setting and that of \cite{lin2013agnostic}. For example, we do not assume that $n_T/n$ or the variance of $\sqrt{n}\widehat{\tau}_{\textup{LDM}}$ converges; that is, these quantities may fluctuate with $n$.

Therefore, to compare the large-sample asymptotic variance of LOORA-DM with that of \cite{lin2013agnostic}, we restrict attention to the case in which the relevant limits exist and $\lambda = 0$. In particular, we assume the existence of the limit of $\sigma_n^2(\widetilde{\bmu})$, defined as follows.
\[
\sigma_{n}^2 (\widetilde{\bmu}) = \frac{
  \norm{(\bX - \overline{\bX})\bbeta^{(\widetilde{\bmu})} - (\widetilde{\bmu} - \overline{\widetilde{\bmu}})}{2}^2}{n}, ~~ \bbeta^{(\widetilde{\bmu})} =
    \argmin_{\bb \in \R^k}
    \norm{(\bX - \overline{\bX})\bb - (\widetilde{\bmu} - \overline{\widetilde{\bmu}})}{2}^2.
\]
Simple algebra yields
\[
\frac{1-p_T}{p_T}\sigma_n^2(\by^{(1)}) + \frac{p_T}{1-p_T} \sigma_n^2(\by^{(0)}) + 2 \sigma_{n}(\by^{(1)}, \by^{(0)}) = \sigma_{n}^2(\widetilde{\bmu}).
\]
Since $\sigma_{n}^2(\widetilde{\bmu})$ is convergent, we have
\begin{align*}
\lim_{n \to \infty}\sigma_{n}^2 (\widetilde{\bmu}) & = \lim_{n \to \infty}
    \frac{1}{n}\min_{\bb \in \R^k}
    \norm{(\bX - \overline{\bX})\bb - (\widetilde{\bmu} - \overline{\widetilde{\bmu}})}{2}^2.
\end{align*}
Therefore LOORA-DM is as efficient as the estimator of \cite{lin2013agnostic}.\bigskip

\end{appendix}

\end{document}